\newcommand{\be}{\begin{eqnarray}}
\newcommand{\ee}{\end{eqnarray}}
\newcommand{\bez}{\begin{eqnarray*}}
\newcommand{\eez}{\end{eqnarray*}}
\numberwithin{equation}{section}
\numberwithin{equation}{section}
\newtheorem{theorem}{Theorem}[section]
\newtheorem{lemma}[theorem]{Lemma}
\newtheorem{coro}[theorem]{Corollary}
\newtheorem{prop}[theorem]{Proposition}
\theoremstyle{definition}
\newtheorem{define}[theorem]{Definition}
\newtheorem{remark}[theorem]{Remark}
\DeclareMathOperator{\sgn}{sgn}
\DeclareMathOperator{\Pf}{Pf}
\begin{document}

\title[Isospectral flows related to (dual) cubic strings]{On Isospectral flows related to (dual) cubic strings: Novikov, DP peakons and B, C-Toda lattices}

%\dedicatory{Dedicated to Prof.  Jacek Szmigielski on the occasion of his 70th birthday}

%Correspondence between DP and Novikov peakons
%\\ Correspondence between Degasperis--Procesi and Novikov peakon lattices
%: DP, Novikov peakons and B,C-Toda
\date{}

\author{Xiang-Ke Chang}
\thanks{Dedicated to Prof.  Jacek Szmigielski on the occasion of his 70th birthday}
%\thanks{$^*$Corresponding author (changxk@lsec.cc.ac.cn).}
\address{SKLMS \& ICMSEC, Academy of Mathematics and Systems Science, Chinese Academy of Sciences, P.O.Box 2719, Beijing 100190, PR China; and School of Mathematical Sciences, University of Chinese Academy of Sciences, Beijing 100049, PR China.}
\email{changxk@lsec.cc.ac.cn}
%\thanks{}

\begin{abstract}

The Degasperis--Procesi (DP) equation can be viewed as an isospectral deformation of the boundary value problem for the so-called cubic string, while the Novikov equation can be formally regarded as linked to the dual cubic string. However, their relationships have not been thoroughly investigated. This paper examines various intrinsic connections between these two systems from different perspectives. We uncover a bijective relationship between the DP and Novikov pure peakon trajectories. In particular, this allows us to derive, not previously known, explicit expressions for the constants of motion in the Novikov peakon dynamical system. We also establish a one-to-one correspondence between the corresponding discrete cubic and dual cubic boundary value problems. Furthermore, we propose a new integrable lattice that features bilinear relations involving both determinants and Pfaffians, demonstrating that it can be connected to both the B-Toda and C-Toda lattices, which correspond to isospectral flows, involving positive powers of the spectral parameter, associated with (dual) cubic strings. 

\end{abstract}

\keywords{
Cubic string; Dual cubic string; Degasperis--Procesi equation; Novikov equation; Multipeakons}
\subjclass[2010]{34B60; 37J35; 37K10; 37K60; 47N20}

%34B60(2000–now)Applications of boundary value problems involving ordinary differential equations
%37J35(2000–now)Completely integrable finite-dimensional Hamiltonian systems, integration methods, integrability tests
%37K10(2000–now)Completely integrable infinite-dimensional Hamiltonian and Lagrangian systems, integration methods, integrability tests, integrable hierarchies (KdV, KP, Toda, etc.)
%37K60: Lattice dynamics; integrable lattice equations
%47N20: Applications of operator theory to differential and integral equations
\maketitle
\tableofcontents

\section{Introduction}
\subsection{On ordinary (dual) string, CH peakon dynamical system and Toda lattice}
An ordinary \textit{inhomogeneous string} is a second-order equation
\begin{align}\label{string}
-\phi''(y)=zg(y)\phi(y)
\end{align}
equipped with certain boundary condition, which can be derived by the inhomogeneous string wave equation after separation of variables. In 1950s, Krein developed the corresponding direct and inverse problems for boundary value problems of such strings. In particular, he made influential contributions \cite{krein1950inv,krein1950gene}  that generalize Stieltjes’ theory of analytic continued fractions. The notion of the \textit{dual string} was originally introduced by Kac and Krein in \cite{kac1974spectral} and has been discussed at length in the monograph \cite[Section 6.8]{dym1976gaussian} by Dym and McKean. For example, when $g(y)>0$ is a continuous function, a dual string of the string equation \eqref{string} corresponds to a string equation with the mass density function $\tilde g(\tilde y)=\frac{1}{g(y)}$, that is,
\begin{align}\label{dualstring}
- \tilde \phi_{\tilde y\tilde y}(\tilde y)=z\tilde g(\tilde y)\tilde \phi(\tilde y),
\end{align}
where $\tilde y$ is actually the total mass up to the point $y$.
These two systems are related to each other via the change of variables
\begin{subequations}
\begin{align}
&\frac{d\tilde y}{dy}=g(y)=\frac{1}{\tilde g(\tilde y)}, \label{change}\\
& \tilde \phi(\tilde y)=\phi_y(y), \quad \phi(y)=\frac{1}{z}\tilde \phi_{\tilde y}(\tilde y),
\end{align}
\end{subequations}
which can be easily seen if one converts second-order equations to first-order linear systems. The concept of the dual string can also be formulated if $\tilde y$ is a non-decreasing function, 
in particular if it is a non-decreasing piecewise constant function, in which case $g(y)$ and $\tilde g (\tilde y)$ are discrete positive measures with an interchange of the roles of the mass and the length.

There exists an intimate relationship between the string equation \eqref{string} and a 1+1 dimensional shallow water wave equation \cite{camassa1993integrable}
\begin{equation*}\label{eq:ch}
m_t+(um)_x+mu_x=0, \qquad m=u-u_{xx}+\kappa,
\end{equation*}
nowadays called the Camassa--Holm (CH) equation, where $u(x,t)$ is  the wave height at $x$ and time $t$,  and $\kappa$ is a constant. In fact, it was Camassa and Holm in the early of 1990s who derived this equation 
by executing an asymptotic expansion of the Hamiltonian for Euler’s equations of hydrodynamics and proved its integrability in the sense of  Lax pair and bi-Hamiltonian structure etc. In particular, the spatial part of its Lax pair reads
\begin{align}\label{CH_lax_x}
-\psi''(x)+\frac{1}{4}\psi(x)=zm(x)\psi(x)
\end{align}
where $z$ is the spectral parameter. In other words, a Cauchy problem of the CH equation can be associated with an isospectral problem of \eqref{CH_lax_x}. Later on, it was noticed by Beals, Sattinger and Szmigielski \cite{beals1998acoustic,beals2000multipeakons} that the spectral problem \eqref{CH_lax_x} on the line can be transformed into a formal string problem \eqref{string}  on a finite interval via a Liouville transformation.
%means that the CH equation can be obtained by  the compatibility condition of the above linear system.

The CH equation has attracted much attention over the past three decades. As two of the most significant distinctions from the famous KdV equation, the CH equation with $\kappa=0$ permits 
the existence of the so-called peaked soliton solutions (simply called \textit{peakons}) \cite{camassa1993integrable} of the form
\begin{equation}
u(x,t)=\sum_{k=1}^Nm_k(t)e^{-|x-x_k(t)|}, \label{eq:CHpeakon}
\end{equation}
whose dynamics can be characterized by an ODE system,
and
it can lead to a meaningful model of wave-breaking mechanism, which has been pursued by many researchers  (e.g.  \cite{constantin-escher,whitham1974linear}). In addition, it is noted that the spectral problem associated with the CH equation has an energy-dependent potential making the inverse theory more difficult to exploit \cite{bss-string}.  It has been known for some time that peakon solutions capture  main attributes of solutions of the CH equation. For this reason, peakon solutions have been of significant interest;  see a recent survey \cite{lundmark2022view} for some related aspects.

By observing that the CH multipeakon problem is associated with a discrete string problem, i.e. the case where $m$ and $g$ is a finite sum of weighted Dirac measures, Beals et al. \cite{beals1999multipeakons,beals2000multipeakons} explicitly solve
the CH multipeakon ODE system by use of the inverse spectral method involving Stieltjes' theory of analytic continued fractions \cite{stieltjes1894}. As a consequence, the peakon solutions of the CH equation can be expressed in closed form in terms of Hankel determinants involving moments of discrete measures and their associated orthogonal polynomials (OPs). Besides, due to the connection between the string problem and the Jacobi matrix spectral problem, the isospectral flow of which gives the Toda lattice (see e.g. \cite{moser1975finitely} by Moser for perhaps the most relevant work), one can show that the CH peakon and Toda lattices can be viewed as opposite isospectral flows in a well-defined sense  \cite{beals2001peakons,ragnisco1996peakons}.

In fact, the CH problem corresponds to an isospectral deformation of the inhomogeneous
string of the mass density with Dirichlet--Dirichlet boundary condition \cite{beals1998acoustic,beals2000multipeakons}.  For a detailed discussion on isospectral deformations of the mass density in the inhomogeneous string with general boundary conditions,  please refer to \cite{colville2016isospectral} by Colville, Gomez and Szmigielski. In particular, the Dirichlet--Neumann case has appeared, somewhat unexpectedly, in the
work on the two-component modified CH interlacing multipeakons \cite{chang2016multipeakons}. Also see \cite{beals2001inverse} for the Neumann--Neumann case and the Hunter--Saxton equation. This can be actually interpreted as an isospectral deformation of dual string with Dirichlet--Dirichlet boudary condition.

It is known that the dynamics of the CH pure multipeakons has been well understood in \cite{beals2000multipeakons} from the perspective of the definite string.
Actually, in the CH pure multipeakon case, i.e. that all the $m_j(0)$ have the same sign, there will be no collision and one has a unique solution. However, if some of $m_j(0)$ have opposite signs, collisions will incur and the breakdown of regularity can be observed. The characterization of collisions and continuation beyond collisions for multipeakons are more intricate problems (see e.g. \cite{beals2000multipeakons,bressan2007global_con,bressan2007global_diss,holden2008global_diss,holden2007global_cons,xin2000weak,McKean-breakdown}). Concerning global conservative multipeakon
solutions \cite{beals2000multipeakons,holden2007global_cons}, Eckhardt and Kostenko \cite{eckhardt2014isospectral} provided a very compelling interpretation of the mechanism of collisions from the viewpoint of the indefinite string problem due to Krein and Langer \cite{kerin1979/80}. For recent developments on general inverse spectral theory related to generalized indefinite strings, interested readers might want to consult \cite{eckhardt2016inverse,eckhardt2024trace} etc.

\subsection{On (dual) cubic strings,  peakon dynamical systems and Toda-type lattices} 
In recent years, there has been increasing interest in the study of isospectral deformations related to higher-order strings and CH-type equations; see e.g. \cite{beals2023beam,bertola2009cubic,hone2009explicit,lundmark2005degasperis,lundmark2016inverse,kohlenberg2007inverse}. As a third-order analogue of the ordinary string, the \textit{cubic string} is described by the equation
\begin{equation*}
-\phi{'''}(y)=zg(y)\phi(y)
\end{equation*}
with certain boundary conditions, where $z$ is a spectral parameter. It is evident that the ordinary string leads to a self-adjoint problem, while the cubic string is not self-adjoint. There has been long-standing doubt about the applicability of this type of equation to physical systems. However, it was Lundmark and Szmigielski \cite{lundmark2005degasperis} who proposed the concept of the \textit{cubic string} in their study of the multipeakon problem of the Degasperis--Procesi (DP) equation
\begin{equation*} 
  m_t+m_xu+3mu_x=0,   \qquad m=u-u_{xx},
\end{equation*}
which plays a role in shallow water wave theory similar to that of the CH equation \cite{constantin-lannes:hydrodynamical-CH-DP}. It should be noted that, despite being similar in appearance to the CH equation, the DP equation, along with its peakon sector, possesses a distinctly different underlying integrability structure and dynamical characteristics.

The DP equation 
was originally proposed  by Degasperis and Procesi \cite{degasperis-procesi} in a search for equations  satisfying certain asymptotic integrability conditions. Later on, Degasperis, Holm, and Hone \cite{degasperis2002new} demonstrated that
it possesses a formal Lax pair with the spatial part 
 \begin{align*}
% (\partial_x-\partial_x^3)
 \psi'(x)-\psi'''(x)=zm(x)\psi(x),
 \end{align*}
and admits the multipeakon solution of the form \eqref{eq:CHpeakon}. To solve the multipeakon problem, Lundmark and Szmigielski \cite{lundmark2003multi,lundmark2005degasperis}  introduced the cubic string on a finite interval, which was obtained from  a spatial spectral problem with particular boundary conditions on the line through a Liouville transformation. In fact, the DP multipeakon problem induces an isospectral deformation of the discrete cubic string with a Dirichlet-like boundary condition. 
Although the spectral problem is non-self-adjoint, it turns out that the cubic string fits into the Gantmacher--Krein theory of oscillatory kernels. By exploiting certain Hermite--Pad\'{e} approximation problems, they solved the inverse spectral problems and derived explicit solution formulae in terms of Cauchy bimoment determinants for the DP peakons.

The investigation on DP peakons later motivated the concept of Cauchy biorthogonal polynomials (CBOPs) \cite{bertola2010cauchy,bertola2009cubic}, and also  inspired the development of the Cauchy two-matrix model \cite{bertola2009cauchy, bertola2014cauchy,bertola2014universality} and multi-level Hermite--Pad\'{e} approximations \cite{lago2019mixed}.  Moreover, the so-called Toda lattice of CKP type (C-Toda lattice)  was derived by considering an isospectral deformation of the CBOPs, as a result of which, the C-Toda lattice and the DP multipeakon ODE system can be regarded as opposite flows in some sense \cite{chang2018degasperis}. 

The concept of the \textit{dual cubic string} was  introduced in the study of the Novikov multipeakon problem. The Novikov equation reads
  \begin{equation*}
     m_t+( m_xu+3mu_x)u=0,   \qquad \tilde m=u-u_{xx},
  \end{equation*}
which was first derived by Novikov \cite{novikov2009generalisations}, and was proven to be Lax integrable by Hone and Wang \cite{hone2008integrable}. See also a recent work \cite{chen2022shallow} for its hydrodynamical relevance.
In \cite{hone2009explicit}, Hone, Lundmark and Szmigielski investigated the multipeakon problem by transforming the spatial equation of the Lax pair on the line into a spectral problem on a finite interval and using the inverse spectral method. The transformed spectral problem on a finite interval is nothing but the so-called dual cubic string (see the matrix equation \eqref{eq:dualcubic}). Recent studies also reveal the Novikov peakon ODE system together with the dual cubic string is associated with a class of  so-called partial-skew-orthogonal polynomials (PSOPs), Hermite--Pad\'{e} approximation problems with Pfaffian structures, as well as the Bures random matrix ensemble \cite{chang2022hermite,chang2018partial,chang2018application}.  In addition, an isospectral deformation of the PSOPs can yield the Toda lattice of BKP type (B-Toda lattice)  \cite{chang2018partial}, which could be thought as an opposite flow to the Novikov peakon ODE system \cite{chang2018application}.

It is noted that, the dual cubic string with specific boundary condition relevant to the Novikov peakon ODE system was considered in \cite{hone2009explicit} as a dual of the cubic string with Neumann-like boundary condition for the derivative Burgers equation \cite{kohlenberg2007inverse}, rather than the cubic string with Dirichlet-like boundary conditions for the DP equation \cite{lundmark2005degasperis}.  \textbf{Therefore, it is natural to question whether there exists a correspondence between the dual cubic string boundary problem for the Novikov peakon ODE system and the Dirichlet cubic string for the DP peakon ODE system.}  As an affirmative answer, we shall reveal a correspondence between them in Section \ref{sec:cubic}.

Recall that  the DP equation is connected with a negative flow in the Kaup--Kupershmidt (KK) hierarchy \cite{degasperis2002new}, while the Novikov equation is related to a negative flow in the Sawada--Kotera (SK) hierarchy through reciprocal transformations \cite{hone2008integrable}.  Besides, it is well known, from the classical theory of integrable systems, that \textbf{there exists a fifth-order integrable equation called the modified SK--KK (or Fordy--Gibbons--Jimbo--Miwa) equation \cite{fordy1980some,jimbo1983solitons} that can be  transformed to both of the KK and SK equations via suitable B\"acklund transformations. As far as we know, there has been no such development in the theory of CH-type equations}. As will be claimed in Section \ref{sec:dpnv}, we discover a one-to-one correspondence of the peakon sectors between the DP and Novikov equations, which is somewhat unexpected since it is stronger than expected. Moreover, while it seems that there is no one-to-one correspondence between the B-Toda and C-Toda lattices, in Section \ref{sec:bc}, we propose a new integrable lattice and demonstrate that it can be transformed to both of the B-Toda and C-Toda lattices according to proper B\"acklund transformations, which can be regarded as a discrete analog of the relationship among  the SK and KK as well as the modified SK--KK equations.

The dynamics of pure multipeakons for the DP or Novikov equations have been fully understood in the works \cite{hone2009explicit,lundmark2005degasperis} through the exploration of the theories of definite cubic or dual cubic strings involving  Gantmacher and  Krein's theories \cite{gantmacher-krein} of totally non-negative or even oscillatory kernels.  However, there is still much to be developed regarding indefinite (dual) cubic strings and the mixed peakon-antipeakon problems, despite some existing works, particularly in the peakon-antipeakon sectors e.g. \cite{lundmark2007formation,lundmark2022view,kardell:2016:phdthesis,kardell-lundmark,szmigielski-zhou:shocks-DP, szmigielski-zhou:DP-peakon-antipeakon}. Our results may be generalized for the use of analyzing mixed peakon-antipeakon problems.

%\todo{mixed peakon–antipeakon
%solutions of Novikov’s equation display a much greater variety of
%possible behaviours than those of the CH equation;
% which might shed more light on these aspects
%}
%problem of collisions does not arise, and the various solution concepts coincide. the DP peakon–antipeakon collisions lead to the formation of so-called shockpeakons \cite{lundmark2007formation}.

\subsection{Contributions and outline of the paper} In summary, our main contributions include:
\begin{enumerate}[(i)]
\item A one-to-one correspondence is revealed between the  discrete cubic string with Dirichlet-like boundary associated with the DP pure multipeakons and the discrete dual cubic string boundary problem associated with the Novikov pure multipeakons.
\item  An unexpected one-to-one correspondence is established between the pure multipeakon trajectories of  the DP and Novikov equations. This enables us to derive, previously unknown, explicit expressions for the constants of motion of the Novikov peakon ODE system.
\item A new integrable lattice is proposed and it is shown to connect with both of the B-Toda and C-Toda lattices via suitable B\"acklund transformations. One of the main key steps is our exploration of three bilinear relations that involve both determinants and Pfaffians.
\end{enumerate}

The paper is organized as follows. We present some preliminary definitions and properties for certain bimoment determinants and Pfaffians in Section \ref{sec:det_pf}. In Section \ref{sec:cubic}, we reveal a one-to-one correspondence between the cubic and dual cubic strings with certain boundary conditions.  In Section \ref{sec:dpnv}, we establish a bijection for the DP and Novikov peakon dynamical systems and some applications are considered.  Section \ref{sec:bc}  is devoted to deriving a new integrable lattice together with B\"acklund transformations to B-Toda and C-Toda lattices.

\section{On certain bimoment determinants and Pfaffians}\label{sec:det_pf}
In this section, we mainly collect some formulae appeared in \cite{bertola2010cauchy,chang2018degasperis,chang2018application,chang2022hermite,hone2009explicit,lundmark2005degasperis,lundmark2016inverse} etc. that are useful for our purpose. In particular, we adopt the same notations used in \cite{chang2022hermite}.
Let's start from the finite moments 
\begin{equation*}
\beta_j=\int x^jd\mu(x)
\end{equation*}
and finite bimoments with respect to the Cauchy kernel $\frac{1}{x + y}$
\begin{equation*}
I_{i,j}=\iint \frac{x^iy^j}{x+y}d\mu(x)d\mu(y),
\end{equation*}
where $d\mu$  be a positive Stieltjes measure on $\mathbb{R}_+$. We also consider
 the 
bimoments based on the skew kernel $\frac{y-x}{x + y}$
\begin{equation*}
J_{i,j}=\iint \frac{y-x}{x+y}x^iy^jd\mu(x)d\mu(y).
\end{equation*}
It is noted that the bimoments $J_{i,j}$ are finite as well once the related bimoments $I_{i,j}$ are finite. In fact, we immediately have the following Lemma.
\begin{prop}There  hold  the following identities for the moments
%\begin{subequations}
\begin{align*}
I_{i+1,j}+I_{i,j+1}=\beta_i\beta_j,\qquad J_{i,j}=I_{i,j+1}-I_{i+1,j}=\beta_i\beta_j-2I_{i+1,j}=2I_{i,j+1}-\beta_i\beta_j.
\end{align*}
%\end{subequations}
\end{prop}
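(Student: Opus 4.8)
The plan is to reduce everything to two elementary pointwise identities for the integrands together with the linearity of the double integral. First I would record the algebraic identity
\[
\frac{x^{i+1}y^{j}+x^{i}y^{j+1}}{x+y}=x^{i}y^{j},\qquad (x,y)\in\mathbb{R}_+\times\mathbb{R}_+,
\]
which holds wherever $x+y\neq 0$, hence $d\mu\otimes d\mu$-almost everywhere. Integrating both sides against $d\mu(x)\,d\mu(y)$ and splitting the left-hand side into a sum of two integrals — legitimate because every $I_{i,j}$ is assumed finite, so that in particular $\beta_i\beta_j=I_{i+1,j}+I_{i,j+1}<\infty$ — yields at once $I_{i+1,j}+I_{i,j+1}=\beta_i\beta_j$, the first claimed identity.

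For the chain of equalities for $J_{i,j}$ I would start from the trivial factorisation of the skew-kernel numerator, $(y-x)x^{i}y^{j}=x^{i}y^{j+1}-x^{i+1}y^{j}$; integrating against the Cauchy kernel $\tfrac{1}{x+y}$ and $d\mu(x)\,d\mu(y)$ and splitting as before gives $J_{i,j}=I_{i,j+1}-I_{i+1,j}$ straight from the definitions. The remaining two expressions then follow purely formally by inserting the first identity in the two possible ways: replacing $I_{i,j+1}$ by $\beta_i\beta_j-I_{i+1,j}$ gives $J_{i,j}=\beta_i\beta_j-2I_{i+1,j}$, and replacing $I_{i+1,j}$ by $\beta_i\beta_j-I_{i,j+1}$ gives $J_{i,j}=2I_{i,j+1}-\beta_i\beta_j$.

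There is essentially no hard step here; the only point deserving a word of care is the splitting of a double integral of a sum into the sum of the double integrals, i.e. that one may write $\iint(f+g)\,d\mu\,d\mu=\iint f\,d\mu\,d\mu+\iint g\,d\mu\,d\mu$. This is justified by the standing finiteness hypothesis on the bimoments $I_{i,j}$ recalled just before the statement, which (via the first identity) also guarantees that $\beta_i$, $\beta_j$ and $J_{i,j}$ are finite, so that all the quantities being manipulated are well-defined real numbers.
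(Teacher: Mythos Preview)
Your argument is correct and is precisely the elementary computation the paper has in mind; the paper does not spell out a proof at all, presenting the identities as immediate consequences of the definitions, and your pointwise identities $\frac{x^{i+1}y^{j}+x^{i}y^{j+1}}{x+y}=x^{i}y^{j}$ and $(y-x)x^{i}y^{j}=x^{i}y^{j+1}-x^{i+1}y^{j}$ are exactly what makes them immediate.
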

\subsection{Cauchy-type bimoment determinants}
Now we introduce certain determinants with the entries involving the bimoments with respect to the Cauchy kernel.
\begin{define}\label{def:FG}
Let $F_k^{(i,j)}$ denote the determinant of the $k\times k$ bimoment matrix which starts with $I_{i,j}$ at the upper left corner, that is,
%\begin{align}
%F_k^{(i,j)}=\det\left(I_{i+l,j+m}\right)_{l,m=0}^{k-1}=F_k^{(j,i)}.
%\end{align}
\begin{align*}
F_k^{(i,j)}=\det(I_{i+p,j+q})_{p,q=0}^{k-1}
%\left|
%\begin{array}{cccc}
%I_{i,j}&I_{i,j+1}&\cdots & I_{i,j+k-1}\\
%I_{i+1,j}&I_{i+1,j+1}&\cdots&I_{i+1,j+k-1}\\
%\vdots&\vdots&\ddots&\vdots\\
%I_{i+k-1,j}&I_{i+k-1,j+1}&\cdots&I_{i+k-1,j+k-1}
%\end{array}
%\right|
=F_k^{(j,i)},
\end{align*}
with the convention that  $F_0^{(i,j)}=1$ and $F_k^{(i,j)}=0$ for $k<0$.

Let $G_k^{(i,j)}$ denote the $k\times k$ determinant
\begin{align*}
G_k^{(i,j)}=
\det(I_{i+p,j+q}\,\,\beta_{i+p})_{\substack{p=0,\ldots,k-1\\ q=0,\ldots,k-2}},
%\left|
%\begin{array}{ccccc}
%I_{i,j}&I_{i,j+1}&\cdots & I_{i,j+k-2}&\beta_{i}\\
%I_{i+1,j}&I_{i+1,j+1}&\cdots&I_{i+1,j+k-2}&\beta_{i+1}\\
%\vdots&\vdots&\ddots&\vdots&\vdots\\
%I_{i+k-1,j}&I_{i+k-1,j+1}&\cdots&I_{i+k-1,j+k-2}&\beta_{i+k-1}
%\end{array}
%\right|
%\color{red}{
%\hat G_k^{(i,j)}=
%\det\left(
%\begin{array}{c}
%I_{i+p,j+q}\\
%\beta_{j+q}
%\end{array}
%\right)_{\substack{p=0,\ldots,k-2\\ q=0,\ldots,k-1}},}
\end{align*}
with the convention that $G_1^{(i,j)}=\beta_i$ and $G_k^{(i,j)}=0$ for $k<1$.

Let $E_k^{(i,j)}$ denote the $k\times k$ determinant
\begin{align*}
E_k^{(i,j)}=
\det\left(\begin{array}{cc}
I_{i+p,j+q}&\beta_{i+p}\\
\beta_{j+q}&0
\end{array}
\right)_{p,q=0}^{k-2},
%=
%\left|
%\begin{array}{ccccc}
%0&\alpha_{j-1}&\alpha_j&\cdots&\alpha_{j+k-3}\\
%\alpha_{i-1}&I_{i,j}&I_{i,j+1}&\cdots & I_{i,j+k-2}\\
%\alpha_i&I_{i+1,j}&I_{i+1,j+1}&\cdots&I_{i+1,j+k-2}\\
%\vdots&\vdots&\vdots&\ddots&\vdots\\
%\alpha_{i+k-3}&I_{i+k-2,j}&I_{i+k-2,j+1}&\cdots&I_{i+k-2,j+k-2}
%\end{array}
%\right|=E_k^{(j,i)}.
\end{align*}
with the convention that  $E_k^{(i,j)}=0$ for $k<2$.

\end{define}
Similar to the proof for Heine's formula on Hankel determinants, one can obtain multiple integral expressions for the determinants $F_k^{(i,j)},G_k^{(i,j)}$, from which the positivity properties of these determinants follow.
\begin{prop}\label{prop:FG}
For $k\geq1$, the determinants $F_k^{(i,j)},G_k^{(i,j)}$ admit the integral representations 
\begin{align*}
&F_k^{(i,j)}=\displaystyle\iint_{\substack{0<x_1<\cdots<x_k\\0<y_1<\cdots<y_k}}\frac{\Delta_{[k]}(\mathbf{x})^2\Delta_{[k]}(\mathbf{y})^2}{\Gamma_{[k],[k]}(\mathbf{x},\mathbf{y})}\prod_{p=1}^k\prod_{q=1}^k(x_p)^i(y_q)^jd\mu(x_p)d\mu(y_q),\\
&G_k^{(i,j)}=\displaystyle\iint_{\substack{0<x_1<\cdots<x_k\\0<y_1<\cdots<y_{k-1}}}\frac{\Delta_{[k]}(\mathbf{x})^2\Delta_{[k-1]}(\mathbf{y})^2}{\Gamma_{[k],[k-1]}(\mathbf{x},\mathbf{y})}\prod_{p=1}^k\prod_{q=1}^{k-1}(x_p)^i(y_q)^jd\mu(x_p)d\mu(y_q),
\end{align*}
where $[k]$ denotes the index set $\{1,2,\ldots,k\}$, and, for two index sets $I, J$, 
\begin{align*}
\Delta_J(\mathbf{x})=\prod_{i<j\in J}(x_j-x_i), 
  \quad \Gamma_{I,J}(\mathbf{x};\mathbf{y})=\prod_{i\in I}\prod_{j\in J}(x_i+y_j),  
  \end{align*}
  along with the convention
\begin{align*}
&\Delta_\emptyset(\mathbf{x})=\Delta_{\{i\}}(\mathbf{x})=\Gamma_{\emptyset,J}(\mathbf{x};\mathbf{y})=\Gamma_{I,\emptyset}(\mathbf{x};\mathbf{y})=1.
\end{align*}
Furthermore, 
\begin{enumerate}[(i)]
\item if $\mu(x)$ has infinitely many points of increase, then $F_k^{(i,j)}$ and $G_k^{(i,j)}$ are always positive for any $k\geq1$.
\item if $\mu(x)$ has only finitely many points of increase, say $K$, or equivalently $d\mu(x)=\sum_{i=1}^Ka_i\delta(x-x_i)dx$ with positive $a_i$, then $F_k^{(i,j)}$ and $G_k^{(i,j)}$ are positive for $1\leq k\leq K$, while vanish for any $k>K$.
\end{enumerate}
\end{prop}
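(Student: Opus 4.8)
The plan is to establish the two integral representations by a Heine-type (Andr\'eief/Cauchy--Binet) argument and then to read off the positivity. First I would expand the determinant by the Leibniz rule, turn the resulting alternating sums into Vandermonde and Cauchy determinants by repeated relabellings of the integration variables, invoke the classical Cauchy determinant identity $\det\bigl(\tfrac{1}{x_p+y_q}\bigr)_{p,q=1}^{k}=\Delta_{[k]}(\mathbf{x})\Delta_{[k]}(\mathbf{y})/\Gamma_{[k],[k]}(\mathbf{x};\mathbf{y})$, and finally restrict the domain of integration to the ordered region using that the integrand so obtained is symmetric in the $x$'s and, separately, in the $y$'s.

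For $F_k^{(i,j)}=\det(I_{i+p,j+q})_{p,q=0}^{k-1}$, I write each entry as $I_{i+p,j+q}=\iint x_p^{i+p}y_p^{j+q}(x_p+y_p)^{-1}\,d\mu(x_p)\,d\mu(y_p)$ and apply Leibniz; since $\sum_\sigma\sgn(\sigma)\prod_p y_p^{\sigma(p)}=\Delta_{[k]}(\mathbf{y})$, the sum over $S_k$ collapses to $\int\!\cdots\!\int\bigl(\prod_p x_p^{i+p}y_p^{j}(x_p+y_p)^{-1}\bigr)\Delta_{[k]}(\mathbf{y})\prod_p d\mu(x_p)\,d\mu(y_p)$. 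Averaging over the relabellings $(x_p,y_p)\mapsto(x_{\tau(p)},y_{\tau(p)})$, $\tau\in S_k$, which fix $\prod_p(x_p+y_p)^{-1}$ and the $y$-measure, turn $\prod_p x_p^{i+p}$ into $\prod_p x_p^i\prod_p x_p^{\tau^{-1}(p)}$, and multiply $\Delta_{[k]}(\mathbf{y})$ by $\sgn(\tau)$, the identity $\sum_\tau\sgn(\tau)\prod_p x_p^{\tau^{-1}(p)}=\Delta_{[k]}(\mathbf{x})$ yields $\tfrac{1}{k!}\int\!\cdots\!\int\prod_p x_p^i y_p^j(x_p+y_p)^{-1}\,\Delta_{[k]}(\mathbf{x})\Delta_{[k]}(\mathbf{y})\prod_p d\mu(x_p)\,d\mu(y_p)$. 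A further averaging over the permutations $y_p\mapsto y_{\tau(p)}$, under which $\prod_p y_{\tau(p)}^{j}$ is invariant while $\Delta_{[k]}(\mathbf{y})$ again picks up $\sgn(\tau)$, replaces $\prod_p(x_p+y_p)^{-1}$ by $\det\bigl(\tfrac{1}{x_p+y_q}\bigr)$ at a further cost $1/k!$; inserting the Cauchy identity gives $F_k^{(i,j)}=\frac{1}{(k!)^2}\int\!\cdots\!\int\prod_p x_p^i y_p^j\,\frac{\Delta_{[k]}(\mathbf{x})^2\Delta_{[k]}(\mathbf{y})^2}{\Gamma_{[k],[k]}(\mathbf{x};\mathbf{y})}\prod_p d\mu(x_p)\,d\mu(y_p)$, and restricting to the two ordered domains, which absorbs the $(k!)^2$, yields the stated formula.

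For $G_k^{(i,j)}$ the last column carries the plain moments $\beta_{i+p}$ rather than Cauchy bimoments, so I would write every column as an $x$-integral, $G_{p,q}=\int x^{i+p}h_q(x)\,d\mu(x)$ with $h_q(x)=\int y^{j+q}(x+y)^{-1}\,d\mu(y)$ for $q\le k-2$ and $h_{k-1}\equiv 1$. Andr\'eief's identity gives $G_k^{(i,j)}=\tfrac{1}{k!}\int\!\cdots\!\int\prod_p x_p^i\,\Delta_{[k]}(\mathbf{x})\,\det\bigl(h_q(x_p)\bigr)_{p,q}\prod_p d\mu(x_p)$; expanding the bordered determinant $\det(h_q(x_p))$ by multilinearity in its first $k-1$ columns introduces auxiliary variables $y_1,\dots,y_{k-1}$ and, after pulling the relevant power of $y$ out of each such column, a determinant of the shape $\bigl[(x_p+y_q)^{-1}\mid \mathbf{1}\bigr]$ whose value equals $\Delta_{[k]}(\mathbf{x})\Delta_{[k-1]}(\mathbf{y})/\Gamma_{[k],[k-1]}(\mathbf{x};\mathbf{y})$ --- this being the $y_k\to\infty$ degeneration of the Cauchy identity, after rescaling the last column. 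A final averaging over the permutations of $y_1,\dots,y_{k-1}$ supplies the second Vandermonde factor $\Delta_{[k-1]}(\mathbf{y})$, and restricting to $0<x_1<\cdots<x_k$ and $0<y_1<\cdots<y_{k-1}$, which absorbs $k!(k-1)!$, gives the claimed representation; the conventions $G_1^{(i,j)}=\beta_i$ and $G_k^{(i,j)}=0$ for $k<1$ are consistent with it.

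Finally, on the ordered domains each of $\Delta_{[k]}(\mathbf{x})^2$ and $\Delta_{[k-1]}(\mathbf{y})^2$ is positive, $\Gamma_{I,J}(\mathbf{x};\mathbf{y})>0$ since all $x_p,y_q>0$, and the weights $x_p^i,y_q^j$ are positive, so each integrand is strictly positive there; hence $F_k^{(i,j)},G_k^{(i,j)}\ge 0$, with strict positivity precisely when the ordered domain carries positive $\mu$-mass. If $\mu$ has infinitely many points of increase this holds for every $k\ge 1$. If $d\mu=\sum_{i=1}^K a_i\delta(x-x_i)\,dx$ with $a_i>0$, the integral over $0<x_1<\cdots<x_k$ is a finite sum over strictly increasing $k$-element subsets of $\{x_1,\dots,x_K\}$, which is nonempty --- so the determinant is positive --- for $k\le K$ and empty --- so the determinant vanishes --- for $k>K$; for $G_k^{(i,j)}$ the constraint on the $k-1$ ordered $y$'s is weaker, so the same dichotomy at $k=K$ holds. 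I expect the main difficulty to be the careful bookkeeping of the repeated relabellings and of the Vandermonde/Cauchy factors, especially the evaluation of the bordered Cauchy determinant entering $G_k^{(i,j)}$; once the integral representations are in hand the positivity is essentially immediate.
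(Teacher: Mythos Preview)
Your argument is correct and is precisely the Heine/Andr\'eief-type computation the paper has in mind: the paper does not spell out a proof but only states that the representations follow ``similar to the proof for Heine's formula on Hankel determinants'' and refers to \cite{bertola2010cauchy,lundmark2016inverse} for details. Your derivation supplies those details, including the bordered Cauchy evaluation $\det\bigl[(x_p+y_q)^{-1}\mid\mathbf{1}\bigr]=\Delta_{[k]}(\mathbf{x})\Delta_{[k-1]}(\mathbf{y})/\Gamma_{[k],[k-1]}(\mathbf{x};\mathbf{y})$ via the $y_k\to\infty$ degeneration, and the positivity discussion matches the paper's reasoning exactly.
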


\begin{remark}
The detailed proof of the above results can be found in e.g. \cite{bertola2010cauchy,lundmark2016inverse}. 
Due to the multiple integral representations, $F_k^{(i,j)}$ is closely related to the partition function of the Cauchy two-matrix model \cite{bertola2009cauchy,bertola2014cauchy}. 
\end{remark}
\begin{remark}
Some of these quantities  have  appeared in the previous works under different notations. For instance, the quantity $F_k^{(i,j)}$ coincides with the notation $\Delta_k^{(i,j)}$ in \cite{bertola2010cauchy}, and $D_k^{(i,j)}$ in \cite{hone2009explicit}.   $G_k^{(0,1)}$ and $G_k^{(0,2)}$ are equivalent to the quantities $D_k^{'}$  and $D_k^{''}$ in  \cite{hone2009explicit,lundmark2005degasperis}.
The quantity $E_k^{(i,j)}$ was introduced in \cite{chang2018degasperis} using the notation $\rho_k^{(i,j)}$ to express the time evolution of the determinants $F_k^{(i,j)}$ when a time deformation of the measure is given. It will also be useful in the present paper.
\end{remark}

\subsection{Pfaffians related to Cauchy-type bimoment determinants}
 A Pfaffian of order $N$ is defined, based on 
 a skew-symmetric matrix $A=(a_{ij})_{i,j=1}^{2N}$,
according to the formula
\begin{align*}
\Pf(A)=\sum_{P}(-1)^Pa_{i_1,i_2}a_{i_3,i_4}\cdots a_{i_{2N-1},i_{2N}},
\end{align*}
where the summation represents the sum over all possible combinations of pairs selected from the set $\{1,2,\ldots,2N\}$ such that
\begin{align*}
&i_{2l-1}<i_{2l+1},\qquad i_{2l-1}<i_{2l},
\end{align*}
and the factor $(-1)^P$ takes the value $+1$ or  $-1$ depending on whether the sequence $i_1, i_2, . . . , i_{2N}$ is an even or odd permutation of $1, 2, . . . , 2N.$ 
Typically, it is conventionally considered that the Pfaffian of order $0$ is 1, while the Pfaffian of negative order is 0. It is noted that the notation for Pfaffians we employ here is the same as that in \cite{chang2022hermite}, while the Hirota's (slightly different) notation  has been adopted in our previous works \cite{chang2018partial,chang2018application}.

We now introduce certain Pfaffians, that are of different forms depending on parity of the order, invloving entries $J_{i,j}$ and $\beta_j$. For more details, see e.g. \cite[Appendix A]{chang2022hermite} and references therein.

\begin{define} \label{def:tau}
Let $\tau_k^{(l)}$ denote the Pfaffian
 %In particular,  and \ref{rem:pf2} for explanations on the notations.}} based on a skew-symmetric matrix involving bimoments $J_{i,j}$ and $\beta_j$:

\begin{align*}
\tau_{k}^{(l)}=
\left\{
\begin{array}{ll}
\Pf\left(\left(J_{l+i,l+j}\right)_{i,j=0}^{2m-1}\right),&k=2m,\\
\ \\
\Pf\left(\left(
\begin{array}{ccc}
0&\vline&
\begin{array}{c}
\beta_{l+j}
\end{array}\\
\hline
-\beta_{l+i}
&\vline&J_{l+i,l+j}
\end{array}
\right)_{i,j=0}^{2m-1}\right),&k=2m-1,
\end{array}
\right.
%\tau_{2k}^{(l)}=\Pf(l,l+1,\ldots,l+2k-1),\qquad \tau_{2k-1}^{(l)}=\Pf(d_0,l,l+1,\ldots,l+2k-2),
\end{align*}
%with the Pfaffian entries $$\Pf(i,j)=J_{i,j},\qquad \Pf(d_0,j)=\beta_j.$$
with the convention that $\tau_0^{(l)}=1$ and $\tau_k^{(l)}=0$ for $k<0$. 
\end{define}
 
 %\label{tau_pf}

The positivity properties of these Pfaffians follow immediately from the integral representation (\ref{integral_tau}) below, which can be proved using de Bruijn's formula (see e.g. \cite{chang2018application,chang2018partial}) or Schur's Pfaffian identities (e.g. \cite{chang2022hermite}). 
\begin{prop}
For $k\geq1$, the Pfaffian $\tau_{k}^{(l)}$ admits the integral representation
\begin{equation}\label{integral_tau}
\tau_k^{(l)}=\displaystyle\idotsint\limits_{{0<x_1<\cdots<x_k}}\frac{\Delta_{[k]}(\mathbf{x})^2}{\Gamma_{[k]}(\mathbf{x})}\prod_{p=1}^k(x_p)^ld\mu(x_p), 
\end{equation}
where $[k]$ denotes the index set $\{1,2,\ldots,k\}$, and, for two index sets $I, J$, 
\begin{align*}
\Delta_J(\mathbf{x})=\prod_{i<j\in J}(x_j-x_i), \quad  \Gamma_{J}(\mathbf{x})=\prod_{i<j\in J}(x_j+x_i), 
  \end{align*}
  along with the convention
\begin{align*}
&\Delta_\emptyset(\mathbf{x})=\Delta_{\{i\}}(\mathbf{x})=\Gamma_{\emptyset}(\mathbf{x})=\Gamma_{\{i\}}(\mathbf{x})=1.
\end{align*}
Furthermore, 
\begin{enumerate}[(i)]
\item if $\mu(x)$ has infinitely many points of increase, then $\tau_{k}^{(l)}$ are always positive for any $k\geq1$.
\item if $\mu(x)$ has only finitely many points of increase, say $K$, or equivalently $d\mu(x)=\sum_{i=1}^Ka_i\delta(x-x_i)dx$ with positive $a_i$, then $\tau_{k}^{(l)}$ are positive for $1\leq k\leq K$, while vanish for any $k>K$.
\end{enumerate}
\end{prop}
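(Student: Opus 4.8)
The plan is to reduce both parities of $\tau_k^{(l)}$ to the single multiple integral \eqref{integral_tau}; once that representation is in hand, statements (i) and (ii) are read off at once from a manifestly nonnegative integrand. In both cases the two ingredients are a de Bruijn-type ``continuous Cauchy--Binet'' formula for Pfaffians (see e.g.\ \cite{chang2018partial,chang2018application}) together with Schur's Pfaffian identity (see e.g.\ \cite[Appendix A]{chang2022hermite}).

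For $k=2m$, I would first write, with the antisymmetric kernel $\epsilon(x,y)=\tfrac{y-x}{x+y}$ and $\phi_i(x)=x^{l+i}$,
\[
\tau_{2m}^{(l)}=\Pf\!\Bigl(\iint \epsilon(x,y)\,\phi_{i-1}(x)\phi_{j-1}(y)\,d\mu(x)\,d\mu(y)\Bigr)_{i,j=1}^{2m},
\]
since $J_{l+i,l+j}=\iint\epsilon(x,y)\,x^{l+i}y^{l+j}\,d\mu(x)\,d\mu(y)$. The Pfaffian Cauchy--Binet identity then turns this into
\[
\tau_{2m}^{(l)}=\frac{1}{(2m)!}\idotsint\limits_{\R_+^{2m}}\Pf\!\Bigl(\tfrac{x_q-x_p}{x_p+x_q}\Bigr)_{p,q=1}^{2m}\det\!\bigl(x_q^{l+p-1}\bigr)_{p,q=1}^{2m}\,\prod_{p=1}^{2m}d\mu(x_p).
\]
By Schur's Pfaffian identity the Pfaffian factor equals $\prod_{p<q}\tfrac{x_q-x_p}{x_p+x_q}=\Delta_{[2m]}(\mathbf{x})/\Gamma_{[2m]}(\mathbf{x})$ and the determinant is the Vandermonde $\prod_p x_p^{l}\,\Delta_{[2m]}(\mathbf{x})$, so the integrand equals $\Delta_{[2m]}(\mathbf{x})^2\,\Gamma_{[2m]}(\mathbf{x})^{-1}\prod_p x_p^{l}$, which is symmetric in $x_1,\dots,x_{2m}$; restricting to the ordered sector $0<x_1<\dots<x_{2m}$ cancels the $1/(2m)!$ and yields \eqref{integral_tau}.

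For $k=2m-1$, the object $\tau_{2m-1}^{(l)}$ is the Pfaffian of the $2m\times 2m$ matrix of Definition~\ref{def:tau} obtained by bordering $(J_{l+i,l+j})$ with the moment row $(\beta_l,\dots,\beta_{l+2m-2})$ and the matching column of $-\beta$'s, where $\beta_{l+j}=\int\phi_j\,d\mu$. I would apply the odd-order (bordered) version of de Bruijn's formula with the same $\epsilon$ and $\phi_i$: it replaces the scalar weight by the Pfaffian of $\bigl(\tfrac{x_q-x_p}{x_p+x_q}\bigr)_{p,q=1}^{2m-1}$ bordered by $\pm 1$. This bordered Schur Pfaffian again equals $\prod_{p<q}\tfrac{x_q-x_p}{x_p+x_q}=\Delta_{[2m-1]}(\mathbf{x})/\Gamma_{[2m-1]}(\mathbf{x})$ (it can be obtained from the ordinary Schur identity by letting an auxiliary node tend to $\infty$, or quoted from \cite[Appendix A]{chang2022hermite}), and combined with the Vandermonde $\prod_p x_p^{l}\,\Delta_{[2m-1]}(\mathbf{x})$ and a symmetrization this reproduces \eqref{integral_tau} with $k=2m-1$. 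An equivalent route is to expand the two Pfaffians directly by multilinearity into iterated integrals and apply Schur's identity to the resulting Cauchy-kernel Pfaffian.

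Finally, positivity is read off from \eqref{integral_tau}: on the sector $0<x_1<\dots<x_k$ one has $\Delta_{[k]}(\mathbf{x})^2>0$, $\Gamma_{[k]}(\mathbf{x})=\prod_{p<q}(x_p+x_q)>0$ and $\prod_p x_p^{l}>0$ because $d\mu$ is carried by $\R_+$, so $\tau_k^{(l)}\ge 0$, with $\tau_k^{(l)}>0$ precisely when this sector has positive $\mu^{\otimes k}$-measure. That holds for every $k\ge1$ if $\mu$ has infinitely many points of increase; if $d\mu=\sum_{i=1}^{K}a_i\delta(x-x_i)\,dx$ with $a_i>0$, the integral is the finite sum over strictly increasing $k$-subtuples of $(x_1,\dots,x_K)$, a sum of strictly positive terms when $1\le k\le K$ and an empty (hence zero) sum when $k>K$, giving (i) and (ii). The one step I expect to require care is the sign bookkeeping in the odd/bordered case — the precise normalization in Schur's Pfaffian identity, its bordered analogue, and the sign picked up when a border row and column are negated; the rest is routine.
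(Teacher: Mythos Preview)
Your proposal is correct and follows precisely the route the paper indicates: the paper does not spell out a proof but points to de~Bruijn's formula (for the reduction of the moment Pfaffian to a multiple integral with a Pfaffian weight) and to Schur's Pfaffian identity (for the evaluation of that weight as $\Delta_{[k]}(\mathbf{x})/\Gamma_{[k]}(\mathbf{x})$), exactly as you do for both parities. Your positivity argument from the manifestly nonnegative integrand is also the one implicit in the paper's statement that positivity ``follows immediately'' from \eqref{integral_tau}.
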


\begin{remark}
These Pfaffians are closely associated with the partition function for the Bures random matrix ensemble, which was identified from the Bures metric
that can be used to measure the distance of quantum states in quantum mechanics \cite{forrester2016relating}.
\end{remark}

\begin{remark}
The quantities $\tau_k^{(l)}$ have appeared in some of the references (e.g.\cite{hone2009explicit,lundmark2005degasperis}) in terms of the notations $t_k,u_k,v_k$ etc. In particular, 
we have $\tau_k^{(-1)}=t_k,$ $\tau_k^{(0)}=u_k,$ $\tau_k^{(1)}=v_k$. It is in the references \cite{chang2018application,chang2022hermite} that such Pfaffians were identified and the roles of Pfaffians were emphasized in the study of Novikov multipeakons.
\end{remark}

There exist intimate relationships among the determinants $F_k^{(i,j)},G_k^{(i,j)}$ and the Pfaffians $\tau_k^{(l)}$, some of which have appeared in the references (e.g.\cite{bertola2009cauchy,forrester2016relating,hone2009explicit,kohlenberg2007inverse,lundmark2005degasperis}). In \cite{chang2022hermite},  based on the identities connecting determinants and Pfaffians, a new and straightforward approach was employed to derive some old and new formulae. 

%Although some hints can be found in \cite{bertola2009cauchy,forrester2016relating,lundmark2005degasperis}, we shall present new straightforward proofs based on the identities \eqref{det_pf_odd}--\eqref{det_pf_even} connecting determinants and Pfaffians.
\begin{prop}\label{prop:FGtau}
For any integer $k,l$, there hold: 
\begin{equation}\label{rel:FGtau}
\begin{aligned}
&F_k^{(l+1,l)}=F_k^{(l,l+1)}=\frac{(\tau_k^{(l)})^2}{2^k},\qquad G_k^{(l,l+1)}=\frac{\tau_k^{(l)}\tau_{k-1}^{(l)}}{2^{k-1}},  \qquad G_k^{(l,l+2)}=\frac{\tau_k^{(l)}\tau_{k-1}^{(l+1)}}{2^{k-1}},\\
& F_k^{(l,l)}=\frac{1}{2^k}\left(\tau_k^{(l)}\tau_k^{(l-1)}-\tau_{k-1}^{(l)}\tau_{k+1}^{(l-1)}\right),\quad F_k^{(l+1,l-1)}=\frac{1}{2^k}\left(\tau_k^{(l)}\tau_k^{(l-1)}+\tau_{k-1}^{(l)}\tau_{k+1}^{(l-1)}\right). 
\end{aligned}
\end{equation}
\end{prop}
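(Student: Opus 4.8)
The plan is to prove the five identities in \eqref{rel:FGtau} by relating both sides to multiple integrals and then manipulating the resulting integrand, exploiting the Cauchy/skew kernel algebra together with Schur's Pfaffian identities. The starting point is that all the relevant determinants $F_k^{(i,j)}$, $G_k^{(i,j)}$ and Pfaffians $\tau_k^{(l)}$ already have integral representations (Proposition \ref{prop:FG} and \eqref{integral_tau}), so the identities reduce to equalities between integrals whose integrands are products of Vandermonde-type factors and Cauchy kernels. For the first identity $F_k^{(l+1,l)}=(\tau_k^{(l)})^2/2^k$, I would write $F_k^{(l+1,l)}$ as a double integral over ordered $\mathbf{x}$ and ordered $\mathbf{y}$ with weight $\prod (x_p)^{l+1}(y_q)^l$ over $\Gamma_{[k],[k]}(\mathbf{x},\mathbf{y})$, symmetrize appropriately, and recognize the square $(\tau_k^{(l)})^2$ as a double integral over two independent copies of the same ordered configuration; the factor $2^{-k}$ comes from comparing $\Gamma_{[k]}(\mathbf{x})\Gamma_{[k]}(\mathbf{y})\cdot\text{(something)}$ with $\Gamma_{[k],[k]}(\mathbf{x},\mathbf{y})$ after the substitution that merges the two sets — this is precisely the kind of Cauchy-kernel factorization underlying the Cauchy two-matrix model.

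An alternative, and in many ways cleaner, route is purely algebraic: use the known expansion of a determinant with "bordered" Cauchy-type entries in terms of Pfaffians. Concretely, I would invoke Schur's Pfaffian identity
\[
\Pf\!\left(\frac{x_i-x_j}{x_i+x_j}\right)_{i,j=1}^{2m}=\prod_{i<j}\frac{x_i-x_j}{x_i+x_j},
\]
and its bordered variant, to express $\tau_k^{(l)}$ as a single integral, and the Cauchy determinant identity $\det\!\bigl(\tfrac{1}{x_i+y_j}\bigr)=\tfrac{\Delta(\mathbf{x})\Delta(\mathbf{y})}{\Gamma(\mathbf{x},\mathbf{y})}$ to express $F_k$, $G_k$ as double integrals. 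Then the identities become combinatorial statements about these closed-form integrands. The relations $G_k^{(l,l+1)}=\tau_k^{(l)}\tau_{k-1}^{(l)}/2^{k-1}$ and $G_k^{(l,l+2)}=\tau_k^{(l)}\tau_{k-1}^{(l+1)}/2^{k-1}$ follow the same template as the first, except now one of the two integration sets has size $k-1$ and carries the weight shift $(y_q)^{l}$ or $(y_q)^{l+1}$ respectively, matching the index pattern in $G_k^{(l,l+1)}$ versus $G_k^{(l,l+2)}$.

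For the last two identities, concerning $F_k^{(l,l)}$ and $F_k^{(l+1,l-1)}$, the right-hand sides are \emph{differences} and \emph{sums} of products of $\tau$'s, so a single integral identification will not suffice. Here I would use Proposition \ref{prop:FG} (the moment relation $I_{i+1,j}+I_{i,j+1}=\beta_i\beta_j$ and $J_{i,j}=I_{i,j+1}-I_{i+1,j}$) to rewrite the entries of the $F$-determinants in terms of $J$ and $\beta$: adding and subtracting suitable rows/columns inside $\det(I_{l+p,l+q})$ converts it, after elementary row/column operations, into a block form whose Pfaffian/determinant factorizes. Equivalently, one can combine the already-proven identities: since $F_k^{(l,l)}+F_k^{(l+1,l-1)}=2F_k^{(l,l)}+\text{(something)}$... more cleanly, from $I_{l,l}=\tfrac12(\beta_{l-1}\beta_l - J_{l-1,l}) $-type relations one sees that $F_k^{(l,l)}\pm F_k^{(l+1,l-1)}$ should each collapse to a $2\times 2$-type Pfaffian expansion, giving $\tau_k^{(l)}\tau_k^{(l-1)}\mp\tau_{k-1}^{(l)}\tau_{k+1}^{(l-1)}$. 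I expect the main obstacle to be exactly this bookkeeping for the last two formulae: tracking signs and the shifted indices $l-1,l,l+1$ through the row/column operations and matching them to a Jacobi/Desnanot–Jacobi-type identity among the Pfaffians $\tau_k^{(l)}$ (of which the bilinear identity $\tau_k^{(l)}\tau_k^{(l-1)}-\tau_{k-1}^{(l)}\tau_{k+1}^{(l-1)}$ is a recognizable instance). Once the first three identities and the moment relations of Proposition \ref{prop:FG} are in hand, the remaining two reduce to an application of Pfaffian minor-expansion (Plücker-type) identities, so the proof is structurally complete; the work is in the algebra, and since all of this has essentially been carried out in \cite{chang2022hermite}, I would present the derivation in the streamlined determinant-to-Pfaffian style advertised there rather than through the multiple integrals.
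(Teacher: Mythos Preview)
Your proposal is correct and aligns with the paper's approach: the paper does not give its own detailed proof of this proposition but simply records that these formulae are derived in \cite{chang2022hermite} via ``identities connecting determinants and Pfaffians,'' which is precisely the Schur-Pfaffian/Cauchy-determinant route you outline (supplemented by the integral representations of Proposition~\ref{prop:FG} and \eqref{integral_tau}). Your sketch for the last two identities is admittedly loose, but since you explicitly defer to the streamlined derivation in \cite{chang2022hermite}, this matches what the paper itself does.
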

An immediate corollary is as follows.
\begin{coro} \label{coro:FG}
For any $k,l\in \mathbb{Z}$, there hold 
\begin{align*}
(G_k^{(l,l+1)})^2=2F_k^{(l+1,l)}F_{k-1}^{(l+1,l)},\qquad (G_k^{(l,l+2)})^2=2F_k^{(l+1,l)}F_{k-1}^{(l+2,l+1)}.
\end{align*}
\end{coro}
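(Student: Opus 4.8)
The plan is to deduce the corollary directly by substitution from the identities in Proposition \ref{prop:FGtau}, which already express every $F$ and $G$ occurring here in terms of the Pfaffians $\tau_k^{(l)}$. First I would record the three relevant formulae: $F_k^{(l+1,l)}=(\tau_k^{(l)})^2/2^k$, $G_k^{(l,l+1)}=\tau_k^{(l)}\tau_{k-1}^{(l)}/2^{k-1}$, and $G_k^{(l,l+2)}=\tau_k^{(l)}\tau_{k-1}^{(l+1)}/2^{k-1}$. I would also note the index-shifted instance $F_{k-1}^{(l+2,l+1)}=F_{k-1}^{(l+1,l+2)}=(\tau_{k-1}^{(l+1)})^2/2^{k-1}$, obtained from the first formula with $k\mapsto k-1$, $l\mapsto l+1$, together with the symmetry $F_k^{(i,j)}=F_k^{(j,i)}$.

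For the first identity, squaring the expression for $G_k^{(l,l+1)}$ gives $(\tau_k^{(l)})^2(\tau_{k-1}^{(l)})^2/2^{2k-2}$, while $2F_k^{(l+1,l)}F_{k-1}^{(l+1,l)}=2\cdot\frac{(\tau_k^{(l)})^2}{2^k}\cdot\frac{(\tau_{k-1}^{(l)})^2}{2^{k-1}}=\frac{(\tau_k^{(l)})^2(\tau_{k-1}^{(l)})^2}{2^{2k-2}}$, so the two sides agree. For the second identity the computation is identical with $\tau_{k-1}^{(l)}$ replaced by $\tau_{k-1}^{(l+1)}$: $(G_k^{(l,l+2)})^2=(\tau_k^{(l)})^2(\tau_{k-1}^{(l+1)})^2/2^{2k-2}=2F_k^{(l+1,l)}F_{k-1}^{(l+2,l+1)}$.

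There is essentially no obstacle here; the only point requiring a word of care is the range of validity for small or negative $k$, where some of the $F$, $G$, $\tau$ quantities are governed by the stated conventions ($F_0=1$, $F_k=0$ for $k<0$, $G_1^{(i,j)}=\beta_i$, $G_k=0$ for $k<1$, $\tau_0^{(l)}=1$, $\tau_k^{(l)}=0$ for $k<0$). One checks that these conventions are mutually consistent, so that the identities in Proposition \ref{prop:FGtau} — and hence the two displayed equalities — continue to hold for all $k,l\in\mathbb{Z}$; for instance both sides vanish once $k\le 0$, and for $k=1$ the first identity reads $\beta_l^2 = 2\cdot\frac{(\tau_1^{(l)})^2}{2}\cdot 1$, which holds since $\tau_1^{(l)}=\beta_l$. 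Thus the corollary follows.
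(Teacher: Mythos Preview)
Your proposal is correct and follows exactly the paper's intended approach: the paper presents this corollary as an immediate consequence of Proposition~\ref{prop:FGtau} with no further proof, and your argument simply spells out the substitution of the Pfaffian expressions for $F_k^{(l+1,l)}$, $G_k^{(l,l+1)}$, $G_k^{(l,l+2)}$ and the direct comparison of both sides. Your additional remark on the boundary conventions for small~$k$ is a nice extra check but not something the paper addresses explicitly.
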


By making use of the Desnanot--Jacobi identity for determinants, several bilinear relations among $E_k^{(i,j)},F_k^{(i,j)},G_k^{(i,j)}$ have been obtained in \cite[Lemma 2.1]{chang2018degasperis}. The following formula will be utilized in this paper.

\begin{prop} \label{lem:bi}
For any $i,j\in \mathbb{Z}$, $k\in  \mathbb{N}_+$, there hold 
%\begin{subequations}
\begin{align}
&F_{k+1}^{(i,j)}F_{k-1}^{(i+1,j+1)}=F_{k}^{(i,j)}F_{k}^{(i+1,j+1)}-F_{k}^{(i,j+1)}F_{k}^{(i+1,j)}.\label{id:bi1}
%&G_{k+1}^{(i-1,j+1)}F_{k-1}^{(i+1,j)}=G_{k}^{(i-1,j+1)}F_{k}^{(i+1,j)}-G_{k}^{(i,j+1)}F_{k}^{(i,j)},\label{id:bi2}\\
%&G_{k}^{(i-1,j+1)}F_{k-1}^{(i,j+1)}=G_{k}^{(i-1,j+2)}F_{k-1}^{(i,j)}-G_{k-1}^{(i-1,j+2)}F_{k}^{(i,j)},\label{id:bi3}\\
%&G_{k}^{(i-1,j+2)}F_{k-1}^{(i+1,j)}=F_{k}^{(i,j)}G_{k-1}^{(i,j+2)}+F_{k-1}^{(i+1,j+1)}G_{k}^{(i-1,j+1)},\label{id:bi4}\\
%&G_{k+1}^{(i-1,j+1)}G_{k-1}^{(i,j+2)}=G_{k}^{(i-1,j+1)}G_{k}^{(i+1,j+1)}-G_{k}^{(i-1,j+2)}G_{k}^{(i,j+1)}.\label{id:bi7}
%%&E_{k}^{(i,j)}F_{k-1}^{(i,j-1)}=G_{k}^{(j-1,i)}G_{k-1}^{(i,j)}+F_{k-1}^{(i,j)}E_{k}^{(i,j-1)}.\label{id:bi7}
\end{align}
%\end{subequations}
\end{prop}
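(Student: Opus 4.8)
The plan is to prove the bilinear identity \eqref{id:bi1} as a direct application of the Desnanot--Jacobi (Dodgson condensation) identity to a suitable $(k+1)\times(k+1)$ determinant. Recall that for any square matrix $M$ of size $n$, with $M^{a,b}_{c,d}$ denoting the minor obtained by deleting rows $a,b$ and columns $c,d$, one has
\begin{align*}
\det(M)\,\det\!\left(M^{1,n}_{1,n}\right)=\det\!\left(M^{1}_{1}\right)\det\!\left(M^{n}_{n}\right)-\det\!\left(M^{1}_{n}\right)\det\!\left(M^{n}_{1}\right).
\end{align*}
First I would take $M$ to be the $(k+1)\times(k+1)$ bimoment matrix $\bigl(I_{i+p,\,j+q}\bigr)_{p,q=0}^{k}$, so that $\det(M)=F_{k+1}^{(i,j)}$ by Definition \ref{def:FG}. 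The key observation is that each of the four complementary minors appearing on the right, together with the central minor, is again a contiguous bimoment determinant with a shifted starting corner: deleting the first row and first column leaves $\bigl(I_{i+1+p,\,j+1+q}\bigr)_{p,q=0}^{k-1}=F_k^{(i+1,j+1)}$; deleting the last row and last column leaves $\bigl(I_{i+p,\,j+q}\bigr)_{p,q=0}^{k-1}=F_k^{(i,j)}$; deleting the first row and last column leaves $\bigl(I_{i+1+p,\,j+q}\bigr)_{p,q=0}^{k-1}=F_k^{(i+1,j)}$; deleting the last row and first column leaves $F_k^{(i,j+1)}$; and deleting rows/columns $1$ and $k+1$ simultaneously leaves $\bigl(I_{i+1+p,\,j+1+q}\bigr)_{p,q=0}^{k-2}=F_{k-1}^{(i+1,j+1)}$. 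Substituting these into the Desnanot--Jacobi identity yields \eqref{id:bi1} immediately.

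The only genuine care needed is bookkeeping: one must track exactly how the row/column index shifts interact with the ``starting corner'' convention in Definition \ref{def:FG}, in particular making sure that $F_k^{(j,i)}=F_k^{(i,j)}$ is used consistently so that the two off-diagonal minors are correctly identified (they are transposes of each other up to the symmetry of the determinant, which is why the mixed term reads $F_k^{(i,j+1)}F_k^{(i+1,j)}$ rather than its square). I would also verify the boundary cases $k=1$ against the conventions $F_0^{(\cdot,\cdot)}=1$ and $F_{-1}^{(\cdot,\cdot)}=0$ so that the stated range $k\in\mathbb{N}_+$ is honest; for $k=1$ the identity reduces to the $2\times 2$ cofactor expansion $I_{i,j}I_{i+1,j+1}-I_{i,j+1}I_{i+1,j}=F_2^{(i,j)}\cdot 1$, which holds trivially.

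The main (and only) obstacle is therefore not conceptual but notational — ensuring the four contiguous submatrices really do have the claimed upper-left entries after the deletions, which is a routine check once the matrix $M$ is written out explicitly. Since the result is quoted from \cite[Lemma 2.1]{chang2018degasperis}, an alternative would simply be to cite it, but the Desnanot--Jacobi derivation above is short enough to include in full; I would present it in that form.
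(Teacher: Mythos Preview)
Your proposal is correct and matches the paper's own approach: the paper does not spell out the proof but states just before Proposition \ref{lem:bi} that these bilinear relations are obtained ``by making use of the Desnanot--Jacobi identity for determinants'' (citing \cite[Lemma 2.1]{chang2018degasperis}), and the explicit computation you outline—applying Desnanot--Jacobi to the $(k+1)\times(k+1)$ matrix $(I_{i+p,j+q})_{p,q=0}^{k}$ and identifying each minor as a shifted $F$-determinant—is exactly the intended derivation. Your bookkeeping of the minors and the boundary case $k=1$ are both fine.
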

Moreover, for our purposes, additional new bilinear relations will be derived in the subsection \ref{subsec:detpf}. 
In particular,  Theorem  \ref{coro_new_bi} involves both of determinants and Pfaffians.
%\ref{coro_new_bi_E} and

\section{Correspondence between discrete cubic string and dual cubic string}\label{sec:cubic}
In this section, we first present some facts on the discrete cubic string with Dirichlet-like boundary conditions  in \cite{lundmark2005degasperis} and the discrete dual cubic string with certain boundary conditions in \cite{hone2009explicit}. Then we shall reveal a one-to-one correspondence between them, which is actually motivated by those in Section \ref{sec:dpnv}.

\subsection{Discrete cubic string} The cubic string 
\begin{align}\label{eq:cubic}
&-\phi{'''}(y)=zg(y)\phi(y),\qquad y\in(-1,1),
\end{align}
with Dirichlet-like boundary conditions 
\begin{equation}\label{cubic:bound}
\phi(-1)=\phi_y(-1)=0,\quad \phi(1)=0
\end{equation}
was introduced in \cite{lundmark2005degasperis} for solving the DP multipeakon problem. More precisely, the corresponding problem is associated with the discrete cubic string, i.e., when $g$ is  a discrete measure
$$g(y)=\sum_{k=1}^Ng_k\delta(y-y_k)dy,\quad g_k>0$$
with $-1=y_0<y_1<\cdots <y_N<y_{N+1}=1$. 

Without loss of generality, they considered the initial value problem
\begin{align*}
&-\phi{'''}(y)=zg(y)\phi(y),\\
&\phi(-1)=\phi_y(-1)=0,\quad \phi_{yy}(-1)=1
\end{align*}
with the discrete measure $g$. It is obvious that $\phi_{yyy}(y)$ is proportional to the delta function at the points $y_1, \ldots, y_N$, and zero elsewhere.  Besides, it is required that $\phi(y)$ and $\phi_y(y)$ to be continuous, and
$\phi_{yy}(y)$ to be piecewise constant with jump discontinuities at $y_1, \ldots, y_N$. Upon denoting by 
$$l_k=y_{k+1}-y_k$$
and 
$$\Phi=(\phi_1,\phi_2,\phi_3)^\top=(\phi,\phi_{y},\phi_{yy})^\top,$$ one will see that there hold the following transition relations
\begin{equation*}
\Phi(y_{k+1}-;z)=L_k\Phi(y_{k}+;z),\quad \Phi(y_{k}+;z)=G_k(z)\Phi(y_{k}-;z)
\end{equation*}
with
\begin{equation*}
L_k=
\begin{pmatrix}
1&l_{k}&\frac{l_{k}^2}{2}\\
0&1&l_{k}\\
0&0&1
\end{pmatrix},\qquad 
G_k(z)=\begin{pmatrix}
1&0&0\\
0&1&0\\
-zg_k&0&1
\end{pmatrix},
\end{equation*}
which result in 
\begin{equation}\label{cubic:tran}
\Phi(1;z)=
\begin{pmatrix}
\phi_1(1;z)\\
\phi_2(1;z)\\
\phi_3(1;z)
\end{pmatrix}=
L_NG_N(z)L_{N-1}G_{N-1}(z)\cdots L_1G_1(z)L_0
\begin{pmatrix}
0\\
0\\
1
\end{pmatrix}.
\end{equation}

The spectral data $\{b_j,\zeta_j\}_{j=1}^N$ is encoded in the Weyl functions defined by
\begin{equation}\label{cubic:weyl}
W(z)=\frac{\phi_2(1;z)}{\phi_1(1;z)},\qquad Z(z)=\frac{\phi_3(1;z)}{\phi_1(1;z)},
\end{equation}
where $\{\zeta_j\}_{j=1}^N$ satisfying $0<\zeta_1<\cdots<\zeta_N$ constitute the zeros of $\phi_1(1;z)$ and $\{b_j\}_{j=1}^N$ are some positive constants satisfying
\begin{subequations}\label{weyl_dis_cubic}
\begin{align}
\frac{W(z)}{z}=&\frac{1}{z}+\sum_{j=1}^N\frac{b_j}{z-\zeta_j}, \\
 \frac{Z(z)}{z}=&\frac{1}{2z}+\sum_{j=1}^N\frac{b_j}{z-\zeta_j}+\sum_{j=1}^N\sum_{k=1}^N\frac{\zeta_jb_kb_j}{(\zeta_k+\zeta_j)(z-\zeta_j)}.
\end{align}
\end{subequations}
Note that it is quite unexpected for this cubic string that the spectrum are simple and positive, since the problem is not self-adjoint and there is no a priori reason for the spectrum to even be real. 

The above procedure results in the forward spectral map from $\{y_k,g_k\}_{k=1}^N$ to $\{b_j,\zeta_j\}_{j=1}^N$. Conversely, $\{y_k,g_k\}_{k=1}^N$ can be uniquely recovered from the spectral data according to the explicit formulae (see \cite[Th. 4.16]{lundmark2005degasperis})
\begin{equation}\label{exp:gy_uvw}
g_{N+1-k}=\frac{(U_k+V_{k-1})^4}{2W_{k-1}W_k},\qquad y_{N+1-k}=\frac{U_k-V_{k-1}}{U_k+V_{k-1}},
\end{equation}
where 
$$
U_k=\tau_{k}^{(0)},\quad V_k=\tau_{k}^{(1)},\quad W_k=\tau_{k}^{(1)}\tau_{k}^{(0)}-\tau_{k-1}^{(1)}\tau_{k+1}^{(0)}=2^kF_k^{(1,1)},
$$
with $\tau_k^{(j)}$ defined in Definition \ref{def:tau} with the discrete measure $d\mu(x)=\sum_{j=1}^Nb_j\delta(x-\zeta_j)dx$. It is noted that ``$U,V,W$" appear as the original notations in \cite[Th. 4.16]{lundmark2005degasperis}, while it has been shown that  they can be equivalently written in terms of $\tau_k^{(j)}$ (see e.g. \cite[Remark 2.8,Theorem 2.9]{chang2022hermite}, or \cite[Corollary 2.5]{chang2018application}).

\subsection{Discrete dual cubic string} The dual cubic string 
\begin{equation}\label{eq:dualcubic}
D_{\tilde y}\begin{pmatrix}
\tilde\phi_1\\
\tilde\phi_{2}\\
\tilde\phi_3
\end{pmatrix}=
\begin{pmatrix}
0& \tilde g(\tilde y)&0\\
0&0&\tilde g(\tilde y)\\
-z&0&0
\end{pmatrix}\begin{pmatrix}
\tilde\phi_1\\
\tilde\phi_{2}\\
\tilde\phi_3
\end{pmatrix},\qquad \tilde y\in(-1,1),
\end{equation}
with the boundary
\begin{equation}\label{dualcubic:bound}
\tilde\phi_{2}(-1)=\tilde\phi_{3}(-1)=0,\quad \tilde\phi_{3}(1)=0,
\end{equation}
was proposed in \cite{hone2009explicit} to deal with Novikov multipeakon problem. Indeed, what was concerned in \cite{hone2009explicit} is the discrete dual cubic string with
$$\tilde g(\tilde y)=\sum_{k=1}^N\tilde g_k\delta(\tilde y-\tilde y_k)d\tilde y,\quad \tilde g_k>0$$
and $-1=\tilde y_0<\tilde y_1<\cdots <\tilde y_N<\tilde y_{N+1}=1$. The duality between \eqref{eq:dualcubic} and \eqref{eq:cubic} manifests itself as follows. When $g(y)>0$ is a continuous function, the systems \eqref{eq:dualcubic} and \eqref{eq:cubic}, regardless of the boundaries,  can be formally related to each other via the same change of the variables
%$$\frac{d\tilde y}{dy}=g(y)=\frac{1}{\tilde g(\tilde y)},$$
as \eqref{change}, which can be easily demonstrated if one rewrites the third-order equation  \eqref{eq:cubic} as a first-order system. The discrete case corresponds to the case that $\tilde y$ is a piecewise constant function and the duality leads to an interchange of the roles of $g_k$ and $l_k$.
This obviously gives a duality analogue of ordinary dual string introduced by Kac and Krein. See \cite[Remark 5.2]{hone2009explicit} for more details.

Concerning the Novikov peakon problem, without loss of generality, one might consider the discrete dual cubic string \eqref{eq:dualcubic} with the initial value 
$$(\tilde\phi_1(-1),\tilde\phi_{2}(-1),\tilde\phi_3(-1))=(1,0,0).$$   It follows from \eqref{eq:dualcubic} that $\tilde\phi_3$ is continuous and piecewise linear, while $\tilde\phi_1,\tilde\phi_{2}$ are piecewise constants with jumps at $\tilde y_k$. 
More precisely, one can get  the formula for the transition of $\tilde\Phi=(\tilde\phi_1,\tilde\phi_{2},\tilde\phi_3)^T$ from $\tilde y_{k}-$ to $\tilde y_{k+1}-$, that is,
\begin{equation*}
\tilde\Phi(y_k+;z)=\tilde G_k\tilde\Phi(y_k-;z),\qquad \tilde\Phi(y_{k+1}-;z)=\tilde L_k(z)\tilde\Phi(y_k+;z),
\end{equation*}
where
\begin{equation*}
\tilde G_k=
\begin{pmatrix}
1&\tilde g_{k}&\frac{\tilde g_{k}^2}{2}\\
0&1&\tilde g_{k}\\
0&0&1
\end{pmatrix},\qquad 
\tilde L_k(z)=\begin{pmatrix}
1&0&0\\
0&1&0\\
-z\tilde l_k&0&1
\end{pmatrix},
\end{equation*}
with
$
\tilde l_k=\tilde y_{k+1}-\tilde y_k.
$
Then it follows immediately that
\begin{equation*}%\label{dualcubic:tran}
\tilde\Phi(1;z)=\tilde L_N(z)\tilde G_N\cdots \tilde L_1(z)\tilde G_1\tilde L_0(z)
\begin{pmatrix}
1\\
0\\
0
\end{pmatrix}.
\end{equation*}

As shown in \cite{hone2009explicit}, the Weyl functions defined by
\begin{equation}\label{dualcubic:weyl}
\begin{aligned}
&\tilde W(z)=-\frac{\tilde\phi_2(1;z)}{\tilde\phi_3(1;z)},\qquad \tilde Z(z)=-\frac{\tilde\phi_1(1;z)}{\tilde\phi_3(1;z)},
\end{aligned}
\end{equation}
encode the spectra data $\{\zeta_j,b_j\}_{j=1}^N$, where $0=\zeta_0<\zeta_1<\zeta_2<\cdots<\zeta_N$ constitute the zeros of $\tilde\phi_3(1;z)=0$ and $\{b_j\}_{j=1}^N$ are some positive constants. In fact, it was shown that $\tilde W(z)$ and $ \tilde Z(z)$ admit the partial fraction decompositions
\begin{equation}\label{weyl_dis_dualcubic}
\tilde W(z)=\sum_{j=1}^N\frac{b_j}{z-\zeta_j}, \qquad  \tilde Z(z)=\frac{1}{2z}+\sum_{j=1}^N\sum_{k=1}^N\frac{b_kb_j}{(\zeta_k+\zeta_j)(z-\zeta_j)}.
\end{equation}
Thus a forward spectral map is established from $\{\tilde y_k,\tilde g_k\}_{k=1}^N$ to $\{b_j,\zeta_j\}_{j=1}^N$. According to \cite[Theorem 7.2]{hone2009explicit} together with the reformulation in \cite[Theorem 5.10]{chang2022hermite}, the inverse spectral problem is uniquely solvable and it
can be explicitly formulated as
\begin{subequations}\label{exp:til_gy}
\begin{align}
\tilde g_{N-k+1}&=\frac{\tau_{k-1}^{(1)}}{\tau_{k}^{(0)}}-\frac{\tau_{k+1}^{(-1)}}{\tau_{k}^{(0)}}-\frac{\tau_{k-2}^{(1)}}{\tau_{k-1}^{(0)}}+\frac{\tau_{k}^{(-1)}}{\tau_{k-1}^{(0)}}=\frac{2^k\left(F_{k}^{(0,0)}+\frac{1}{2}F_{k-1}^{(1,1)}\right)}{\tau_{k}^{(0)}\tau_{k-1}^{(0)}},\\
\tilde l_{N-k}&=\frac{\left(\tau_{k}^{(0)}\right)^4}{2^{2k}\left(F_{k+1}^{(0,0)}+\frac{1}{2}F_k^{(1,1)}\right)\left(F_{k}^{(0,0)}+\frac{1}{2}F_{k-1}^{(1,1)}\right)},\label{exp::til_l}\\
\tilde y_{N-k+1}&=\frac{F_{k+1}^{(0,0)}-\frac{1}{2}F_k^{(1,1)}}{F_{k+1}^{(0,0)}+\frac{1}{2}F_k^{(1,1)}},
\end{align}
\end{subequations}
where $\tau_k^{(j)}$ denote Pfaffians in Definition \ref{def:tau} with the discrete measure $d\mu(x)=\sum_{j=1}^Nb_j\delta(x-\zeta_j)dx$ and it is noted that $F_k^{(l,l)}$ admit Pfaffian representations indicated in Proposition \ref{prop:FGtau}.

\subsection{One-to-one correspondence} 
This subsection is devoted to presenting a one-to-one correspondence between the discrete cubic string \eqref{eq:cubic} with Dirichlet-like boundary conditions \eqref{cubic:bound} and the discrete dual cubic string problem \eqref{eq:dualcubic} with the boundary \eqref{dualcubic:bound}. The main result is given as follows. 

%More precisely, we shall construct \eqref{cubic:tran} and \eqref{dualcubic:tran}

\begin{theorem}
There exists a one-to-one correspondence between the discrete cubic string \eqref{eq:cubic} with Dirichlet-like boundary conditions \eqref{cubic:bound} and the discrete dual cubic string \eqref{eq:dualcubic} with the boundary \eqref{dualcubic:bound}. More precisely,
\begin{enumerate}
\item[(i)] given a discrete cubic string problem \eqref{eq:cubic} with Dirichlet-like boundary conditions \eqref{cubic:bound}, one could construct a discrete dual cubic string problem \eqref{eq:dualcubic} with the boundary \eqref{dualcubic:bound} according to the formulae 
\begin{subequations}\label{gy_to_tilde}
\begin{align}
&\tilde y_k=1-\frac{16}{8+\sum_{j=1}^kg_j(1+y_j)^4},\\
&\tilde g_k=\frac{1}{4}\left(\frac{1}{1+y_k}-\frac{1}{1+y_{k+1}}\right)\left(8+\sum_{j=1}^kg_j(1+y_j)^4\right),
\end{align}
\end{subequations}
such that the both problems produce the same spectral data;
\item[(ii)] given a discrete dual cubic string problem \eqref{eq:dualcubic} with the boundary \eqref{dualcubic:bound},
one could construct a discrete cubic string problem \eqref{eq:cubic} with Dirichlet-like boundary conditions \eqref{cubic:bound} according to the formulae
\begin{subequations}\label{tilde_to_gy}
\begin{align}
&y_k=\frac{4}{2+\sum_{j=k}^N\tilde g_j(1-\tilde y_j)}-1,\\
&g_k=\frac{1}{16}\left(\frac{1}{1-\tilde y_k}-\frac{1}{1-\tilde y_{k-1}}\right)\left(2+\sum_{j=k}^N\tilde g_j(1-\tilde y_j)\right)^4,
\end{align}
\end{subequations}
such that the both problems produce the same spectral data.
\end{enumerate}
\end{theorem}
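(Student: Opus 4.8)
The plan is to route everything through the common spectral data $\{b_j,\zeta_j\}_{j=1}^N$. Given a discrete cubic string $\{y_k,g_k\}_{k=1}^N$, its forward spectral map produces such data and hence a discrete measure $d\mu=\sum_{j=1}^N b_j\delta(x-\zeta_j)\,dx$, and the uniquely solvable inverse spectral problem of the discrete dual cubic string then returns a unique $\{\tilde y_k,\tilde g_k\}_{k=1}^N$, given explicitly by \eqref{exp:til_gy}. Thus part (i) reduces to the algebraic statement that, after the cubic-string inverse formulae \eqref{exp:gy_uvw} are substituted, the right-hand sides of \eqref{gy_to_tilde} coincide with \eqref{exp:til_gy}; part (ii) is the mirror statement, substituting \eqref{exp:til_gy} into \eqref{tilde_to_gy} and matching \eqref{exp:gy_uvw}. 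Once this is done, the remaining requirements are automatic: the strict ordering of the $\tilde y_k$ and the positivity of the $\tilde g_k$ (resp.\ of $y_k,g_k$) are inherited from the inverse spectral maps, and ``both problems produce the same spectral data'' holds because the forward and inverse spectral maps of each string are mutual inverses.

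The heart of part (i) is to display each building block of \eqref{gy_to_tilde} as a telescoping difference. Inserting \eqref{exp:gy_uvw} and using $1+y_{N+1-k}=2U_k/(U_k+V_{k-1})$, the product $g_j(1+y_j)^4$ collapses to $8\,U_{N+1-j}^4/(W_{N-j}W_{N+1-j})$; Proposition \ref{prop:FGtau} (notably $(\tau_k^{(0)})^2=2^kF_k^{(1,0)}$ and $W_k=2^kF_k^{(1,1)}$) together with the Desnanot--Jacobi identity \eqref{id:bi1}, rewritten as $(F_k^{(0,1)})^2=F_k^{(0,0)}F_k^{(1,1)}-F_{k+1}^{(0,0)}F_{k-1}^{(1,1)}$, then turns this into
\begin{equation*}
g_j(1+y_j)^4=16\left(\frac{F_{N+1-j}^{(0,0)}}{F_{N-j}^{(1,1)}}-\frac{F_{N+2-j}^{(0,0)}}{F_{N+1-j}^{(1,1)}}\right).
\end{equation*}
Summing over $j=1,\dots,k$ telescopes, the leading term vanishing because $F_{N+1}^{(i,j)}=0$ by Proposition \ref{prop:FG}(ii); this identifies $8+\sum_{j=1}^k g_j(1+y_j)^4$ with $16\bigl(F_{N-k+1}^{(0,0)}+\tfrac12 F_{N-k}^{(1,1)}\bigr)/F_{N-k}^{(1,1)}$, and hence the $\tilde y$ part of \eqref{gy_to_tilde} with \eqref{exp:til_gy}. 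Feeding the same reduction into $\tilde g_k$, after simplifying $\tfrac1{1+y_k}-\tfrac1{1+y_{k+1}}$ to $W_{N-k}/\bigl(2\tau_{N-k+1}^{(0)}\tau_{N-k}^{(0)}\bigr)$ via \eqref{exp:gy_uvw} and Proposition \ref{prop:FGtau}, matches the $\tilde g$ part as well.

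For part (ii) I would run the symmetric computation: substituting \eqref{exp:til_gy} into $\tilde g_j(1-\tilde y_j)$ and simplifying $1-\tilde y_j$ by Proposition \ref{prop:FGtau}, this block becomes a telescoping difference of the ratios $\tau_m^{(1)}/\tau_{m+1}^{(0)}$, and summing $j=k,\dots,N$ (the bottom term dying by the convention $\tau_{-1}^{(l)}=0$ of Definition \ref{def:tau}) turns \eqref{tilde_to_gy} into precisely \eqref{exp:gy_uvw}. Alternatively — and perhaps more cleanly — one checks directly that \eqref{gy_to_tilde} and \eqref{tilde_to_gy} are mutually inverse maps: composing them, the elementary telescoping $\sum_{j=k}^N\tilde g_j(1-\tilde y_j)=4\bigl(\tfrac1{1+y_k}-\tfrac12\bigr)$ (using the boundary value $y_{N+1}=1$), and symmetrically in the other order, reduces the composition to the identity, so that part (ii) follows from part (i) together with the bijectivity of both spectral maps.

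The main obstacle is organizational rather than conceptual: the whole argument rests on recognizing that every ``mass $\times$ length''-type product is the difference of two consecutive ratios of the determinants $F_k^{(i,j)}$ (respectively of the Pfaffians $\tau_k^{(l)}$) — which is exactly what the Desnanot--Jacobi identity \eqref{id:bi1} provides — and then carrying the reindexings $k\leftrightarrow N+1-k$ and $k\leftrightarrow N-k$ consistently through both directions while correctly invoking the boundary vanishing $F_{N+1}^{(i,j)}=0$ and the edge conventions $F_0^{(i,j)}=1$, $\tau_0^{(l)}=1$, $\tau_{-1}^{(l)}=0$. Everything else is routine simplification.
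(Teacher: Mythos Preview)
Your proposal is correct and follows essentially the same approach as the paper: route through the common spectral data, reduce part (i) to showing that substituting \eqref{exp:gy_uvw} into \eqref{gy_to_tilde} reproduces \eqref{exp:til_gy} via the telescoping identity $(F_j^{(1,0)})^2/(F_j^{(1,1)}F_{j-1}^{(1,1)})=F_j^{(0,0)}/F_{j-1}^{(1,1)}-F_{j+1}^{(0,0)}/F_j^{(1,1)}$ coming from \eqref{id:bi1}, and then handle part (ii) by observing that \eqref{gy_to_tilde} and \eqref{tilde_to_gy} are mutually inverse. The paper does exactly this, only slightly more tersely (it leaves the boundary vanishing $F_{N+1}^{(i,j)}=0$ implicit and does not spell out your alternative ``symmetric computation'' for part (ii)).
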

\begin{proof}
%\todo{We only proceed the details for the proof to (i). The proof to (ii) can be similarly completed by noting that all of matrices and maps are invertible.
%Given the string data $\{y_j,g_j\}_{j=1}^N$, let's consider the corresponding discrete cubic string problem \eqref{eq:cubic} with Dirichlet-like boundary conditions \eqref{cubic:bound}. Since the corresponding Weyl functions admit 
%}
Recall that there exists a bijection between the string data $\{y_j,g_j\}_{j=1}^N$ and the spectral data $\{b_j,\zeta_j\}_{j=1}^N$ for the discrete cubic string problem \eqref{eq:cubic} with Dirichlet-like boundary conditions \eqref{cubic:bound}, and so does that between the string data $\{\tilde y_j,\tilde g_j\}_{j=1}^N$ and the spectral data $\{\tilde b_j,\tilde \zeta_j\}_{j=1}^N$ for discrete dual cubic string problem \eqref{eq:dualcubic} with the boundary \eqref{dualcubic:bound}. In addition, the corresponding recovering formulae are given by \eqref{exp:gy_uvw} and \eqref{exp:til_gy}, respectively. 
This implies that it suffices to verify the connection formulae \eqref{gy_to_tilde} and \eqref{tilde_to_gy} by virtue of \eqref{exp:gy_uvw} and \eqref{exp:til_gy}. 

Now we present the proof to \eqref{gy_to_tilde}. Based on the formulae \eqref{exp:gy_uvw} and the equalities in \eqref{rel:FGtau} as well as  the identity \eqref{id:bi1}, we have 
\begin{align*}
\sum_{j=1}^kg_j(1+y_j)^4=&\sum_{j=N+1-k}^Ng_{N+1-j}(1+y_{N+1-j})^4\\
%&=8\sum_{j=k}^N\frac{(U_j)^4}{W_{j-1}W_j}\\
=&16\sum_{j=N+1-k}^N\frac{(F_j^{(1,0)})^2}{F_j^{(1,1)}F_{j-1}^{(1,1)}}\\
=&16\sum_{j=N+1-k}^N\left(\frac{F_{j}^{(0,0)}}{F_{j-1}^{(1,1)}}-\frac{F_{j+1}^{(0,0)}}{F_j^{(1,1)}}\right)\\
=&16\frac{F_{N+1-k}^{(0,0)}}{F_{N-k}^{(1,1)}},
\end{align*}
which immediately results in the first equality in  \eqref{gy_to_tilde} with the help of the third formula in \eqref{exp:til_gy}.
%\begin{align*}
%1-\frac{16}{8+\sum_{j=1}^kg_j(1+y_j)^4}=
%\end{align*}
By calculating the additional term
\begin{align*}
\frac{1}{1+y_k}-\frac{1}{1+y_{k+1}}=\frac{2^{N-k-1}F_{N-k}^{(1,1)}}{\tau_{N+1-k}\tau_{N-k}},
\end{align*}
one can also confirm the second equality in  \eqref{gy_to_tilde} with the help of the equalities in \eqref{rel:FGtau} and the first formula in \eqref{exp:til_gy}.

Furthermore, since it is not hard to see that the maps \eqref{gy_to_tilde} and \eqref{tilde_to_gy} are invertible to each other, the proof is completed. 
\end{proof}

In addition to connecting the cubic and dual cubic string data, we are also able to establish the bridge between the wave functions for the discrete cubic string problem \eqref{eq:cubic} with Dirichlet-like boundary conditions \eqref{cubic:bound} and those for the discrete dual cubic string problem \eqref{eq:dualcubic} with the boundary \eqref{dualcubic:bound}. By recalling the definitions of the corresponding Weyl functions \eqref{cubic:weyl} and \eqref{dualcubic:weyl}, and comparing the partial fraction decompositions \eqref{weyl_dis_cubic} and \eqref{weyl_dis_dualcubic}, in order to ensure that they share the same spectral data, it is sufficent to require that
\begin{equation}\label{rel:wave}
\tilde\Phi(1;z)=T_N
\Phi(1;z),
\end{equation}
with
\begin{equation*}
%J=\begin{pmatrix}
%0&0&1\\
%0&1&0\\
%1&0&0
%\end{pmatrix},
%\qquad 
T_k=\begin{pmatrix}
\frac{1}{2z}+\frac{1}{\tilde l_k}&-\frac{1}{z}&\frac{1}{z}\\
-1&1&0\\
-z&0&0
\end{pmatrix},
\end{equation*}
where  the second and the third rows are clear, and to see the first row, one additionally needs to observe the equality
\begin{align*}
&\sum_{j=1}^N\sum_{k=1}^N\frac{\zeta_jb_kb_j}{(\zeta_k+\zeta_j)(z-\zeta_j)}=z\sum_{j=1}^N\sum_{k=1}^N\frac{b_kb_j}{(\zeta_k+\zeta_j)(z-\zeta_j)}-\sum_{j=1}^N\sum_{k=1}^N\frac{b_kb_j}{(\zeta_k+\zeta_j)},
\end{align*}
and the formula 
\begin{align*}
&\tilde l_N=\frac{1}{F_1^{(0,0)}+\frac{1}{2}}=\frac{2}{2\sum_{j=1}^N\sum_{k=1}^N\frac{b_kb_j}{(\zeta_k+\zeta_j)}+1}
\end{align*}
appeared in \eqref{exp::til_l}.
Furthermore, if $\Phi$ admit the transition rules 
$$
\Phi(y_{k+1}-)=L_k\Phi(y_{k}+),\qquad \Phi(y_{k}+)=G_k\Phi(y_{k}-),
$$
and $\tilde \Phi$ satisfy
$$
\tilde \Phi(\tilde y_{k+1}-)=\tilde L_k\tilde \Phi(\tilde y_{k}+),\qquad \tilde \Phi(\tilde y_{k}+)=\tilde G_k\tilde \Phi(\tilde y_{k}+),
$$
then it is not hard to obtain  the connection formulae 
\begin{align*}
\tilde\Phi(\tilde y_{k+1}-;z)&=\tilde G_{k+1}^{-1}\tilde L_{k+1}^{-1}\cdots\tilde G_N^{-1}\tilde L_N^{-1}T_NL_NG_N\cdots L_{k+1}G_{k+1}
\Phi(y_{k+1}-),\\
 \tilde\Phi(\tilde y_{k}+)&= \tilde L_k^{-1}\tilde G_{k+1}^{-1}\tilde L_{k+1}^{-1}\cdots\tilde G_N^{-1}\tilde L_N^{-1}T_NL_NG_N\cdots L_{k+1}G_{k+1} L_k
\Phi(y_{k}+).
\end{align*}

\begin{remark}
In \cite{hone2009explicit}, it was shown that there exists a nice duality between the discrete dual cubic string problem \eqref{eq:dualcubic} associated with Novikov multipeakons and a discrete cubic string problem with Neumann-like boundary conditions related to the derivative Burgers equation \cite{kohlenberg2007inverse}, which differs from our findings.  Our conclusion here is somewhat unexpected and mainly inspired by the newly observed bijection between DP and Novikov multipeakon ODEs in the following section.
%all of which are based on our previous works \cite{chang2018degasperis,chang2022hermite,chang2018application}.
\end{remark}

\section{Correspondence between DP and Novikov multipeakons}\label{sec:dpnv}
The DP and Novikov peakon dynamical systems can be regarded as negative flows obtained by isospectral deformations 
\begin{equation*}
\dot \zeta_j(t)=0,\qquad \dot b_j(t)=\frac{b_j(t)}{\zeta_j}
\end{equation*}
related to the cubic and dual cubic strings \cite{lundmark2005degasperis,hone2009explicit}, respectively. 
In this section, we establish a bijection between these two peakon dynamical systems, and we also make use of the bijective relation to produce some known and unknown results. We claim that the bijective relationship is somehow unexpected.
In fact, the DP peakon ODE system has been interpreted as 
an isospectral flow on a manifold cut out by determinant identities \cite{chang2018degasperis}, while it is more suitable to understand the Novikov peakon ODE system as that on a manifold cut out by 
Pfaffian identities \cite{chang2018application}, due to the fact that their respective hierarchies are associated with different types of infinite dimensional Lie algebras. 
\subsection{DP multipeakons}
The DP equation 
\begin{align}
m_t+(um)_x+2u_xm=0,\qquad m=u-u_{xx},\label{eq:DP}
\end{align}
 was discovered by Degasperis and Procesi \cite{degasperis-procesi} as a new equation satisfying certain asymptotic
integrability conditions. Later on, it was shown by Degasperis, Holm, and Hone \cite{degasperis2002new} to be indeed integrable  in sense of Lax pair, bi-Hamiltonian structure as well as infinitely many conservation laws.  
In particular, the DP equation admits the Lax pair
%\begin{subequations}
 \begin{align*}
 &(\partial_x-\partial_x^3)\psi=zm\psi,\\%\label{dp_lax_x}\\
 &\psi_t=[z^{-1}(1-\partial_x^2)+u_x-u\partial_x]\psi,%\label{dp_lax_t}
 \end{align*}
%\end{subequations}
which means that the DP equation can be formally derived as the compatibility of $\psi_{xxt}=\psi_{txx}$.

When the multipeakon ansatz 
\begin{align}\label{form:DPpeak}
u(x,t)=\sum_{k=1}^Nm_k(t)e^{-|x-x_k(t)|}
\end{align}
is taken into account, it follows from the second equality in \eqref{eq:DP} that $m$ can be regarded as a discrete measure
$$
m(x,t)=2\sum_{k=1}^Nm_k(t)\delta(x-x_k(t)).
$$
Consequently, the first equation of \eqref{eq:DP} is satisfied in a weak sense if the positions $(x_1,\ldots,x_N)$ and momenta $(m_1, \ldots , m_N)$ of the peakons obey the following system of $2N$ ODEs \cite{degasperis2002new,lundmark2005degasperis}:
\begin{subequations}\label{DP_eq:peakon}
\begin{align}
&\dot x_k=u(x_k)=\sum_{j=1}^Nm_je^{-|x_j-x_k|},\\
& \dot m_k=-2m_k\langle u_x\rangle(x_k)=2\sum_{j=1}^N\sgn(x_k-x_j)m_jm_ke^{-|x_j-x_k|}, 
\end{align}
\end{subequations}
where $\langle f \rangle(a)$ denotes the arithemetic average of left and right limits at the point $a$.
It has been shown in \cite{lundmark2005degasperis} that if the initial data $\{x_j(0),m_j(0)\}_{j=1}^N$ are in the phase space
$$
\mathcal{P}=\{\{x_k,m_k\}_{k=1}^N\ |\ -\infty=x_0<x_1<x_2<\cdots<x_N<x_{N+1}=\infty,\ \ \  m_k>0\},
$$
then $\{x_k(t),m_k(t)\}_{k=1}^N$ will uniquely exist and remain in the space $\mathcal{P}$ for all the time $t\in \bf R$ under the flow \eqref{DP_eq:peakon}. This constitutes a pure N-peakon solution that is unique and global. In fact, the solution can be explicitly expressed by use of the inverse spectral method. The following result is mainly due to Lundmark and Szmigielski in \cite[Th. 2.23]{lundmark2005degasperis}, and  it was later reformulated in \cite[Proof to Theorem 2.1]{chang2018degasperis}.

\begin{theorem}The DP equation admits a global N-peakon solution of the form \eqref{form:DPpeak} with
\begin{subequations}\label{sol:DP_FG}
\begin{align}
x_{N-k+1}=&\ln\frac{2F_k^{(1,0)}}{G_k^{(0,2)}}=\ln\frac{G_k^{(0,2)}}{F_{k-1}^{(2,1)}}=\frac{1}{2}\ln\frac{2F_k^{(1,0)}}{F_{k-1}^{(2,1)}}=\ln\frac{\tau_k^{(0)}}{\tau_{k-1}^{(1)}},\\
 m_{N-k+1}=&\frac{(G_{k}^{(0,2)})^2}{2F_{k}^{(1,1)}F_{k-1}^{(1,1)}}=\frac{F_{k}^{(1,0)}F_{k-1}^{(2,1)}}{F_{k}^{(1,1)}F_{k-1}^{(1,1)}}, 
\end{align}
\end{subequations}
where $F_k^{(i,j)}$ and $G_k^{(i,j)}$ are defined in Definition \ref{def:FG}
with the discrete measure 
$$d\mu(x;t)=\sum_{j=1}^Nb_j(t)\delta(x-\zeta_j)dx,$$ and the constants $\zeta_j$ and $b_j(t)$ satisfying
\begin{equation*}%\label{DP_evl:b}
0<\zeta_1<\zeta_2<\cdots<\zeta_N,\qquad 
\dot b_j(t)=\frac{b_j(t)}{\zeta_j}>0.
\end{equation*}
\end{theorem}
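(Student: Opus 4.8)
The plan is to solve the peakon ODE system \eqref{DP_eq:peakon} by the inverse spectral method, following Lundmark and Szmigielski and the determinantal reformulation of \cite{chang2018degasperis}. First I would pass from the spatial part $(\partial_x-\partial_x^3)\psi=zm\psi$ of the DP Lax pair on the line, with the discrete measure $m=2\sum_k m_k\delta(x-x_k)$, to the discrete cubic string \eqref{eq:cubic} with Dirichlet-like boundary conditions \eqref{cubic:bound} via an explicit Liouville-type transformation. This transformation sends the peakon data $\{x_k,m_k\}$ to string data $\{y_k,g_k\}$ (roughly $y_k=\tanh x_k$ together with a rescaling of the momenta), and it maps the phase space $\puresector$ bijectively onto the set of discrete cubic strings with $N$ positive point masses in $(-1,1)$; the forward spectral map of the cubic string then produces the spectral data $\{b_j,\zeta_j\}$.

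Second, I would determine the evolution induced on $\{b_j,\zeta_j\}$ by the DP flow. Using the temporal part $\psi_t=[z^{-1}(1-\partial_x^2)+u_x-u\partial_x]\psi$ of the Lax pair one checks, as in \cite{lundmark2005degasperis}, that the spectrum is frozen, $\dot\zeta_j=0$, and that the residues of the Weyl functions \eqref{cubic:weyl} obey the linear ODE $\dot b_j=b_j/\zeta_j$, that is, $b_j(t)=b_j(0)e^{t/\zeta_j}$. This is where the integrability of the DP equation is actually used, and it is the heart of the argument: it replaces the nonlinear system \eqref{DP_eq:peakon} by a trivial linear flow on the spectral side.

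Third, I would insert this evolution into the explicit inverse spectral formulae \eqref{exp:gy_uvw}, which recover $g_k(t)$ and $y_k(t)$ from $\{b_j(t),\zeta_j\}$ in terms of the Pfaffians $\tau_k^{(l)}$ --- equivalently the Cauchy bimoment determinants $F_k^{(i,j)},G_k^{(i,j)}$ --- built from the time-dependent discrete measure $d\mu(x;t)=\sum_j b_j(t)\delta(x-\zeta_j)dx$. Undoing the Liouville transformation then yields closed forms for $x_{N-k+1}(t)$ and $m_{N-k+1}(t)$. The several equivalent expressions in \eqref{sol:DP_FG} follow from the determinant--Pfaffian dictionary of Section \ref{sec:det_pf}: the identities of Corollary \ref{coro:FG} (in particular $(G_k^{(0,2)})^2=2F_k^{(1,0)}F_{k-1}^{(2,1)}$), the relations \eqref{rel:FGtau} of Proposition \ref{prop:FGtau}, and the bilinear identity \eqref{id:bi1} of Proposition \ref{lem:bi}, which convert the single ratio coming out of \eqref{exp:gy_uvw} into each of the listed alternatives.

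Finally, one must confirm that these functions genuinely solve \eqref{DP_eq:peakon} and remain in $\puresector$ for all time. Once the spectral flow $\dot b_j=b_j/\zeta_j$ is known, solvability is automatic from the linearizing property of the spectral map; alternatively it can be verified directly by differentiating the determinantal formulae and repeatedly applying \eqref{id:bi1}, which is the computation-heavy route of \cite{chang2018degasperis}. For global existence, note that $\zeta_j>0$ and $b_j(0)>0$ give $b_j(t)>0$ for every $t\in\R$, so $d\mu(\cdot;t)$ is a positive measure supported at exactly $N$ points; by Proposition \ref{prop:FG} and the positivity statement in Proposition \ref{prop:FGtau} all the determinants $F_k^{(i,j)},G_k^{(i,j)}$ and Pfaffians $\tau_k^{(l)}$ with $1\le k\le N$ are strictly positive, whence $m_{N-k+1}(t)>0$ and $x_1(t)<\cdots<x_N(t)$ are finite for all $t$ --- the orbit never leaves $\puresector$. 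I expect the main obstacle to be the second step: extracting the precise linear law $\dot b_j=b_j/\zeta_j$ from the temporal Lax equation in the distributional peakon setting. Everything after that is algebra with the identities assembled in Section \ref{sec:det_pf}.
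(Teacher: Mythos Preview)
Your proposal is correct and follows precisely the route the paper relies on: the paper does not give an independent proof of this theorem but attributes it to Lundmark--Szmigielski \cite[Th.~2.23]{lundmark2005degasperis} with the determinantal reformulation of \cite[Proof to Theorem~2.1]{chang2018degasperis}, and your outline --- Liouville transformation to the discrete cubic string, linearization of the spectral flow to $\dot b_j=b_j/\zeta_j$, the explicit inverse spectral formulae \eqref{exp:gy_uvw}, and then the determinant--Pfaffian identities of Section~\ref{sec:det_pf} --- is exactly that argument. One minor correction: the Liouville map is $y_k=\tanh(x_k/2)$ and $g_k=8m_k\cosh^4(x_k/2)$ (see the Remark following the theorem), not $y_k=\tanh x_k$.
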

\begin{remark}  The DP multipeakon data $\{x_k,m_k\}_{k=1}^N$  in the space $\mathcal{P}$ and the discrete cubic string data  $\{y_k,g_k\}_{k=1}^N$ satisfying  $-1=y_0<y_1<\cdots <y_N<y_{N+1}=1$ and $g_k>0$ form a bijection according to
$$g_k=8m_k\cosh^4  \frac{x_k}{2}, \ \ \ y_k=\tanh \frac{x_k}{2}.$$
See \cite{lundmark2005degasperis} for more details.
\end{remark}
%In other words, Theorem \ref{th:DP_solution} describes a global n-peakon solution.

\subsection{Novikov multipeakons}\label{subsec:NVpeakon}
The Novikov equation
 \begin{equation}\label{eq:NV}
{\tilde m}_{t}+{\tilde m}_{\tilde x}{\tilde u}^2+3{\tilde m}{\tilde u}{\tilde u}_{\tilde x}=0,\qquad {\tilde m}={\tilde u}-{\tilde u}_{\tilde x \tilde x},
\end{equation}
was firstly derived by V. Novikov \cite{novikov2009generalisations} using a perturbative symmetry approach and firstly published in the paper by Hone and Wang \cite{hone2008integrable}, who provide its Lax pair 
 \begin{equation*}
D_{\tilde x} \tilde\Psi=U
\tilde\Psi, \qquad D_{t} \tilde\Psi=V\tilde\Psi %\label{NV_Lax}
\end{equation*}
with 
\begin{equation*}
U=\begin{pmatrix}
0&\lambda \tilde m &1\\
0&0&\lambda {\tilde m}\\
1&0&0
\end{pmatrix},\quad
V=\begin{pmatrix}
-{\tilde u}{\tilde u}_{\tilde x}&{\lambda }^{-1}{{\tilde u}_{\tilde x}}-\lambda {\tilde u}^2{\tilde m}& {\tilde u}_{\tilde x}^2\\
{\lambda }^{-1}{{\tilde u}}&-\lambda^{-2} &-{\lambda }^{-1}{{\tilde u}_{\tilde x}}-\lambda {\tilde u}^2\tilde m\\
-{\tilde u}^2&{\lambda }^{-1}{{\tilde u}}& {\tilde u}{\tilde u}_{\tilde x}
\end{pmatrix}.
\end{equation*}
In fact, the compatibility condition 
$$(D_{\tilde x}D_{t}- D_{t}D_{\tilde x} )\tilde\Psi= 0$$ 
gives the zero curvature condition
$$ U_{t}-V_{\tilde x}+[U,V]=0,$$
which is exactly the Novikov equation.

The Novikov equation  \eqref{eq:NV} 
admits the multipeakon solution of the form
\begin{equation}\label{form:NVpeak}
{\tilde u}(\tilde x,\tilde t)=\sum_{k=1}^N \tilde m_k(t)e^{-|\tilde x-{\tilde x}_k(t)|}
\end{equation}
in some weak sense if the positions and momenta satisfy the following ODE system:
\begin{subequations}
\begin{align}
&\dot {\tilde x}_{k}={\tilde u}({\tilde x}_k)^2=\left(\sum_{j=1}^N\tilde m_je^{-|{\tilde x}_j-{\tilde x}_k|}\right)^2,\\
& \dot {\tilde m}_{k}=-\tilde m_k{\tilde u}({\tilde x}_k)\langle {\tilde u}_{\tilde x}\rangle ({\tilde x}_k)=\left(\sum_{j=1}^N\tilde m_je^{-|{\tilde x}_j-{\tilde x}_k|}\right)\left(\sum_{j=1}^N\sgn({\tilde x}_k-{\tilde x}_j)\tilde m_je^{-|{\tilde x}_j-{\tilde x}_k|}\right),
\end{align}\label{NV_eq:peakon} 
\end{subequations}
for $1\leq k \leq N$. As is shown in \cite{hone2009explicit}, the pure multipeakon solution of the Novikov equation also possesses a global and unique existence. More exactly, if initial data $\{{\tilde x}_k(0),\tilde m_k(0)\}_{k=1}^N$ are in the phase space
$$
\mathcal{Q}=\{\{{\tilde x}_k,\tilde m_k\}_{k=1}^N\ |-\infty={\tilde x}_0<\ {\tilde x}_1<{\tilde x}_2<\cdots<{\tilde x}_N<{\tilde x}_{N+1}=\infty,\ \ \  \tilde m_k>0\},
$$
then $\{{\tilde x}_k(t),\tilde m_k(t)\}_{k=1}^N$ will uniquely exist and remain in the interior of $\mathcal{Q}$ for all the time $t\in \bf R$ under the flow \eqref{NV_eq:peakon}. The explicit multipeakon formulae can be constructed  by use of inverse spectral methods. The following result was originally derived in \cite{hone2009explicit}, and was later reformulated in terms Pfaffians in \cite{chang2018application,chang2022hermite}.

\begin{theorem}
The NV equation admits a global N-peakon solution of the form \eqref{form:NVpeak} with
\begin{equation}\label{NV_form_yn}
{\tilde x}_{N-k+1}=\frac{1}{2}\ln\frac{Z_k}{W_{k-1}},\qquad \tilde m_{N-k+1}=\frac{\sqrt{Z_kW_{k-1}}}{U_kU_{k-1}}.
\end{equation}
Here 
$$
U_k=\tau_{k}^{(0)},\quad W_k=\tau_{k}^{(1)}\tau_{k}^{(0)}-\tau_{k-1}^{(1)}\tau_{k+1}^{(0)}=2^kF_k^{(1,1)},\quad Z_k=\tau_{k}^{(-1)}\tau_{k}^{(0)}-\tau_{k+1}^{(-1)}\tau_{k-1}^{(0)}=2^kF_k^{(0,0)},
$$
where $\tau_k^{(j)}$ denote Pfaffians in Definition \ref{def:tau} with the discrete measure 
$$d\mu(x;t)=\sum_{j=1}^Nb_j(t)\delta(x-\zeta_j)dx$$
satisfying 
\begin{equation*}%\label{NV_evl:b}
0<\zeta_1<\zeta_2<\cdots<\zeta_N,\qquad 
b_j(t)=b_j(0)e^{\frac{t}{\zeta_j}}>0.
\end{equation*}

\end{theorem}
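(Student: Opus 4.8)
The plan is to argue by the inverse spectral method, exploiting the correspondence between Novikov pure multipeakons and the discrete dual cubic string \eqref{eq:dualcubic} with the boundary \eqref{dualcubic:bound} from \cite{hone2009explicit}. First I would recall the Liouville-type substitution that reduces the spatial part $D_{\tilde x}\tilde\Psi=U\tilde\Psi$ of the Novikov Lax pair, restricted to the peakon sector, to the discrete dual cubic string on $(-1,1)$; this sets up a bijection between the peakon data $\{\tilde x_k,\tilde m_k\}_{k=1}^N$ in $\mathcal{Q}$ and the string data $\{\tilde y_k,\tilde g_k\}_{k=1}^N$ with $-1=\tilde y_0<\cdots<\tilde y_{N+1}=1$ and $\tilde g_k>0$ (a Novikov analogue of the DP dictionary $g_k=8m_k\cosh^4(x_k/2)$, $y_k=\tanh(x_k/2)$ recorded above), and the Weyl functions \eqref{dualcubic:weyl} together with \eqref{weyl_dis_dualcubic} attach to any such string the spectral data $\{\zeta_j,b_j\}_{j=1}^N$ with $0<\zeta_1<\cdots<\zeta_N$ and $b_j>0$.

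Next I would show that, under this bijection, the Novikov peakon flow \eqref{NV_eq:peakon} is conjugated to the isospectral deformation $\dot\zeta_j=0$, $\dot b_j=b_j/\zeta_j$; this is where the temporal part $D_t\tilde\Psi=V\tilde\Psi$ of the Lax pair enters, and it is essentially the computation carried out in \cite{hone2009explicit}. Integrating $\dot b_j=b_j/\zeta_j$ gives $b_j(t)=b_j(0)e^{t/\zeta_j}>0$ for all $t$. Then I would invoke the solution of the inverse spectral problem for the dual cubic string — \cite[Theorem 7.2]{hone2009explicit}, as reformulated in \cite[Theorem 5.10]{chang2022hermite}, i.e. the formulae \eqref{exp:til_gy} — to recover $\{\tilde y_k,\tilde g_k\}$ in closed form in terms of the Pfaffians $\tau_k^{(l)}$ (equivalently the determinants $F_k^{(i,j)}$, via Proposition \ref{prop:FGtau}) built from $d\mu(x;t)=\sum_j b_j(t)\delta(x-\zeta_j)\,dx$, and translating back through the substitution of the first step would produce the stated formulae \eqref{NV_form_yn}. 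The quantities $U_k,W_k,Z_k$ in the statement are precisely $\tau_k^{(0)}$ and the $F_k^{(l,l)}$-combinations recorded in Proposition \ref{prop:FGtau}, so identifying them with the right-hand sides of the recovery formulae is only bookkeeping. Finally, the global existence and the sign and ordering assertions follow because $b_j(t)>0$ for every $t\in\mathbf{R}$: by the positivity proposition for the $\tau_k^{(l)}$ following Definition \ref{def:tau} and by Proposition \ref{prop:FG}, the relevant $\tau_k^{(l)}$, $F_k^{(i,j)}$, $G_k^{(i,j)}$ remain strictly positive for $1\leq k\leq N$, so $\tilde m_{N-k+1}>0$ and $\tilde x_1<\tilde x_2<\cdots<\tilde x_N$ persist for all time, which also yields uniqueness within the interior of $\mathcal{Q}$.

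The main obstacle is the linearization step: checking that the peakon ODEs \eqref{NV_eq:peakon} really do become $\dot b_j=b_j/\zeta_j$ under the forward spectral map — or, if one prefers a direct verification, differentiating the right-hand sides of \eqref{NV_form_yn}. For the latter route one uses that $\dot b_j=b_j/\zeta_j$ forces $\dot\beta_j=\beta_{j-1}$, hence $\dot I_{i,j}=I_{i-1,j}+I_{i,j-1}$ and $\dot J_{i,j}=J_{i-1,j}+J_{i,j-1}$, and consequently derivative rules for $\tau_k^{(l)}$ and $F_k^{(i,j)}$ expressible through the auxiliary determinants $E_k^{(i,j)}$ of Definition \ref{def:FG} (this is exactly the mechanism recorded for $E_k^{(i,j)}$ in \cite{chang2018degasperis}). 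Matching the resulting expressions against \eqref{NV_eq:peakon}, and against the exponential $e^{-|\tilde x_j-\tilde x_k|}$ structure appearing there, then hinges on the bilinear identities assembled in Section \ref{sec:det_pf} — Proposition \ref{prop:FGtau}, Corollary \ref{coro:FG}, Proposition \ref{lem:bi}, and the new relations of Subsection \ref{subsec:detpf} — together with the partial-fraction structure \eqref{weyl_dis_dualcubic} of the Weyl functions. Organizing these into the precise form of \eqref{NV_eq:peakon} is the technical heart of the argument, and coincides with the content of \cite{hone2009explicit,chang2018application,chang2022hermite}.
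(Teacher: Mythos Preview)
Your outline is correct in spirit, but note that the paper itself does not supply a proof of this theorem: it simply quotes the result as known, attributing the original derivation to \cite{hone2009explicit} and the Pfaffian reformulation to \cite{chang2018application,chang2022hermite}. Your sketch accurately reconstructs the inverse-spectral argument of those references --- Liouville reduction to the discrete dual cubic string, linearization of the flow to $\dot b_j=b_j/\zeta_j$, explicit inversion via \eqref{exp:til_gy}, and global existence from the positivity of the Pfaffians/determinants --- so there is no discrepancy to report.
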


\begin{remark}
In the pure peakon sector, one has $\tilde m(\tilde x)=2\sum_{k=1}^N\tilde m_k\delta(\tilde x-\tilde x_k),\, \tilde m_k>0$, and there exists a one-to-one correspondence between the Novikov multipeakon data   in the space
$\mathcal{Q}$ and  the discrete dual cubic string data satisfying $-1=\tilde y_0<\tilde y_1<\cdots <\tilde y_N<\tilde y_{N+1}=1$ and $\tilde g_k>0$. The one-to-one map is given by
\begin{equation*}%\label{map:xm_yg}
 \tilde g_{i}=2\tilde m_k\cosh {\tilde x_k},\quad \tilde y_k=\tanh \tilde x_k.
\end{equation*}
See \cite{hone2009explicit} for more details.
\end{remark}
%$$\tilde g=\sum_{i=1}^N\tilde g_{i}\delta(\tilde y-\tilde y_i),\qquad \tilde y=\tanh \tilde x,\qquad$$

\subsection{One-to-one correspondence} In this subsection, we establish a bijection between the DP peakon trajectory \eqref{DP_eq:peakon} in the space  $\mathcal{P}$ and the NV peakon trajectory \eqref{NV_eq:peakon} in  $\mathcal{Q}$.

\begin{theorem}\label{th:dpnv}
If a set of time-dependent data $\left\{x_k(t),m_k(t)\right\}_{k=1}^N$ in the space $$
\mathcal{P}=\{\{x_k,m_k\}_{k=1}^N\ |\ -\infty=x_0<x_1<x_2<\cdots<x_N<x_{N+1}=\infty,\ \ \  m_k>0\},
$$ satisfy the DP peakon ODE system \eqref{DP_eq:peakon}, then  $\{{\tilde x}_k(t),\tilde m_k(t)\}_{k=1}^N$ obtained by
\begin{align}\label{xm_to_yn}
{\tilde x}_k=\frac{1}{2}\ln\left(\sum_{j=1}^km_je^{2x_j}\right),\qquad \tilde m_k=\left(e^{-x_k}-e^{-x_{k+1}}\right)e^{{\tilde x}_k}
\end{align}
lie within the space 
$$
\mathcal{Q}=\{\{{\tilde x}_k,\tilde m_k\}_{k=1}^N\ |-\infty={\tilde x}_0<\ {\tilde x}_1<{\tilde x}_2<\cdots<{\tilde x}_N<{\tilde x}_{N+1}=\infty,\ \ \  \tilde m_k>0\},
$$
 and satisfy the Novikov peakon ODE system \eqref{NV_eq:peakon}. 
 
Conversely, given a set of data  $\{{\tilde x}_k(t),\tilde m_k(t)\}_{k=1}^N$ in the space $\mathcal{Q}$ satisfying the Novikov peakon ODE system \eqref{NV_eq:peakon}, one can construct $\left\{x_k(t),m_k(t)\right\}_{k=1} 
^N$  according to
\begin{align}\label{yn_to_xm}
 x_k=-\ln\left(\sum_{j=k}^N\tilde m_je^{-{\tilde x}_j}\right),\qquad m_k=\left(e^{2{\tilde x}_k}-e^{2{\tilde x}_{k-1}}\right)e^{-2 x_k}
\end{align}
so that $\{{ x}_k(t), m_k(t)\}_{k=1}^N$ belong to $\mathcal{P}$ and satisfy the DP peakon ODE system \eqref{DP_eq:peakon}.

Consequently, the DP peakon trajectory \eqref{DP_eq:peakon} in the space  $\mathcal{P}$ and the NV peakon trajectory \eqref{NV_eq:peakon} in  $\mathcal{Q}$ form a bijection.
\end{theorem}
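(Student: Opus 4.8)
The plan is to reduce the claimed bijection to a $t$-independent algebraic identity between the two inverse-spectral solution formulas, so that no direct analysis of the ODEs is needed. Recall from \cite{lundmark2005degasperis,hone2009explicit} that both the DP flow \eqref{DP_eq:peakon} on $\mathcal{P}$ and the Novikov flow \eqref{NV_eq:peakon} on $\mathcal{Q}$ are linearized by their respective inverse spectral transforms: a global trajectory is in one-to-one correspondence with spectral data $\bigl(\{\zeta_j\}_{j=1}^N,\{b_j(0)\}_{j=1}^N\bigr)$ subject to $0<\zeta_1<\cdots<\zeta_N$, $b_j(0)>0$, with time evolution $\dot\zeta_j=0$, $b_j(t)=b_j(0)e^{t/\zeta_j}$; moreover the DP and Novikov spectral-data spaces coincide, and the measure $d\mu(x;t)=\sum_j b_j(t)\delta(x-\zeta_j)\,dx$ entering \eqref{sol:DP_FG} and \eqref{NV_form_yn} is literally the same. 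Hence it suffices to prove that the $\{x_k,m_k\}$ of \eqref{sol:DP_FG} and the $\{\tilde x_k,\tilde m_k\}$ of \eqref{NV_form_yn} built from one and the same $d\mu$ always obey the relations \eqref{xm_to_yn}; then \eqref{xm_to_yn} is exactly the composition of the DP forward spectral map with the Novikov inverse spectral map and is therefore a bijection, while \eqref{yn_to_xm} --- which I will check is its algebraic inverse --- realizes the reverse composition.

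First I would verify the position identity in \eqref{xm_to_yn}. Combining the $F$,$G$ expressions in \eqref{sol:DP_FG} (the $F_{\ell-1}^{(2,1)}$ factors cancel) gives $m_{N-\ell+1}e^{2x_{N-\ell+1}}=2\bigl(F_\ell^{(1,0)}\bigr)^2/\bigl(F_\ell^{(1,1)}F_{\ell-1}^{(1,1)}\bigr)$. The bilinear relation \eqref{id:bi1} with $i=j=0$, together with the symmetry $F_\ell^{(0,1)}=F_\ell^{(1,0)}$, reads $F_{\ell+1}^{(0,0)}F_{\ell-1}^{(1,1)}=F_\ell^{(0,0)}F_\ell^{(1,1)}-\bigl(F_\ell^{(1,0)}\bigr)^2$, and dividing by $F_\ell^{(1,1)}F_{\ell-1}^{(1,1)}$ yields $\bigl(F_\ell^{(1,0)}\bigr)^2/\bigl(F_\ell^{(1,1)}F_{\ell-1}^{(1,1)}\bigr)=F_\ell^{(0,0)}/F_{\ell-1}^{(1,1)}-F_{\ell+1}^{(0,0)}/F_\ell^{(1,1)}$. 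Summing over $\ell=N-k+1,\dots,N$ makes the series telescope; since $\mu$ has exactly $N$ points of increase we have $F_{N+1}^{(0,0)}=0$ by Proposition~\ref{prop:FG}, so $\sum_{j=1}^k m_je^{2x_j}=2F_{N-k+1}^{(0,0)}/F_{N-k}^{(1,1)}$, which by \eqref{NV_form_yn} and the identifications $Z_k=2^kF_k^{(0,0)}$, $W_k=2^kF_k^{(1,1)}$ equals $Z_{N-k+1}/W_{N-k}=e^{2\tilde x_k}$. (This is the same telescoping already used in the proof in Section~\ref{sec:cubic}.)

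Next I would verify the momentum identity. Using $x_{N-\ell+1}=\ln\bigl(\tau_\ell^{(0)}/\tau_{\ell-1}^{(1)}\bigr)$ from \eqref{sol:DP_FG} and the definitions $U_n=\tau_n^{(0)}$, $W_n=\tau_n^{(1)}\tau_n^{(0)}-\tau_{n-1}^{(1)}\tau_{n+1}^{(0)}$ of \eqref{NV_form_yn}, one gets $e^{-x_k}-e^{-x_{k+1}}=\bigl(\tau_{N-k}^{(1)}\tau_{N-k}^{(0)}-\tau_{N-k-1}^{(1)}\tau_{N-k+1}^{(0)}\bigr)/\bigl(\tau_{N-k+1}^{(0)}\tau_{N-k}^{(0)}\bigr)=W_{N-k}/(U_{N-k+1}U_{N-k})$; multiplying by $e^{\tilde x_k}=\bigl(Z_{N-k+1}/W_{N-k}\bigr)^{1/2}$ (from the first identity) gives $\sqrt{Z_{N-k+1}W_{N-k}}/(U_{N-k+1}U_{N-k})=\tilde m_k$, again by \eqref{NV_form_yn}. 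The endpoint cases $k=N$ (where $x_{N+1}=+\infty$ and $\tau_{-1}^{(1)}=0$) and $k=1$ are immediate. Finally, \eqref{yn_to_xm} is the algebraic inverse of \eqref{xm_to_yn}: from $\tilde m_j e^{-\tilde x_j}=e^{-x_j}-e^{-x_{j+1}}$ and $x_{N+1}=+\infty$ a telescoping sum gives $\sum_{j=k}^N\tilde m_j e^{-\tilde x_j}=e^{-x_k}$, and $e^{2\tilde x_k}-e^{2\tilde x_{k-1}}=m_k e^{2x_k}$ gives $m_k=\bigl(e^{2\tilde x_k}-e^{2\tilde x_{k-1}}\bigr)e^{-2x_k}$; the reverse composition is checked the same way. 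Combining everything, \eqref{xm_to_yn} and \eqref{yn_to_xm} are mutually inverse bijections intertwining the DP and Novikov flows, which is the assertion.

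The main obstacle is conceptual rather than computational: a priori \eqref{xm_to_yn} couples the two peakon configurations at each instant, and checking it by differentiating along the flows \eqref{DP_eq:peakon} and \eqref{NV_eq:peakon} is cumbersome; the decisive observation is that, because the two flows share their isospectral data, the relation descends to the time-independent determinant/Pfaffian identities above, where the precise shape of the bilinear relation \eqref{id:bi1} and the vanishing of the top determinant $F_{N+1}^{(0,0)}=0$ are exactly what makes the sums telescope. A secondary nuisance is the index bookkeeping forced by the order-reversing labelling $k\mapsto N-k+1$ and the boundary indices $0$ and $N+1$. An alternative, more self-contained route avoids inverse spectral theory altogether: one verifies the Novikov system \eqref{NV_eq:peakon} for the data \eqref{xm_to_yn} directly, the key step being the closed form $\tilde u(\tilde x_k)=e^{-\tilde x_k}\sum_{j=1}^k m_j e^{x_j}$ obtained by Abel summation and telescoping from the DP system, and the DP system \eqref{DP_eq:peakon} for \eqref{yn_to_xm} symmetrically; this is correct but longer.
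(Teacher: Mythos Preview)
Your proposal is correct and takes a genuinely different route from the paper. The paper proves the theorem by direct verification: it substitutes \eqref{xm_to_yn} into the Novikov system \eqref{NV_eq:peakon}, uses the DP evolution \eqref{DP_eq:peakon} for $\dot x_k,\dot m_k$, and shows both sides reduce to the closed form $\bigl(\sum_{j\le k}m_je^{x_j}\bigr)^2$ after lengthy summation manipulations (and similarly for $\dot{\tilde m}_k$ and for the converse direction). Your argument instead composes the two inverse-spectral bijections from \cite{lundmark2005degasperis,hone2009explicit}, reducing the claim to the purely algebraic identity between the explicit solution formulas \eqref{sol:DP_FG} and \eqref{NV_form_yn}; the telescoping via \eqref{id:bi1} and $F_{N+1}^{(0,0)}=0$ that you carry out is in fact exactly the computation the paper performs \emph{afterwards}, as the proof of Corollary~4.4 (and in the same spirit as the proof of Theorem~3.1).

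What each approach buys: yours is shorter, conceptually explains the bijection as ``same spectral data, same clock,'' and automatically delivers membership in $\mathcal{Q}$ and global existence from the cited theorems; the price is that the proof is not self-contained, relying on the full forward/inverse spectral bijections for both equations. The paper's direct ODE computation is self-contained and would survive in settings where the inverse spectral picture is unavailable (e.g.\ outside the pure-peakon sector), but is heavier. It is worth noting that the paper does mention your alternative is essentially equivalent to its Corollary~4.4, and conversely you mention the paper's method as your ``alternative, more self-contained route''---so the two write-ups simply swap primary and secondary arguments.
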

\begin{proof}
It is sufficient to provide a proof in a local time since these two pure peakon flows exist globally.

We first present the proof of the map from the DP peakon trajectory to the NV peakon trajectory. If $\left\{x_k,m_k\right\}_{k=1}^N$ are in $\mathcal{P}$, then it is easy to see from \eqref{xm_to_yn} that $\left\{{\tilde x}_k,\tilde m_k\right\}_{k=1}^N$  lie within $\mathcal{Q}$.  In order to prove that $\left\{{\tilde x}_k(t),\tilde m_k(t)\right\}_{k=1}^N$ satisfy the Novikov peakon ODE system \eqref{NV_eq:peakon}, it is sufficient to demonstrate the equalities
\begin{subequations}\label{pf_nv_yn1}
\begin{align}
%\frac{\sum_{i=1}^k(\dot m_i+2m_i\dot x_i)e^{2x_i}}{\sum_{i=1}^km_ie^{2x_i}}\\
&\frac{d}{dt}\left(\frac{1}{2}\ln\left(\sum_{i=1}^km_ie^{2x_i}\right)\right)\nonumber\\ 
=&\left(\frac{\sum_{j=1}^k\left(e^{-x_j}-e^{-x_{j+1}}\right)\sum_{i=1}^jm_ie^{2x_i}}{\sqrt{\sum_{i=1}^km_ie^{2x_i}}}+\sum_{j=k+1}^N\left(e^{-x_j}-e^{-x_{j+1}}\right)\sqrt{\sum_{i=1}^km_ie^{2x_i}}\right)^2,\label{pf_nv_y1}\\
&\frac{d}{dt}\ln\left(e^{-x_k}-e^{-x_{k+1}}\right)+\frac{d}{dt}\left(\frac{1}{2}\ln\left(\sum_{j=1}^km_je^{2x_j}\right)\right)\nonumber \\
=&\left(\frac{\sum_{j=1}^k\left(e^{-x_j}-e^{-x_{j+1}}\right)\sum_{i=1}^jm_ie^{2x_i}}{\sqrt{\sum_{i=1}^km_ie^{2x_i}}}+\sum_{j=k+1}^N\left(e^{-x_j}-e^{-x_{j+1}}\right)\sqrt{\sum_{i=1}^km_ie^{2x_i}}\right)\nonumber\\
&\left(\frac{\sum_{j=1}^{k-1}\left(e^{-x_j}-e^{-x_{j+1}}\right)\sum_{i=1}^jm_ie^{2x_i}}{\sqrt{\sum_{i=1}^km_ie^{2x_i}}}-\sum_{j=k+1}^N\left(e^{-x_j}-e^{-x_{j+1}}\right)\sqrt{\sum_{i=1}^km_ie^{2x_i}}\right)\label{pf_nv_n1},
\end{align}
\end{subequations}
which is derived by substituting \eqref{xm_to_yn} into \eqref{NV_eq:peakon}. Under the assumption that $\left\{x_k(t),m_k(t)\right\}_{k=1}^N$ satisfy the DP peakon ODE system \eqref{DP_eq:peakon}, \eqref{pf_nv_y1} becomes
 \begin{align*}
& \sum_{i=1}^k\left(2m_i\sum_{j=1}^{i-1}m_je^{x_j-x_i}+m_i^2\right)e^{2x_i}\\
=& \left(\sum_{j=1}^k\left(e^{-x_j}-e^{-x_{j+1}}\right)\sum_{i=1}^jm_ie^{2x_i}+\sum_{j=k+1}^N\left(e^{-x_j}-e^{-x_{j+1}}\right)\sum_{i=1}^km_ie^{2x_i}\right)^2.
 \end{align*}
 It is not hard to prove that the above equality holds. In fact, both sides of the equation can be simplified into $\left(\sum_{j=1}^km_je^{x_j}\right)^2$. 
 
 As for \eqref{pf_nv_n1},  by subtracting \eqref{pf_nv_y1} from \eqref{pf_nv_n1}, one can arrive at 
 \begin{align*}
 &\frac{d}{dt}\ln\left(e^{-x_k}-e^{-x_{k+1}}\right)=\frac{-\dot x_ke^{-x_k}+\dot x_{k+1}e^{-x_{k+1}}}{e^{-x_k}-e^{-x_{k+1}}}\nonumber \\
=&-\left(\frac{\sum_{j=1}^k\left(e^{-x_j}-e^{-x_{j+1}}\right)\sum_{i=1}^jm_ie^{2x_i}}{\sqrt{\sum_{i=1}^km_ie^{2x_i}}}+\sum_{j=k+1}^N\left(e^{-x_j}-e^{-x_{j+1}}\right)\sqrt{\sum_{i=1}^km_ie^{2x_i}}\right)\nonumber\\
&\left(\frac{\left(e^{-x_k}-e^{-x_{k+1}}\right)\sum_{i=1}^km_ie^{2x_i}}{{\sqrt{\sum_{i=1}^km_ie^{2x_i}}}}+2\sum_{j=k+1}^N\left(e^{-x_j}-e^{-x_{j+1}}\right)\sqrt{\sum_{i=1}^km_ie^{2x_i}}\right).
 \end{align*}
In the case that $\left\{x_k(t),m_k(t)\right\}_{k=1}^N$ undergo the evolution \eqref{DP_eq:peakon}, the above formula can be simplified into
\begin{align*}
&\frac{-\sum_{j=1}^km_je^{x_j-2x_k}+\sum_{j=1}^km_je^{x_j-2x_{k+1}}}{e^{-x_k}-e^{-x_{k+1}}}\cdot \left(\sum_{i=1}^km_ie^{2x_i}\right)\\
=&-\left(\sum_{j=1}^km_je^{x_j}\right)\left(\left(e^{-x_k}-e^{-x_{k+1}}\right)\sum_{i=1}^km_ie^{2x_i}+2\sum_{j=k+1}^N\left(e^{-x_j}-e^{-x_{j+1}}\right){\sum_{i=1}^km_ie^{2x_i}}\right),
\end{align*}
which is nothing but an identity. Therefore, the map from the  DP to  Novikov peakon trajectories is confirmed.

 Conversely, if
  $\left\{{\tilde x}_k,\tilde m_k\right\}_{k=1}^N$ are in $\mathcal{Q}$, then it easily follows from \eqref{yn_to_xm} that $\left\{x_k,m_k\right\}_{k=1}^N$  lie within $\mathcal{P}$.   What we are left to prove  is that $\left\{x_k(t),m_k(t)\right\}_{k=1}^N$ generated by \eqref{yn_to_xm} undergo the DP peakon trajectory under the assumption of $\left\{{\tilde x}_k(t),\tilde m_k(t)\right\}_{k=1}^N$ satisfying the Novikov peakon ODE system \eqref{NV_eq:peakon}.
In fact, after the substitution \eqref{yn_to_xm}, what we need to prove is 
\begin{subequations}\label{pf_dp_xm1}
\begin{align}
%\frac{\sum_{i=1}^k(\dot m_i+2m_i\dot x_i)e^{2x_i}}{\sum_{i=1}^km_ie^{2x_i}}\\
&-\frac{d}{dt}\left(\ln\left(\sum_{j=k}^N\tilde m_je^{-{\tilde x}_j}\right)\right)\nonumber\\
&=\sum_{j=1}^k\left(e^{2{\tilde x}_j}-e^{2{\tilde x}_{j-1}}\right)\left(\sum_{i=j}^N\tilde m_ie^{-{\tilde x}_i}\right)\left(\sum_{i=k}^N\tilde m_ie^{-{\tilde x}_i}\right)\nonumber\\
&\qquad\qquad\,\,+\sum_{j=k+1}^N\left(e^{2{\tilde x}_j}-e^{2{\tilde x}_{j-1}}\right)\left(\sum_{i=j}^N\tilde m_ie^{-{\tilde x}_i}\right)^3\bigg/\left(\sum_{i=k}^N\tilde m_ie^{-{\tilde x}_i}\right),\label{pf_dp_x1}\\
&\frac{d}{dt}\ln\left(e^{2{\tilde x}_k}-e^{2{\tilde x}_{k-1}}\right)+2\frac{d}{dt}\left(\ln\left(\sum_{j=k}^N\tilde m_je^{-{\tilde x}_j}\right)\right)\nonumber \\ =&2\sum_{j=1}^{k-1}\left(e^{2{\tilde x}_j}-e^{2{\tilde x}_{j-1}}\right)\left(\sum_{i=j}^N\tilde m_ie^{-{\tilde x}_i}\right)\left(\sum_{i=k}^N\tilde m_ie^{-{\tilde x}_i}\right)\nonumber\\
&\qquad\qquad\,\,-2\sum_{j=k+1}^N\left(e^{2{\tilde x}_j}-e^{2{\tilde x}_{j-1}}\right)\left(\sum_{i=j}^N\tilde m_ie^{-{\tilde x}_i}\right)^3\bigg/\left(\sum_{i=k}^N\tilde m_ie^{-{\tilde x}_i}\right).\label{pf_dp_m1}
\end{align}
\end{subequations}
Under the evolution \eqref{NV_eq:peakon} for $\left\{{\tilde x}_k(t),\tilde m_k(t)\right\}_{k=1}^N$, the equation \eqref{pf_dp_x1} becomes
\begin{align*}
& \sum_{j=k}^N\left(\tilde m_j^2+2\tilde m_j\sum_{i=j+1}^{N}\tilde m_ie^{{\tilde x}_j-{\tilde x}_i}\right)\left(\sum_{i=1}^{j}\tilde m_ie^{{\tilde x}_i-{\tilde x}_j}+\sum_{i=j+1}^{N}\tilde m_ie^{{\tilde x}_j-{\tilde x}_i}\right)e^{-{\tilde x}_j}\\
=&\sum_{j=1}^k\left(e^{2{\tilde x}_j}-e^{2{\tilde x}_{j-1}}\right)\left(\sum_{i=j}^N\tilde m_ie^{-{\tilde x}_i}\right)\left(\sum_{i=k}^N\tilde m_ie^{-{\tilde x}_i}\right)^2\\
&+\sum_{j=k+1}^N\left(e^{2{\tilde x}_j}-e^{2{\tilde x}_{j-1}}\right)\left(\sum_{i=j}^N\tilde m_ie^{-{\tilde x}_i}\right)^3,
\end{align*}
which can be proven to be an identity after a lengthy calculation involving multiple summation operations.

Subtracting \eqref{pf_dp_x1} from \eqref{pf_dp_m1},  we obtain 
\begin{align*}
&\frac{d}{dt}\ln\left(e^{2{\tilde x}_k}-e^{2{\tilde x}_{k-1}}\right)=\frac{2\dot {\tilde x}_ke^{2{\tilde x}_k}-2\dot {\tilde x}_{k-1}e^{2{\tilde x}_{k-1}}}{e^{2{\tilde x}_k}-e^{2{\tilde x}_{k-1}}}\\ =&4\sum_{j=1}^{k-1}\left(e^{2{\tilde x}_j}-e^{2{\tilde x}_{j-1}}\right)\left(\sum_{i=j}^N\tilde m_ie^{-{\tilde x}_i}\right)\left(\sum_{i=k}^N\tilde m_ie^{-{\tilde x}_i}\right)\\
&+2\left(e^{2{\tilde x}_k}-e^{2{\tilde x}_{k-1}}\right)\left(\sum_{i=k}^N\tilde m_ie^{-{\tilde x}_i}\right)\left(\sum_{i=k}^N\tilde m_ie^{-{\tilde x}_i}\right),
\end{align*}
which yields
\begin{align*}
&2\left(\sum_{j=1}^{k}\tilde m_je^{{\tilde x}_j-{\tilde x}_k}+\sum_{j=k+1}^{N}\tilde m_je^{{\tilde x}_k-{\tilde x}_j}\right)^2e^{2{\tilde x}_k}\\
&-2\left(\sum_{j=1}^{k-1}\tilde m_je^{{\tilde x}_j-{\tilde x}_{k-1}}+\sum_{j=k}^{N}\tilde m_je^{{\tilde x}_{k-1}-{\tilde x}_j}\right)^2e^{2{\tilde x}_{k-1}} \\ 
=&4\sum_{j=1}^{k-1}\left(e^{2{\tilde x}_j}-e^{2{\tilde x}_{j-1}}\right)\left(\sum_{i=j}^N\tilde m_ie^{-{\tilde x}_i}\right)\left(\sum_{i=k}^N\tilde m_ie^{-{\tilde x}_i}\right)\left(e^{2{\tilde x}_k}-e^{2{\tilde x}_{k-1}}\right)\nonumber\\
&+2\left(e^{2{\tilde x}_k}-e^{2{\tilde x}_{k-1}}\right)\left(\sum_{i=k}^N\tilde m_ie^{-{\tilde x}_i}\right)\left(\sum_{i=k}^N\tilde m_ie^{-{\tilde x}_i}\right)\left(e^{2{\tilde x}_k}-e^{2{\tilde x}_{k-1}}\right),
\end{align*}
when the evolution relation for $\tilde x_k$ is used. The validity of the above equality can be confirmed, again, after a lengthy calculation involving multiple summation operations. Therefore, we complete the proof for the converse part. 

Finally, the bijection is concluded by observing that the two maps \eqref{xm_to_yn} and \eqref{yn_to_xm} are inverse to each other.
\end{proof}

As an immediate corollary, we obtain a correspondence between their explicit solutions.
\begin{coro} There exists a bijection between the explicit formulae \eqref{sol:DP_FG} for N-peakon solution to the DP equation and those \eqref{NV_form_yn} for the NV equation according to the maps \eqref{xm_to_yn} and \eqref{yn_to_xm}.
\end{coro}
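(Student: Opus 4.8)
The plan is to establish the corollary by direct substitution, leaning on the determinant--Pfaffian dictionary of Section \ref{sec:det_pf}. By Theorem \ref{th:dpnv} the maps \eqref{xm_to_yn} and \eqref{yn_to_xm} are mutually inverse bijections carrying DP peakon trajectories onto NV peakon trajectories, so it suffices to check that feeding the explicit DP formulae \eqref{sol:DP_FG} into \eqref{xm_to_yn} reproduces the explicit NV formulae \eqref{NV_form_yn}; the opposite direction is then automatic (and can be verified the same way). A preliminary observation that removes all ambiguity: both \eqref{sol:DP_FG} and \eqref{NV_form_yn} are written in terms of the \emph{same} discrete spectral measure $d\mu(x;t)=\sum_{j=1}^N b_j(t)\delta(x-\zeta_j)dx$ with $\dot b_j=b_j/\zeta_j$, so there is no reparametrization of spectral data to worry about; the genuine bookkeeping is the order-reversing reindexing $k\mapsto N-k+1$ together with the boundary conventions $x_{N+1}=+\infty$, $F_{N+1}^{(i,j)}=0$, $\tau_{N+1}^{(l)}=0$ supplied by Proposition \ref{prop:FG}(ii).

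The main step is a telescoping computation for the argument of the logarithm in \eqref{xm_to_yn}. Starting from $x_{N-i+1}=\ln(\tau_i^{(0)}/\tau_{i-1}^{(1)})$ and rewriting $m_{N-i+1}$ from \eqref{sol:DP_FG} via Proposition \ref{prop:FGtau} and Corollary \ref{coro:FG} (so that $F_i^{(1,0)}=(\tau_i^{(0)})^2/2^i$, $2^iF_i^{(1,1)}=W_i$, $2^iF_i^{(0,0)}=Z_i$, etc.), one finds $m_{N-i+1}e^{2x_{N-i+1}}=2(F_i^{(1,0)})^2/(F_i^{(1,1)}F_{i-1}^{(1,1)})$, while the bilinear relation \eqref{id:bi1} with $i=j=0$ yields
\begin{equation*}
\frac{(F_i^{(1,0)})^2}{F_i^{(1,1)}F_{i-1}^{(1,1)}}=\frac{F_i^{(0,0)}}{F_{i-1}^{(1,1)}}-\frac{F_{i+1}^{(0,0)}}{F_i^{(1,1)}}.
\end{equation*}
Summing over $i$ from $k$ to $N$ and using $F_{N+1}^{(0,0)}=0$ collapses the sum, so $\sum_{j=1}^{N-k+1}m_je^{2x_j}=2F_k^{(0,0)}/F_{k-1}^{(1,1)}=Z_k/W_{k-1}$, which is precisely $e^{2\tilde x_{N-k+1}}$ in \eqref{NV_form_yn} --- this is exactly the telescoping already performed in the proof of the cubic/dual-cubic correspondence in Section \ref{sec:cubic}. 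For the momenta, from the same expression for the $x_j$ one gets
\begin{equation*}
e^{-x_{N-k+1}}-e^{-x_{N-k+2}}=\frac{\tau_{k-1}^{(1)}}{\tau_k^{(0)}}-\frac{\tau_{k-2}^{(1)}}{\tau_{k-1}^{(0)}}=\frac{\tau_{k-1}^{(1)}\tau_{k-1}^{(0)}-\tau_{k-2}^{(1)}\tau_k^{(0)}}{\tau_k^{(0)}\tau_{k-1}^{(0)}}=\frac{W_{k-1}}{U_kU_{k-1}},
\end{equation*}
hence $\tilde m_{N-k+1}=(e^{-x_{N-k+1}}-e^{-x_{N-k+2}})e^{\tilde x_{N-k+1}}=(W_{k-1}/(U_kU_{k-1}))\sqrt{Z_k/W_{k-1}}=\sqrt{Z_kW_{k-1}}/(U_kU_{k-1})$, which is the NV momentum formula in \eqref{NV_form_yn}.

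The converse, substituting \eqref{NV_form_yn} into \eqref{yn_to_xm}, runs in complete parallel: $\sum_{j=k}^N\tilde m_je^{-\tilde x_j}$ telescopes (using the NV quantities together with the same bilinear identities) and $e^{2\tilde x_k}-e^{2\tilde x_{k-1}}$ collapses, recovering \eqref{sol:DP_FG}; but since Theorem \ref{th:dpnv} already guarantees that \eqref{xm_to_yn} and \eqref{yn_to_xm} are inverse to one another, this computation is in fact redundant once the first direction is done. I do not expect a real obstacle here: the only actual work is the order-reversing reindexing and the recognition that the sums appearing are the same telescoping sums already handled in Section \ref{sec:cubic}; with Propositions \ref{prop:FG}, \ref{prop:FGtau}, Corollary \ref{coro:FG} and Proposition \ref{lem:bi} available, the verification is routine.
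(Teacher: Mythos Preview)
Your proposal is correct and follows essentially the same route as the paper: the key telescoping step using \eqref{id:bi1} to collapse $\sum_j m_je^{2x_j}$ into $2F_k^{(0,0)}/F_{k-1}^{(1,1)}=Z_k/W_{k-1}$, together with the identification $e^{-x_{N-k+1}}-e^{-x_{N-k+2}}=W_{k-1}/(U_kU_{k-1})$, is exactly what the paper does. The only difference is that the paper verifies the converse direction explicitly (via the telescoping $\sum_{j=1}^k W_{j-1}/(U_jU_{j-1})=\tau_{k-1}^{(1)}/\tau_k^{(0)}$ and a second application of \eqref{id:bi1}), whereas you correctly observe that this is redundant once Theorem~\ref{th:dpnv} guarantees the maps are mutual inverses.
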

\begin{proof}
First of all, substituting \eqref{sol:DP_FG} into \eqref{xm_to_yn}  gives
\begin{align*}
{\tilde x}_{N+1-k}&=\frac{1}{2}\ln\left(\sum_{j=k}^Nm_{N+1-j}e^{2x_{N+1-j}}\right)=\frac{1}{2}\ln\left(\sum_{j=k}^N\frac{2(F_{j}^{(1,0)})^2}{F_{j}^{(1,1)}F_{j-1}^{(1,1)}}\right).
\end{align*}
By employing the identity \eqref{id:bi1}, we then obtain
\begin{align*}
{\tilde x}_{N+1-k}&=\frac{1}{2}\ln\left(2\sum_{j=k}^N\frac{F_{j}^{(0,0)}}{F_{j-1}^{(1,1)}}-\frac{F_{j+1}^{(0,0)}}{F_{j}^{(1,1)}}\right)=\frac{1}{2}\ln\frac{F_{k}^{(0,0)}}{F_{k-1}^{(1,1)}}=\frac{1}{2}\ln\frac{Z_k}{W_{k-1}}.
\end{align*}
The proof for $\tilde m_k$ is much more straightforward since
\begin{align*}
 \tilde m_{N+1-k}&=\left(e^{-x_{N+1-k}}-e^{-x_{N+2-k}}\right)e^{{\tilde x}_{N+1-k}}=\left(\frac{\tau_{k-1}^{(1)}}{\tau_k^{(0)}}-\frac{\tau_{k-2}^{(1)}}{\tau_{k-1}^{(0)}}\right)\sqrt\frac{Z_k}{W_{k-1}}.
\end{align*}

Conversely, it follows that by substituting \eqref{NV_form_yn} into \eqref{yn_to_xm} 
\begin{align*}
 x_{N+1-k}&=-\ln\left(\sum_{j=1}^k\tilde m_{N+1-j}e^{-{\tilde x}_{N+1-j}}\right)=-\ln\left(\sum_{j=1}^k\frac{W_{j-1}}{U_jU_{j-1}}\right)\\
 &=-\ln\left(\sum_{j=1}^k\frac{\tau_{j-1}^{(1)}}{\tau_j^{(0)}}-\frac{\tau_{j-2}^{(1)}}{\tau_{j-1}^{(0)}}\right)=-\ln\frac{\tau_{k-1}^{(1)}}{\tau_k^{(0)}}.\\
\end{align*}
After substitution and employing the identity \eqref{id:bi1} again, we also derive
\begin{align*}
 \qquad m_k=\left(e^{2{\tilde x}_k}-e^{2{\tilde x}_{k-1}}\right)e^{-2 x_k}=\left(\frac{2F_k^{(0,0)}}{F_{k-1}^{(1,1)}}-\frac{2F_{k+1}^{(0,0)}}{F_{k}^{(1,1)}}\right)\frac{F_{k-1}^{(2,1)}}{2F_{k}^{(1,0)}}=\frac{F_{k}^{(1,0)}F_{k-1}^{(2,1)}}{F_{k}^{(1,1)}F_{k-1}^{(1,1)}},
\end{align*}
which completes the proof.
\end{proof}

\begin{remark} 
It is not difficult to verify that the long time asymptotics in \cite[Theorem 2.25]{lundmark2005degasperis} and \cite[Theorem 9.4]{hone2009explicit} for DP and Novikov peakons can also be derived from each other by using our Theorem \ref{th:dpnv}.
\end{remark}

\subsection{On constants of motions}
Recall that the DP multipeakon dynamical system admits $N$ functionally independent constants of motion given by (\cite[Theorem 2.10]{lundmark2005degasperis})
\begin{align}\label{cons:dp}
M_k=\sum_{I\in\binom{[N]}{k}}\left(\prod_{i\in I}m_i\right)\left(\prod_{j=1}^{k-1}\left(1-e^{x_{i_j}-x_{i_{j+1}}}\right)^2\right)
\end{align}
for $k=1,2,\ldots,N$, where $\binom{[N]}{k}$ denotes the set of all $k$-element subsets $I=\{i_1, i_2,\dots, i_k\}$ of $[N]=\{1,2,\ldots,N\}$ such that $i_1 < i_2 < \dots < i_k$. 
%the convention $\binom{[N]}{0}=1; \ \binom{[N]}{k}=0,\ \  k>N.$ Given the multi-index $I$, we abbreviate $g_I=g_{i_1}g_{i_2}\dots g_{i_j}$ etc.
In particular, one has 
\begin{align*}
M_1=\sum_{j=1}^Nm_j,\qquad\quad M_N=\prod_{j=1}^Nm_j\prod_{j=1}^{N-1}(1-e^{x_j-x_{j+1}}).
\end{align*}

As mentioned in \cite[Theorem 4.2]{hone2009explicit}, the Novikov multipeakon dynamical system also admits $N$  constants of motion $H_1,H_2,\ldots, H_N$, where $H_k$ equals the sum of all $k\times k$ minors (principle and non-principle) of the $N\times N$ symmetric matrix ($PEP$ in \cite{hone2009explicit})
$$\left(\tilde m_i\tilde m_je^{-|\tilde x_i-\tilde x_j|}\right)_{i,j=1}^N.$$
To the best of our knowledge,  all of these quantities have never been explicitly written down, except for
%\begin{subequations}
\begin{align*}
&H_1=\sum_{i,j=1}^N\tilde m_i\tilde m_j e^{-|\tilde x_i-\tilde x_j|},\qquad\quad H_N=\prod_{j=1}^{N-1}\left(1-e^{-2|\tilde x_j-\tilde x_{j+1}|}\right)\prod_{j=1}^N {\tilde m_j}^2.
\end{align*}
%\end{subequations}
However, $N$ constants of motion with explicit expressions for the Novikov peakon ODEs immediately follow from the known constants of motion \eqref{cons:dp} for the DP peakon ODEs and the obtained formula \eqref{yn_to_xm} in Theorem \ref{th:dpnv}. Recall that the constants of motions are the coefficients of the polynomials $\phi_1(1;z)$ and $\tilde \phi_3(1;z)$ in the discrete (dual) cubic strings \cite{lundmark2005degasperis,hone2009explicit}, and  we have the relation $\tilde \phi_3(1;z)=-z\phi_1(1;z)$ obtained by  \eqref{rel:wave}. We finally conclude with the following result.
\begin{coro} 
The Novikov multipeakon dynamical system admits $N$  constants of motion $\tilde M_1,\tilde M_2,\ldots, \tilde M_N$ explicitly given by
\begin{align*}
\tilde M_k=\sum_{I\in\binom{[N]}{k}}\left(\prod_{i\in I}(e^{2\tilde x_i}-e^{2\tilde x_{i-1}})\right)\left(\sum_{j=i_k}^N\tilde m_j e^{-\tilde x_j}\right)^2\prod_{j=1}^{k-1}\left(\sum_{l=i_j}^{i_{j+1}-1}\tilde m_l e^{-\tilde x_l}\right)^2,
\end{align*}
and each $\tilde M_k$ is equal to $H_k$ as well as the sum of all $k\times k$ minors of the $N\times N$ symmetric matrix 
$$\left(\tilde m_i\tilde m_je^{-|\tilde x_i-\tilde x_j|}\right)_{i,j=1}^N.$$
\end{coro}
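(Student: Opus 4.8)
The plan is to prove the corollary in two independent steps: (1) obtain the displayed closed form for $\tilde M_k$ by transporting the DP constant of motion $M_k$ through the change of variables \eqref{yn_to_xm} of Theorem \ref{th:dpnv} (which at once makes $\tilde M_k$ constant along the Novikov flow), and (2) identify $\tilde M_k$ with $H_k$ by reading both quantities as coefficients of the spectral polynomials of the two (dual) cubic strings and invoking the relation $\tilde\phi_3(1;z)=-z\,\phi_1(1;z)$ that comes from \eqref{rel:wave}. This makes rigorous the sketch given in the paragraph just before the corollary.

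For step (1), abbreviate $S_a=\sum_{j=a}^N\tilde m_j e^{-\tilde x_j}$, so that \eqref{yn_to_xm} reads $e^{-x_a}=S_a$ and $m_i=(e^{2\tilde x_i}-e^{2\tilde x_{i-1}})S_i^2$. Since the $x$'s are increasing, for $i_j<i_{j+1}$ one has $S_{i_j}>S_{i_{j+1}}>0$ and
$$
1-e^{x_{i_j}-x_{i_{j+1}}}=1-\frac{S_{i_{j+1}}}{S_{i_j}}=\frac{S_{i_j}-S_{i_{j+1}}}{S_{i_j}}=\frac{\sum_{l=i_j}^{i_{j+1}-1}\tilde m_l e^{-\tilde x_l}}{S_{i_j}}.
$$
Substituting these into \eqref{cons:dp}, the summand indexed by $I=\{i_1<\cdots<i_k\}$ becomes
$$
\Big(\prod_{i\in I}(e^{2\tilde x_i}-e^{2\tilde x_{i-1}})\Big)\Big(\prod_{r=1}^k S_{i_r}^2\Big)\prod_{j=1}^{k-1}\frac{\big(\sum_{l=i_j}^{i_{j+1}-1}\tilde m_l e^{-\tilde x_l}\big)^2}{S_{i_j}^2},
$$
and the $k-1$ denominators $S_{i_1}^2,\dots,S_{i_{k-1}}^2$ cancel against $k-1$ of the factors $S_{i_1}^2,\dots,S_{i_k}^2$, leaving the single factor $S_{i_k}^2=(\sum_{j=i_k}^N\tilde m_j e^{-\tilde x_j})^2$; summing over $I$ produces exactly the asserted expression for $\tilde M_k$. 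Because $M_1,\dots,M_N$ are constants of motion for \eqref{DP_eq:peakon} and Theorem \ref{th:dpnv} carries the Novikov flow \eqref{NV_eq:peakon} bijectively onto \eqref{DP_eq:peakon}, each $\tilde M_k$ is then automatically constant along \eqref{NV_eq:peakon}.

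For step (2), recall (with references \cite{lundmark2005degasperis,hone2009explicit}) that the $M_k$ are, up to sign and indexing, the coefficients of $\phi_1(1;z)$ for the discrete cubic string and the $H_k$ those of $\tilde\phi_3(1;z)$ for the discrete dual cubic string, while $H_k$ equals the sum of all $k\times k$ minors of $(\tilde m_i\tilde m_j e^{-|\tilde x_i-\tilde x_j|})$ by \cite[Theorem 4.2]{hone2009explicit}, which settles the last clause. Substituting the explicit solution \eqref{sol:DP_FG} into \eqref{xm_to_yn} returns \eqref{NV_form_yn} with the same underlying measure (the corollary following Theorem \ref{th:dpnv}), so the bijection \eqref{xm_to_yn}--\eqref{yn_to_xm} preserves the spectral data $\{b_j,\zeta_j\}$; hence the cubic string attached to a DP peakon trajectory and the dual cubic string attached to the matching Novikov trajectory carry identical spectral data and so correspond under the one-to-one map of Section \ref{sec:cubic}, for which \eqref{rel:wave} applies. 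The third row of the transfer matrix $T_N$ in \eqref{rel:wave} yields $\tilde\phi_3(1;z)=-z\,\phi_1(1;z)$, so matching the coefficient of $z^k$ on the right-hand side with that of $z^{k+1}$ on the left gives $\tilde M_k=H_k$, which finishes the proof.

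I expect the main obstacle to lie in step (2): one must fix the normalizations and index labelling of $\phi_1(1;z)$ and $\tilde\phi_3(1;z)$ so that the clean identity $\tilde\phi_3(1;z)=-z\,\phi_1(1;z)$ genuinely forces $\tilde M_k=H_k$ with the same subscript (and not a reflected index, nor an opposite overall sign), and one must make watertight the claim that the detour ``DP trajectory $\to$ cubic string $\to$ dual cubic string $\to$ Novikov trajectory'' agrees with the direct map \eqref{yn_to_xm} — this follows from every intermediate correspondence being a spectrum-preserving bijection, but it is the point that must be spelled out. By contrast, step (1) is purely careful index bookkeeping: the convention $\tilde x_{i-1}$ at $i=i_1$, and the telescoping cancellation of the factors $S_{i_j}$.
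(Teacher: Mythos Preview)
Your proposal is correct and follows exactly the approach sketched in the paragraph preceding the corollary: step (1) substitutes \eqref{yn_to_xm} into \eqref{cons:dp} and carries out the telescoping cancellation of the $S_{i_j}^2$ factors (which the paper leaves implicit), and step (2) invokes $\tilde\phi_3(1;z)=-z\,\phi_1(1;z)$ from \eqref{rel:wave} together with the identification of $M_k$ and $H_k$ as coefficients of the respective spectral polynomials. Your caveat about normalizations in step (2) is a fair point of care, but the paper treats this identification as immediate from the cited references, so your write-up is if anything more detailed than the paper's own justification.
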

\begin{remark}
It is not hard to calculate that  $\tilde M_1$ and $\tilde M_N$ are equal to $H_1$ and $H_N$, respectively. 
\end{remark}

\begin{remark}
It is argued in \cite{kardell:2016:phdthesis,kardell-lundmark} that mixed peakon-antipeakon solutions to the Novikov equation exhibit a much greater variety of possible behaviours than those of the CH equation. Additionally, the collision dynamics of the Novikov peakon-antipeakon collisions are also different from those of the DP peakon-antipeakons \cite{szmigielski-zhou:shocks-DP, szmigielski-zhou:DP-peakon-antipeakon}, which, in particular, can result in the formation of so-called shockpeakons \cite{lundmark2007formation} which are absent in the Novikov case. The mixed peakon-antipeakon problem of the Novikov equation still requires further investigation. Our results regarding explicit constants of motion could be used directly to predict some characteristics of the Novikov peakon-antipeakon collision behaviour.  
\end{remark}
% is very rich; much more 
%intricate than the one for the CH.  So, possibly knowing your constants of motion explicitly can be used directly to predict some of the characteristics of collision behaviour.  
%\todo{mixed peakon–antipeakon
%solutions of Novikov’s equation display a much greater variety of
%possible behaviours than those of the CH equation;
% which might shed more light on these aspects
%}
%problem of collisions does not arise, and the various solution concepts coincide. the DP peakon–antipeakon collisions lead to the formation of so-called shockpeakons \cite{lundmark2007formation}.
%Speaking of collisions.  The collision picture for the DP is different than that of the Novikov.}

\subsection{On Poisson structures} Recall that the DP peakon dynamical system \eqref{DP_eq:peakon} can be written as a non-canonical Hamiltonian system 
\begin{align}\label{ham_dp}
\dot x_k=\{x_k,M_1\}_{dp},\qquad \dot m_k=\{m_k,M_1\}_{dp}
\end{align}
with the Hamiltonian $M_1=\sum_{j=1}^Nm_j$ and the non-canonical Poisson brackets \cite[eq. (3.20)]{degasperis2003integrable}
\begin{subequations}\label{PB_dp}
\begin{align}
\{x_j,x_k\}_{dp}&=\frac{1}{2}\sgn(x_j-x_k)\left(1-e^{-|x_j-x_k|}\right),\\
\{x_j,m_k\}_{dp}&=m_ke^{-|x_j-x_k|},\\
\{m_j,m_k\}_{dp}&=2\sgn(x_j-x_k)m_jm_ke^{-|x_j-x_k|}.
\end{align}
\end{subequations}
On the other hand,
the Novikov peakon dynamical system \eqref{NV_eq:peakon} can be
equipped with the non-canonical Poisson structure \cite[eq. (3.5)]{hone2009explicit}
\begin{align}\label{ham_nv}
\dot {\tilde x}_k=\{\tilde x_k,\tilde M_1\}_{nv},\qquad \dot {\tilde m}_k=\{\tilde m_k,\tilde M_1\}_{nv},
\end{align}
where  the Hamiltonian is $\tilde M_1=\sum_{i,j=1}^N\tilde m_i\tilde m_j e^{-|\tilde x_i-\tilde x_j|}$ and
\begin{subequations}\label{PB_nv}
\begin{align}
\{\tilde x_j,\tilde x_k\}_{nv}&=\frac{1}{2}\sgn(\tilde x_j-\tilde x_k)\left(1-e^{-2|\tilde x_j-\tilde x_k|}\right),\\
\{\tilde x_j,\tilde m_k\}_{nv}&=\frac{1}{2}\tilde m_ke^{-2|\tilde x_j-\tilde x_k|},\\
\{\tilde m_j,\tilde m_k\}_{nv}&=\frac{1}{2}\sgn(\tilde x_j-\tilde x_k)\tilde m_j\tilde m_ke^{-2|\tilde x_j-\tilde x_k|}.
\end{align}
\end{subequations}
It is noted that these formulae are the same as \cite[eq. (3.5)]{hone2009explicit}, up to a scaling factor $1/2$, and the Hamiltonian here is twice as much as that one.

As argued in the previous subsection on constants of motion, the map \eqref{xm_to_yn} (or \eqref{yn_to_xm})  connects the Hamiltonians $M_1$ and $\tilde M_1$. Besides, it follows from Theorem \ref{th:dpnv} that this map  establishes a bijection between DP and Novikov pure peakon trajectories, which implies that it preserves Hamilton’s equations \eqref{ham_dp} and \eqref{ham_nv}. Therefore, it is conceivable that this map would also connect the Poisson brackets \eqref{PB_dp} and \eqref{PB_nv} well.
\begin{theorem}
The invertible map \eqref{xm_to_yn} (or its inverse \eqref{yn_to_xm}) between $\{x_k,m_k\}_{k=1}^N$ and $\{\tilde x_k,\tilde m_k\}_{k=1}^N$ induces a canonical (or Poisson) map $\phi$ from $\left(\mathcal M_1,\{\,,\,\}_{dp}\right)$ to $\left(\mathcal M_2,\{\,,\,\}_{nv}\right)$, that is,
$$
\{\phi^*f,\phi^*g\}_{dp}=\phi^*\{f,g\}_{nv},\qquad \forall\, f,g\in \mathcal M_2.
$$
\end{theorem}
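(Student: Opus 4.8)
The plan is to reduce the claimed identity $\{\phi^*f,\phi^*g\}_{dp}=\phi^*\{f,g\}_{nv}$ to a finite check on coordinate functions and then verify that by direct computation. Since $(\tilde x_1,\dots,\tilde x_N,\tilde m_1,\dots,\tilde m_N)$ are global coordinates on $\mathcal M_2$, the pullback $\phi^*$ is an isomorphism of function algebras, and both $\{\,,\,\}_{dp}$ and $\{\,,\,\}_{nv}$ are derivations in each slot; expanding $\phi^*f=f(\phi^*\tilde x,\phi^*\tilde m)$ by the chain rule then shows that the identity for all $f,g$ is equivalent to its validity on pairs drawn from $\{\tilde x_j,\tilde m_j\}$. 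Pulling back through \eqref{xm_to_yn}, it therefore remains to show that $\{\phi^*\tilde x_j,\phi^*\tilde x_k\}_{dp}$, $\{\phi^*\tilde x_j,\phi^*\tilde m_k\}_{dp}$ and $\{\phi^*\tilde m_j,\phi^*\tilde m_k\}_{dp}$, computed from \eqref{PB_dp}, coincide with the functions obtained by substituting \eqref{xm_to_yn} into the right-hand sides of \eqref{PB_nv}; here I would use, exactly as in Theorem \ref{th:dpnv}, that the images obey $\tilde x_1<\dots<\tilde x_N$, so $\sgn(\tilde x_j-\tilde x_k)=\sgn(j-k)$ and $e^{-2|\tilde x_j-\tilde x_k|}=S_{\min(j,k)}/S_{\max(j,k)}$, with $S_k:=\sum_{i=1}^k m_i e^{2x_i}=e^{2\phi^*\tilde x_k}$.

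The organizing observation is that the pulled-back generators are built from just two ingredients, the partial sums $S_k$ and the exponentials $e^{-x_k}$, through $\phi^*\tilde x_k=\frac12\ln S_k$ and $\phi^*\tilde m_k=(e^{-x_k}-e^{-x_{k+1}})\sqrt{S_k}$. Hence it suffices to compute the ``atomic'' DP brackets $\{S_j,S_k\}_{dp}$, $\{S_j,e^{-x_k}\}_{dp}$, $\{e^{-x_j},e^{-x_k}\}_{dp}$ and reassemble them by the Leibniz rule; the last is immediate, $\{e^{-x_j},e^{-x_k}\}_{dp}=e^{-x_j-x_k}\{x_j,x_k\}_{dp}$. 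For $\{S_j,S_k\}_{dp}$ the point that keeps the whole computation finite is a cancellation: expanding $\{S_j,S_k\}_{dp}=\sum_{a\le j,\,b\le k}\{m_a e^{2x_a},m_b e^{2x_b}\}_{dp}$ with \eqref{PB_dp} yields, after a short calculation, $\{m_a e^{2x_a},m_b e^{2x_b}\}_{dp}=-2\,\sgn(b-a)\,m_a m_b e^{2x_a+2x_b}$ for $a\neq b$ and $0$ for $a=b$, so that every term with both indices in $\{1,\dots,\min(j,k)\}$ cancels against its mirror by antisymmetry and only the mixed-block terms survive, giving the clean formula $\{S_j,S_k\}_{dp}=2\,S_{\min(j,k)}\,(S_j-S_k)$. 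A similar bookkeeping produces $\{S_j,e^{-x_k}\}_{dp}$; in particular $\{x_k,S_j\}_{dp}=S_j$ for every $k\ge j$, so $\{e^{-x_k},S_j\}_{dp}=-e^{-x_k}S_j$ there, and this single ``constant'' value is what produces the $\frac12$-type contributions on the Novikov side.

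With the atomic brackets in hand the $\tilde x$--$\tilde x$ relation is immediate, $\{\phi^*\tilde x_j,\phi^*\tilde x_k\}_{dp}=\frac{S_{\min(j,k)}(S_j-S_k)}{2S_jS_k}$, which is $\frac12\sgn(j-k)\bigl(1-S_{\min(j,k)}/S_{\max(j,k)}\bigr)=\phi^*\{\tilde x_j,\tilde x_k\}_{nv}$; the $\tilde x$--$\tilde m$ relation reduces in the same way to an elementary rational identity in the $S_k$ and $e^{-x_k}$. I expect the real obstacle to be the $\tilde m$--$\tilde m$ relation, where all three atomic brackets enter at once and one must carry the telescoping differences $e^{-x_k}-e^{-x_{k+1}}$ together with the square roots $\sqrt{S_k}$ through the three cases $j<k$, $j=k$, $j>k$ while reconciling the absolute values $|\tilde x_j-\tilde x_k|$; here I would reuse the multiple-summation identities already established in the proof of Theorem \ref{th:dpnv}, where both sides of the peakon ODEs were collapsed into squares of partial sums, to control the cross terms. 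A possible reorganization is to factor $\phi$ through the (dual) cubic string coordinates of Section \ref{sec:cubic} and instead match the push-forwards of \eqref{PB_dp} and \eqref{PB_nv} there, but even by the direct route no idea beyond careful algebra is needed once the cancellation in $\{S_j,S_k\}_{dp}$ is exploited.
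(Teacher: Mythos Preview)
Your reduction to coordinate pairs and the subsequent direct computation is exactly the route the paper takes: it rewrites the claim as the matrix identity $\Pi_{nv}=J\,\Pi_{dp}\,J^\top$ for the Jacobian $J$ of \eqref{xm_to_yn} and then declares that ``this can actually be demonstrated after a lengthy but straightforward calculation,'' with no further details. So at the level of strategy you are aligned with the paper, and your acknowledged incompleteness in the $\tilde m$--$\tilde m$ case matches the paper's own level of detail.

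Where your proposal adds value is the organizing decomposition via the ``atomic'' brackets $\{S_j,S_k\}_{dp}$, $\{S_j,e^{-x_k}\}_{dp}$, $\{e^{-x_j},e^{-x_k}\}_{dp}$. The closed form $\{S_j,S_k\}_{dp}=2\,S_{\min(j,k)}(S_j-S_k)$, obtained from the cancellation you identify in $\{m_ae^{2x_a},m_be^{2x_b}\}_{dp}$, is a genuine simplification that the paper does not isolate; it immediately dispatches the $\tilde x$--$\tilde x$ block and reduces the remaining two blocks to finite case-by-case algebra in $S_k$ and $e^{-x_k}$, rather than a raw $2N\times 2N$ Jacobian multiplication. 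This buys you a computation that is shorter and more transparent than the bare matrix identity the paper invokes, at the cost of having to track the three atomic brackets through the Leibniz rule in the $\tilde m$--$\tilde m$ case. Either way the argument is complete once that last block is written out; no additional idea is required.
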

\begin{proof}
It suffices to prove, under the map \eqref{xm_to_yn} (or its inverse \eqref{yn_to_xm}), there holds
\begin{subequations}
\begin{align}
%=\{\tilde x_j(\textbf{x},\textbf{m}),\tilde x_k(\textbf{x},\textbf{m})\}_{dp}=
\{\tilde x_j,\tilde x_k\}_{nv}=&\sum_{i=1}^N\sum_{l=1}^N\left(\frac{\partial \tilde x_j}{\partial x_i}\{x_i,x_l\}_{dp}\frac{\partial \tilde x_k}{\partial x_l}+\frac{\partial \tilde x_j}{\partial x_i}\{x_i,m_l\}_{dp}\frac{\partial \tilde x_k}{\partial m_l}\right.\nonumber\\
&\qquad\quad\left.+\frac{\partial \tilde x_j}{\partial m_i}\{m_i,x_l\}_{dp}\frac{\partial \tilde x_k}{\partial x_l}+\frac{\partial \tilde x_j}{\partial m_i}\{m_i,m_l\}_{dp}\frac{\partial \tilde x_k}{\partial m_l}\right),\\
\{\tilde x_j,\tilde m_k\}_{nv}=&\sum_{i=1}^N\sum_{l=1}^N\left(\frac{\partial \tilde x_j}{\partial x_i}\{x_i,x_l\}_{dp}\frac{\partial \tilde m_k}{\partial x_l}+\frac{\partial \tilde x_j}{\partial x_i}\{x_i,m_l\}_{dp}\frac{\partial \tilde m_k}{\partial m_l}\right.\nonumber\\
&\qquad\quad\left.+\frac{\partial \tilde x_j}{\partial m_i}\{m_i,x_l\}_{dp}\frac{\partial \tilde m_k}{\partial x_l}+\frac{\partial \tilde x_j}{\partial m_i}\{m_i,m_l\}_{dp}\frac{\partial \tilde m_k}{\partial m_l}\right),\\
\{\tilde m_j,\tilde m_k\}_{nv}=&\sum_{i=1}^N\sum_{l=1}^N\left(\frac{\partial \tilde m_j}{\partial x_i}\{x_i,x_l\}_{dp}\frac{\partial \tilde m_k}{\partial x_l}+\frac{\partial \tilde m_j}{\partial x_i}\{x_i,m_l\}_{dp}\frac{\partial \tilde m_k}{\partial m_l}\right.\nonumber\\
&\qquad\quad\left.+\frac{\partial \tilde m_j}{\partial m_i}\{m_i,x_l\}_{dp}\frac{\partial \tilde m_k}{\partial x_l}+\frac{\partial \tilde m_j}{\partial m_i}\{m_i,m_l\}_{dp}\frac{\partial \tilde m_k}{\partial m_l}\right), \label{pos_map_mm}
\end{align}
\end{subequations}
which can be equivalently written in a matrix form
\begin{align}
\Pi_{nv}=J\,\Pi_{dp}\,J^\top,
\end{align}
where $\Pi_{dp}$ and $\Pi_{nv}$ are the respective Poisson matrices $$
\Pi_{dp}=
\begin{pmatrix}
\{x_j,x_k\}_{dp}&\{x_j,m_k\}_{dp}\\
\{m_j,x_k\}_{dp}&\{m_j,m_k\}_{dp}
\end{pmatrix}_{j,k=1}^N,
\qquad 
\Pi_{nv}=
\begin{pmatrix}
\{\tilde x_j,\tilde x_k\}_{nv}&\{\tilde x_j,\tilde m_k\}_{nv}\\
\{\tilde m_j,\tilde x_k\}_{nv}&\{\tilde m_j,\tilde m_k\}_{nv}
\end{pmatrix}_{j,k=1}^N, 
$$
and $J$ is the Jacobian matrix
$$
J=\frac{\partial (\tilde x_1,\ldots,\tilde x_N; \tilde m_1,\ldots,\tilde m_N)}{\partial ( x_1,\ldots,x_N;  m_1,\ldots,m_N)}.
%\begin{pmatrix}
%\frac{\partial \tilde x_j}{\partial x_k}&\frac{\partial \tilde x_j}{\partial m_k}\\
%\frac{\partial \tilde m_j}{\partial x_k} &\frac{\partial \tilde m_j}{\partial m_k}
%\end{pmatrix}_{j,k=1}^N.
$$
This can actually be demonstrated after a lengthy but straightforward calculation.  

\end{proof}

% that \textbf{preserves (maybe connects rather than preserves!}) the Poisson brackets \eqref{PB_dp} and \eqref{PB_nv}.
%which can actually be demonstrated after a lengthy and tedious calculation. The details are left for interested readers.

\section{A new integrable lattice connecting B-Toda and C-Toda lattices}\label{sec:bc}
The B-Toda and C-Toda lattices can be regarded as positive flows obtained by isospectral deformations 
\begin{equation} \label{evo_post}
\dot \zeta_j(t)=0,\qquad \dot b_j(t)=\zeta_jb_j(t)
\end{equation}
related to the (dual) cubic strings, respectively.  In fact, they are directly derived by  isospectral deformations of PSOPs and CBOPs, which naturally arise in the transition matrices of the discrete (dual) cubic strings. In addition,  recall that, determinant tau functions play important roles in the C-Toda lattice \cite{chang2018degasperis}, while Pfaffian tau functions naturally appear in the B-Toda lattice \cite{chang2018application}, due to the fact that their respective hierarchies are associated with different types of infinite dimensional Lie algebras.   In other words, the C-Toda lattice can be transformed into bilinear equations involving certain determinant identities, while the B-Toda lattice corresponds to certain Pfaffian identities.

In this section, we establish a bridge between B-Toda and C-Toda lattices. In fact, we propose a new integrable lattice that connects both of the B-Toda and C-Toda lattices. A key ingredient is the discovery of  several bilinear identities that involve both determinants and Pfaffians.
To begin with, we note that there hold the following derivation formula (see \cite[Lemma 3.2]{chang2018degasperis}). 

\begin{lemma}  Under the condition \eqref{evo_post}, the determinants $F_{k}^{(i,j)}$ in Definition \ref{def:FG} admit the evolution
\begin{equation}\label{deriv_F}
\dot F_{k}^{(i,j)}=-E_{k+1}^{(i,j)}.
\end{equation}
\end{lemma}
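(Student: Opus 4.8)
The plan is to differentiate the determinant $F_k^{(i,j)} = \det(I_{i+p,j+q})_{p,q=0}^{k-1}$ with respect to $t$, column by column, using the product rule for determinants, and then to recognize the resulting sum of determinants as a single determinant of size $(k+1)\times(k+1)$ of the type defining $E_{k+1}^{(i,j)}$. First I would record the time evolution of the basic bimoments. Since the evolution \eqref{evo_post} forces $\dot b_j = \zeta_j b_j$, the measure evolves as $d\mu(x;t) = \sum_j b_j(t)\delta(x-\zeta_j)\,dx$ with $\partial_t d\mu(x) = x\, d\mu(x)$; hence $\dot\beta_j = \beta_{j+1}$ and, differentiating under the (double) integral sign in $I_{i,j} = \iint \frac{x^i y^j}{x+y}\,d\mu(x)d\mu(y)$, one gets $\dot I_{i,j} = \iint \frac{(x+y)x^i y^j}{x+y}\,d\mu(x)d\mu(y) = \iint x^i y^j\,d\mu(x)d\mu(y) = \beta_i\beta_j$. (Equivalently, $\dot I_{i,j} = I_{i+1,j} + I_{i,j+1}$, which agrees with $\beta_i\beta_j$ by Proposition~2.1.) This is the only input about the dynamics that is needed.

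Next I would apply the Leibniz rule for differentiating a determinant: $\dot F_k^{(i,j)}$ is the sum over $q=0,\dots,k-1$ of the determinant obtained from $(I_{i+p,j+q})_{p,q}$ by replacing the $q$-th column $(I_{i+p,j+q})_p$ with its derivative $(\dot I_{i+p,j+q})_p = (\beta_{i+p}\beta_{j+q})_p = \beta_{j+q}\,(\beta_{i+p})_p$. Since the replaced column is the fixed vector $(\beta_{i+p})_{p=0}^{k-1}$ scaled by $\beta_{j+q}$, each such term is, up to the scalar $\beta_{j+q}$, the determinant of the matrix $(I_{i+p,j+q})_{p,q}$ with its $q$-th column overwritten by $(\beta_{i+p})_p$. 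The remaining work is to identify $\sum_{q=0}^{k-1} \beta_{j+q}\det(\cdots \text{column } q \to \beta)$ with $-E_{k+1}^{(i,j)}$, where $E_{k+1}^{(i,j)} = \det\begin{pmatrix} I_{i+p,j+q} & \beta_{i+p}\\ \beta_{j+q} & 0\end{pmatrix}_{p,q=0}^{k-1}$.

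The cleanest route for this last identification is a cofactor (Laplace) expansion of the bordered determinant $E_{k+1}^{(i,j)}$ along its last row, which reads $(\beta_{j}, \beta_{j+1}, \dots, \beta_{j+k-1}, 0)$. The final entry being $0$, only the first $k$ entries contribute; expanding along this row produces exactly $\sum_{q=0}^{k-1}\beta_{j+q}$ times a signed $k\times k$ minor. One then checks that this minor, after accounting for the sign of the cofactor and the position of the border column $(\beta_{i+p})_p$, is precisely the determinant appearing in the Leibniz expansion above — so the two expressions agree up to the overall sign $-1$ predicted by the statement. I expect the main (minor) obstacle to be bookkeeping of signs: tracking the cofactor sign $(-1)^{(k+1)+(q+1)}$ from the Laplace expansion against the sign incurred by moving the $\beta$-column from the border position to column $q$ in the other determinant. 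This is entirely routine but must be done carefully; alternatively one can sidestep it by instead expanding $E_{k+1}^{(i,j)}$ along its last column $(\beta_{i+0},\dots,\beta_{i+k-1},0)^\top$ and matching with a row-wise Leibniz differentiation of $F_k^{(i,j)}$ (legitimate since $F_k^{(i,j)} = F_k^{(j,i)}$ and $\dot I_{i,j}$ is symmetric in the roles of the two indices), whichever makes the signs most transparent.
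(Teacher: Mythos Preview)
Your proposal is correct: computing $\dot I_{i,j}=\beta_i\beta_j$ from $\partial_t\,d\mu(x)=x\,d\mu(x)$, applying the column-wise Leibniz rule to $F_k^{(i,j)}$, and then matching the resulting sum with the Laplace expansion of the bordered determinant $E_{k+1}^{(i,j)}$ along its last row is exactly the right computation, and your sign bookkeeping checks out (the cofactor sign $(-1)^{k+q}$ combines with the $(-1)^{k-1-q}$ from shuffling the $\beta$-column into position $q$ to give the global $-1$). The paper itself does not supply a proof here but simply cites \cite[Lemma~3.2]{chang2018degasperis}; your argument is the standard one and is essentially what that reference contains.
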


\subsection{Bilinear relations connecting determinants and Pfaffians} \label{subsec:detpf}
The following bilinear relations involving only determinants are important to us.
\begin{lemma}\label{lem_new_bi}
For any $l\in \mathbb{Z}$, $k\in  \mathbb{N}_+$, there hold 
%\begin{subequations}
\begin{align*}
E_{k+1}^{(l,l)}F_{k-1}^{(l+1,l)}=E_{k}^{(l,l+1)}F_{k}^{(l,l)}-G_{k}^{(l,l+1)}G_{k}^{(l,l)},\\
G_{k+1}^{(l,l)}G_{k}^{(l,l+1)}=E_{k+1}^{(l,l)}F_{k}^{(l,l+1)}-E_{k+1}^{(l,l+1)}F_{k}^{(l,l)},\\
F_{k+1}^{(l,l)}G_{k}^{(l,l+1)}=G_{k+1}^{(l,l)}F_{k}^{(l,l+1)}-G_{k+1}^{(l,l+1)}F_{k}^{(l,l)}.
\end{align*}
%\end{subequations}
\end{lemma}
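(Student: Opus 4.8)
The plan is to prove all three identities by the same mechanism: express each of $E_k^{(i,j)}, F_k^{(i,j)}, G_k^{(i,j)}$ as a minor (or a bordered determinant) of one large matrix, and then invoke a suitable determinant identity of Desnanot--Jacobi / Pl\"ucker / Sylvester type. Concretely, I would first rewrite the bordered determinants $G_k^{(l,l)}, G_k^{(l,l+1)}, E_{k+1}^{(l,l)}, E_{k+1}^{(l,l+1)}$ in a uniform way as determinants obtained from the Cauchy--bimoment matrix $(I_{l+p,l+q})$ by replacing one or two columns/rows with the moment vector $(\beta_{l+p})_p$ (resp.\ $(\beta_{l+q})_q$). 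The definitions in Definition~\ref{def:FG} already exhibit $G_k$ as a one-column border and $E_k$ as a two-sided border, so this step is just a matter of aligning index ranges so that all six determinants live inside a single $(k{+}1)\times(k{+}1)$ (or $(k{+}2)\times(k{+}2)$) array whose columns are drawn from $\{$bimoment columns $I_{\bullet,l},\dots,I_{\bullet,l+k}$; the moment column $\beta_\bullet$; and, for the $E$'s, an extra $\beta$-row and a zero entry$\}$.

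Once everything is inside one array, each of the three asserted relations should fall out of a single application of the Desnanot--Jacobi (Dodgson condensation) identity, exactly as \eqref{id:bi1} in Proposition~\ref{lem:bi} was obtained in \cite[Lemma 2.1]{chang2018degasperis}. For the first identity $E_{k+1}^{(l,l)}F_{k-1}^{(l+1,l)}=E_k^{(l,l+1)}F_k^{(l,l)}-G_k^{(l,l+1)}G_k^{(l,l)}$, I would take the $(k{+}1)\times(k{+}1)$ determinant whose interior $k\times(k{-}1)$ block is $(I_{l+p,l+q})$ ($q=0,\dots,k-2$), bordered by one $\beta$-column on the right and by a $\beta$-row plus a $0$ at the bottom — so the full determinant is $E_{k+1}^{(l,l)}$ — and then condense on the two corners corresponding to the last bimoment column and the $\beta$-row/column: the four resulting $k\times k$ minors are precisely $F_k^{(l,l)}$, $E_k^{(l,l+1)}$, $G_k^{(l,l)}$, $G_k^{(l,l+1)}$, and the $(k{-}1)$-size central minor is $F_{k-1}^{(l+1,l)}$. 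The second and third identities have the same shape (products $E\cdot F$, $G\cdot G$, $G\cdot F$, $F\cdot G$) and should come from condensing a closely related array — for the third one, which involves only $F$'s and $G$'s, no $\beta$-row is needed, just the one-column $\beta$-border, and the identity is essentially Proposition~\ref{lem:bi} with one column swapped for $\beta$.

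The main obstacle I anticipate is purely bookkeeping: getting the \emph{signs} and the \emph{index shifts} right when a deleted row/column is the $\beta$-row versus a bimoment column, since deleting the bottom $\beta$-row changes an $E$ into a $G$ and shifts the leading index, while deleting a bimoment column shifts $(i,j)$ by one. I would manage this by fixing once and for all an explicit labelling of the rows/columns of the big array (rows $0,\dots,k$ carrying $I_{l+\bullet,\,\cdot}$ with the last row being the $\beta$-row; columns $0,\dots,k$ with the last column being the $\beta$-column), writing each of the six quantities as $(-1)^{\text{(explicit sign)}}$ times the minor obtained by striking a specified row and column, and then reading off Desnanot--Jacobi. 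A cheap sanity check at the end: specialize to small $k$ (say $k=1$ and $k=2$) using $G_1^{(i,j)}=\beta_i$, $E_2^{(i,j)}=-\beta_i\beta_j$, $F_0=1$, and verify each identity numerically — if the signs are consistent there, the general condensation argument is correct. Alternatively, if the condensation bookkeeping becomes unwieldy, a fallback is to pass through the Pfaffian side: use Proposition~\ref{prop:FGtau} and Corollary~\ref{coro:FG} to rewrite $F_k^{(l,l)}$, $F_k^{(l+1,l)}$, $G_k^{(l,l+1)}$ in terms of $\tau_k^{(l)}$, express $E_k$ likewise (it is the $\rho_k^{(i,j)}$ of \cite{chang2018degasperis}, and has a known $\tau$-bilinear form via \eqref{deriv_F} together with $\dot\tau_k^{(l)}$), and then each of the three identities becomes a quadratic identity among the $\tau$'s that is a consequence of a single Pl\"ucker relation.
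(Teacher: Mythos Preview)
Your proposal is correct and takes essentially the same approach as the paper: all three identities are obtained by a single application of the Desnanot--Jacobi identity to a suitably bordered $(k{+}1)\times(k{+}1)$ determinant, with the first one coming from $\mathcal{D}=E_{k+1}^{(l,l)}$ exactly as you describe (the interior bimoment block is $k\times k$, not $k\times(k{-}1)$, but that is a slip). For the second and third identities the paper makes a specific choice you leave open: it augments the array by an additional top row $(0\;\vert\;1)$, taking $\mathcal{D}_2=\det\begin{pmatrix}0&1\\ I_{l+p,l+q}&\beta_{l+p}\\ \beta_{l+q}&0\end{pmatrix}$ and $\mathcal{D}_3=\det\begin{pmatrix}0&1\\ I_{l+p,l+q}&\beta_{l+p}\end{pmatrix}$, and then condenses on rows $1,k{+}1$ and columns $1,k$ --- this is precisely the ``one column swapped for $\beta$'' variant you anticipate, and your bookkeeping plan would reproduce it.
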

\begin{proof}
These relations can be verified by employing the Desnanot--Jacobi identity \cite[Section 2.3, Proposition 10]{krattenthaler}, 
i.e., 
\begin{align*}
\mathcal{D} \mathcal{D}\left(\begin{array}{cc}
i_1 & i_2 \\
j_1 & j_2 \end{array}\right)=\mathcal{D}\left(\begin{array}{c}
i_1  \\
j_1 \end{array}\right)\mathcal{D}\left(\begin{array}{c}
i_2  \\
j_2 \end{array}\right)-\mathcal{D}\left(\begin{array}{c}
i_1  \\
j_2 \end{array}\right)\mathcal{D}\left(\begin{array}{c}
i_2  \\
j_1 \end{array}\right),\
\end{align*}
where $\mathcal{D}\left(\begin{array}{cccc}
i_1&i_2 &\cdots& i_k\\
j_1&j_2 &\cdots& j_k
\end{array}\right)$, for $ i_1<i_2<\cdots<i_k,\ j_1<j_2<\cdots<j_k$, denotes the
determinant of the matrix obtained from $\mathcal{D}$ by removing the rows with indices
$i_1,i_2,\dots, i_k$ and the columns with indices $j_1,j_2,\dots, j_k$.

By considering 
\begin{align*}
&\mathcal{D}_1=E_{k+1}^{(l,l)}, &&i_1=1,\,  j_1=k,\, i_2=j_2=k+1,\\
&\mathcal{D}_2=
\det\left(\begin{array}{cc}
0&1\\
I_{l+p,l+q}&\beta_{l+p}\\
\beta_{l+q}&0
\end{array}
\right)_{\substack{p=0,\ldots,k-2\\ q=0,\ldots,k-1}}, && i_1=j_1=1,\,  i_2=k+1,\, j_2=k,\\
&\mathcal{D}_3=
\det\left(\begin{array}{cc}
0&1\\
I_{l+p,l+q}&\beta_{l+p}
\end{array}
\right)_{p,q=0}^{k-1}, && i_1=j_1=1,\,  i_2=k+1,\, j_2=k,
\end{align*}
the desired bilinear relations are nothing but consequences of applying  the Desnanot--Jacobi identity to $\mathcal{D}_1,\mathcal{D}_2,\mathcal{D}_3$, respectively.
\end{proof}
By making use of the relations for $F_{k}^{(l,l+1)},G_{k}^{(l,l+1)}$ in \eqref{rel:FGtau} as well as the evolution relations \eqref{deriv_F}, it is not hard to obtain the following crucial corollary from the above lemma, which contains three bilinear relations involving both determinants and Pfaffians.  

%\begin{coro}\label{coro_new_bi_E}
%For any $l\in \mathbb{Z}$, $k\in  \mathbb{N}_+$, there hold 
%\begin{align*}
%&2^{k-1}E_{k}^{(l,l+1)}F_{k}^{(l,l)}-E_{k+1}^{(l,l)}\tau_{k-1}^{(l)}=G_{k}^{(l,l)}\tau_{k}^{(l)},\\
%&E_{k+1}^{(l,l)}\tau_{k}^{(l)}-2^kE_{k+1}^{(l,l+1)}F_{k}^{(l,l)}=2G_{k+1}^{(l,l)}\tau_{k-1}^{(l)},\\
%&G_{k+1}^{(l,l)}\tau_{k}^{(l)}-F_{k}^{(l,l)}\tau_{k+1}^{(l)}=2F_{k+1}^{(l,l)}\tau_{k-1}^{(l)}.
%\end{align*}
%\end{coro}

%By recalling the evolution relation \eqref{deriv_F}, we further arrive at the following crucial corollary. It is noted that the last bilinear relation in Corollary \ref{coro_new_bi_E} is also included below for convenience.

%The results are summarized in the following corollary without detailed proof. 

\begin{theorem}\label{coro_new_bi}
For any $l\in \mathbb{Z}$, $k\in  \mathbb{N}_+$, there hold 
\begin{subequations}\label{eq:new_bi}
\begin{align}
\dot F_{k}^{(l,l)}\tau_{k-1}^{(l)}-2F_{k}^{(l,l)}\dot\tau_{k-1}^{(l)}=G_{k}^{(l,l)}\tau_{k}^{(l)},\\
2F_{k}^{(l,l)}\dot\tau_{k}^{(l)}-\dot F_{k}^{(l,l)}\tau_{k}^{(l)}=2G_{k+1}^{(l,l)}\tau_{k-1}^{(l)},\\
F_{k}^{(l,l)}\tau_{k+1}^{(l)}+2F_{k+1}^{(l,l)}\tau_{k-1}^{(l)}=G_{k+1}^{(l,l)}\tau_{k}^{(l)}.
\end{align}
\end{subequations}
\end{theorem}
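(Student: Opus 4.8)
The plan is to obtain the three identities in \eqref{eq:new_bi} by rewriting the three purely determinantal bilinear relations of Lemma \ref{lem_new_bi} in terms of the Pfaffians $\tau_k^{(l)}$, using the dictionary provided by Proposition \ref{prop:FGtau} together with the evolution formula \eqref{deriv_F}. Concretely, from \eqref{rel:FGtau} I will substitute
$$
F_k^{(l,l+1)}=\frac{(\tau_k^{(l)})^2}{2^k},\qquad G_k^{(l,l+1)}=\frac{\tau_k^{(l)}\tau_{k-1}^{(l)}}{2^{k-1}},
$$
and likewise for shifted indices, while the mixed entries $F_k^{(l,l)}$, $G_k^{(l,l)}$, $E_k^{(l,l)}$, $E_k^{(l,l+1)}$ are kept as they are (the first two having Pfaffian expressions from \eqref{rel:FGtau} if needed, but they can be carried symbolically). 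The key observation that converts the $E$-terms into time derivatives is \eqref{deriv_F}: $\dot F_k^{(l,l)}=-E_{k+1}^{(l,l)}$, and, by the same lemma applied with the shifted superscript, $\dot F_{k-1}^{(l,l+1)}=-E_k^{(l,l+1)}$, hence $\dot\bigl((\tau_{k-1}^{(l)})^2/2^{k-1}\bigr)=-E_k^{(l,l+1)}$, i.e. $E_k^{(l,l+1)}=-\tfrac{1}{2^{k-2}}\tau_{k-1}^{(l)}\dot\tau_{k-1}^{(l)}$. This is exactly the substitution that turns $E$'s into the $\dot\tau$'s appearing on the left-hand sides of \eqref{eq:new_bi}.

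First I would handle the third relation of Lemma \ref{lem_new_bi},
$$
F_{k+1}^{(l,l)}G_{k}^{(l,l+1)}=G_{k+1}^{(l,l)}F_{k}^{(l,l+1)}-G_{k+1}^{(l,l+1)}F_{k}^{(l,l)},
$$
since it involves no $E$-terms: substituting $G_k^{(l,l+1)}=\tau_k^{(l)}\tau_{k-1}^{(l)}/2^{k-1}$, $F_k^{(l,l+1)}=(\tau_k^{(l)})^2/2^k$, and $G_{k+1}^{(l,l+1)}=\tau_{k+1}^{(l)}\tau_k^{(l)}/2^k$, and clearing the common power of $2$, I expect the relation to collapse directly to the third identity $F_k^{(l,l)}\tau_{k+1}^{(l)}+2F_{k+1}^{(l,l)}\tau_{k-1}^{(l)}=G_{k+1}^{(l,l)}\tau_k^{(l)}$ after cancelling an overall factor of $\tau_k^{(l)}$ (one must note $\tau_k^{(l)}>0$ in the relevant regime, or argue polynomially in the moments so division is harmless). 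Next I would treat the first relation,
$$
E_{k+1}^{(l,l)}F_{k-1}^{(l+1,l)}=E_{k}^{(l,l+1)}F_{k}^{(l,l)}-G_{k}^{(l,l+1)}G_{k}^{(l,l)}:
$$
replace $E_{k+1}^{(l,l)}=-\dot F_k^{(l,l)}$, $F_{k-1}^{(l+1,l)}=(\tau_{k-1}^{(l)})^2/2^{k-1}$, $E_k^{(l,l+1)}=-\tau_{k-1}^{(l)}\dot\tau_{k-1}^{(l)}/2^{k-2}$, and $G_k^{(l,l+1)}=\tau_k^{(l)}\tau_{k-1}^{(l)}/2^{k-1}$; after clearing $2$-powers and cancelling a factor $\tau_{k-1}^{(l)}$ this should give precisely $\dot F_k^{(l,l)}\tau_{k-1}^{(l)}-2F_k^{(l,l)}\dot\tau_{k-1}^{(l)}=G_k^{(l,l)}\tau_k^{(l)}$. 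The second relation,
$$
G_{k+1}^{(l,l)}G_{k}^{(l,l+1)}=E_{k+1}^{(l,l)}F_{k}^{(l,l+1)}-E_{k+1}^{(l,l+1)}F_{k}^{(l,l)},
$$
is handled analogously, using $E_{k+1}^{(l,l)}=-\dot F_k^{(l,l)}$ and $E_{k+1}^{(l,l+1)}=-\dot F_k^{(l,l+1)}=-\dot\bigl((\tau_k^{(l)})^2/2^k\bigr)=-\tau_k^{(l)}\dot\tau_k^{(l)}/2^{k-1}$, plus $G_{k+1}^{(l,l)}$ carried symbolically; cancelling a factor $\tau_k^{(l)}$ and the $2$-powers should yield $2F_k^{(l,l)}\dot\tau_k^{(l)}-\dot F_k^{(l,l)}\tau_k^{(l)}=2G_{k+1}^{(l,l)}\tau_{k-1}^{(l)}$.

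The only genuinely delicate points, rather than true obstacles, are bookkeeping ones: making sure the superscript shifts in \eqref{deriv_F} are applied correctly (the identity $\dot F_{k-1}^{(l,l+1)}=-E_k^{(l,l+1)}$ needs the lemma with $(i,j)=(l,l+1)$ and index $k-1$, so one should double-check the off-by-one in the $E$-subscript), tracking the powers of $2$ consistently through each substitution, and confirming that the common factor one divides out ($\tau_k^{(l)}$ or $\tau_{k-1}^{(l)}$) is legitimately nonzero — which is fine since all these identities are polynomial in the bimoments and hence can be verified as formal identities, or else one invokes the positivity in the discrete-measure case. I expect no conceptual difficulty: the content is entirely in Lemma \ref{lem_new_bi} and the determinant--Pfaffian dictionary, and Theorem \ref{coro_new_bi} is a mechanical, if slightly fiddly, corollary.
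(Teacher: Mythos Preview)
Your proposal is correct and follows exactly the approach indicated in the paper: substitute the Pfaffian expressions $F_k^{(l,l+1)}=(\tau_k^{(l)})^2/2^k$ and $G_k^{(l,l+1)}=\tau_k^{(l)}\tau_{k-1}^{(l)}/2^{k-1}$ from \eqref{rel:FGtau}, replace the $E$-terms via \eqref{deriv_F}, and cancel the common factor $\tau_k^{(l)}$ or $\tau_{k-1}^{(l)}$ from each of the three identities in Lemma~\ref{lem_new_bi}. The paper's proof is in fact just the one-line remark preceding the theorem that this is ``not hard to obtain'' from these ingredients, and your write-up spells out precisely that computation.
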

Now we are ready to derive an integrable lattice that connects both of the B-Toda and C-Toda lattices.

\subsection{A new integrable lattice and its relations with B-Toda and C-Toda lattices}
We mainly make use of Theorem \ref{coro_new_bi}, with focus on the case of $l=0$ for the sake of convenience.
Introduce  four nonlinear variables $\alpha_k,\beta_k,\gamma_k,\delta_k$ given by
\begin{align}\label{def_ABCD}
\alpha_k=\frac{\tau_{k}^{(0)}}{\sqrt{F_{k}^{(0,0)}}},\qquad \beta_k=\frac{\tau_{k-1}^{(0)}}{\sqrt{F_{k}^{(0,0)}}},\qquad \gamma_k=\frac{G_{k+1}^{(0,0)}}{F_{k}^{(0,0)}}, \qquad \delta_k=\frac{G_{k}^{(0,0)}}{F_{k}^{(0,0)}},
\end{align}
where $\tau_k^{(0)}$ is the Pfaffian in Definition \ref{def:tau} and $F_{k}^{(0,0)}$, $G_{k}^{(0,0)}$ are determinants in Definition \ref{def:FG}.
By employing Theorem \ref{coro_new_bi}, we demonstrate that the following nonlinear relations hold.
\begin{lemma} The nonlinear variables $\alpha_k,\beta_k,\gamma_k,\delta_k$ defined in \eqref{def_ABCD} satisfy
%\begin{subequations}
\begin{align*}
&\dot \alpha_k=\gamma_k\beta_k,&& \dot \beta_k=-\frac{1}{2}\delta_k\alpha_k,\\
&\gamma_k=\frac{1}{\beta_{k+1}^2}(\alpha_{k+1}\beta_{k+1}+2\alpha_k\beta_k), && \delta_k=\frac{1}{\alpha_{k-1}^2}(\alpha_k\beta_k+2\alpha_{k-1}\beta_{k-1}).
\end{align*}
%\end{subequations}
\end{lemma}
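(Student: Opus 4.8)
The plan is to verify the four identities directly by substituting the definitions \eqref{def_ABCD} into the three bilinear relations of Theorem \ref{coro_new_bi} (specialized to $l=0$), together with the derivation rule \eqref{deriv_F} and the $\tau$-$F$ dictionary in \eqref{rel:FGtau}. The essential observation is that all four variables are built from the three basic quantities $\tau_{k}^{(0)}$, $F_{k}^{(0,0)}$, $G_{k}^{(0,0)}$ (and $G_{k+1}^{(0,0)}$), so each claimed ODE or algebraic relation should reduce to one of the three bilinear relations after clearing denominators.

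\textbf{Step 1: the two evolution equations.} Differentiating $\alpha_k=\tau_{k}^{(0)}/\sqrt{F_{k}^{(0,0)}}$ gives
\begin{align*}
\dot\alpha_k=\frac{\dot\tau_{k}^{(0)}}{\sqrt{F_{k}^{(0,0)}}}-\frac{1}{2}\frac{\tau_{k}^{(0)}\dot F_{k}^{(0,0)}}{(F_{k}^{(0,0)})^{3/2}}
=\frac{2F_{k}^{(0,0)}\dot\tau_{k}^{(0)}-\dot F_{k}^{(0,0)}\tau_{k}^{(0)}}{2(F_{k}^{(0,0)})^{3/2}}.
\end{align*}
By the second relation in \eqref{eq:new_bi} the numerator equals $2G_{k+1}^{(0,0)}\tau_{k-1}^{(0)}$, so $\dot\alpha_k=G_{k+1}^{(0,0)}\tau_{k-1}^{(0)}/(F_{k}^{(0,0)})^{3/2}=\gamma_k\beta_k$, using $\gamma_k=G_{k+1}^{(0,0)}/F_{k}^{(0,0)}$ and $\beta_k=\tau_{k-1}^{(0)}/\sqrt{F_{k}^{(0,0)}}$. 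Similarly, differentiating $\beta_k=\tau_{k-1}^{(0)}/\sqrt{F_{k}^{(0,0)}}$ produces the combination $2F_{k}^{(0,0)}\dot\tau_{k-1}^{(0)}-\dot F_{k}^{(0,0)}\tau_{k-1}^{(0)}$ over $2(F_{k}^{(0,0)})^{3/2}$, whose numerator is $-G_{k}^{(0,0)}\tau_{k}^{(0)}$ by the first relation in \eqref{eq:new_bi}; this gives $\dot\beta_k=-\tfrac12 G_{k}^{(0,0)}\tau_{k}^{(0)}/(F_{k}^{(0,0)})^{3/2}=-\tfrac12\delta_k\alpha_k$.

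\textbf{Step 2: the two algebraic relations.} For $\gamma_k$, note $\alpha_{k+1}\beta_{k+1}+2\alpha_k\beta_k=\dfrac{\tau_{k+1}^{(0)}\tau_{k}^{(0)}}{F_{k+1}^{(0,0)}}+\dfrac{2\tau_{k}^{(0)}\tau_{k-1}^{(0)}}{F_{k}^{(0,0)}}$; combining over the common denominator $F_{k+1}^{(0,0)}F_{k}^{(0,0)}$ and using the third relation in \eqref{eq:new_bi} (which reads $F_{k}^{(0,0)}\tau_{k+1}^{(0)}+2F_{k+1}^{(0,0)}\tau_{k-1}^{(0)}=G_{k+1}^{(0,0)}\tau_{k}^{(0)}$) collapses the numerator to $\tau_{k}^{(0)}\cdot G_{k+1}^{(0,0)}\tau_{k}^{(0)}=G_{k+1}^{(0,0)}(\tau_{k}^{(0)})^2$. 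Since $\beta_{k+1}^2=(\tau_{k}^{(0)})^2/F_{k+1}^{(0,0)}$, dividing gives $\alpha_{k+1}\beta_{k+1}+2\alpha_k\beta_k=\beta_{k+1}^2\,G_{k+1}^{(0,0)}/F_{k}^{(0,0)}=\beta_{k+1}^2\gamma_k$, i.e. the claimed identity. The $\delta_k$ identity is obtained the same way: $\alpha_k\beta_k+2\alpha_{k-1}\beta_{k-1}=\dfrac{\tau_{k}^{(0)}\tau_{k-1}^{(0)}}{F_{k}^{(0,0)}}+\dfrac{2\tau_{k-1}^{(0)}\tau_{k-2}^{(0)}}{F_{k-1}^{(0,0)}}$, and applying the third relation in \eqref{eq:new_bi} with $k$ replaced by $k-1$ turns the numerator into $(\tau_{k-1}^{(0)})^2 G_{k}^{(0,0)}$, while $\alpha_{k-1}^2=(\tau_{k-1}^{(0)})^2/F_{k-1}^{(0,0)}$, so the quotient is $\alpha_{k-1}^2 G_{k}^{(0,0)}/F_{k}^{(0,0)}=\alpha_{k-1}^2\delta_k$.

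The bookkeeping is entirely routine once Theorem \ref{coro_new_bi} is in hand; the only mild subtlety is keeping the index shifts straight (in particular that the $\delta_k$ relation uses \eqref{eq:new_bi}(c) at level $k-1$, not $k$) and checking that the denominators appearing in \eqref{def_ABCD} are nonzero, which holds on the relevant trajectories since the $F_k^{(0,0)}=2^{-k}Z_k>0$ and $\tau_k^{(0)}=U_k>0$ there by Proposition \ref{prop:FG} and the positivity of the Pfaffians. I do not expect any genuine obstacle: the content of the lemma is precisely a repackaging of the three bilinear relations of Theorem \ref{coro_new_bi} in the new variables.
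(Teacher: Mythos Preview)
Your proposal is correct and follows essentially the same approach as the paper's proof: the paper simply states that $\dot\alpha_k$ comes from the second relation in \eqref{eq:new_bi}, $\dot\beta_k$ from the first, and both algebraic identities from the third, which is exactly the route you take (with the details spelled out, including the index shift $k\to k-1$ for the $\delta_k$ relation). Your additional remark on positivity of $F_k^{(0,0)}$ and $\tau_k^{(0)}$ to justify the denominators is a nice bonus the paper leaves implicit.
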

\begin{proof}
The evolution relation for $\alpha_k$ can be demonstrated by the second equality in \eqref{eq:new_bi}, while that for $\beta_k$ can be confirmed by the first equality in \eqref{eq:new_bi}. The remaining two nonlinear relations without derivative follow from the third equality in \eqref{eq:new_bi}.
\end{proof}
By eliminating $\gamma_k$ and $\delta_k$, we are immediately led to a couple of lattices containing only $\alpha_k$ and $\beta_k$.
\begin{theorem} The nonlinear variables $\alpha_k,\beta_k$ defined in \eqref{def_ABCD} satisfy
\begin{subequations}\label{eq:new_IS}
\begin{align}
\dot \alpha_k&=\frac{\beta_k}{\beta_{k+1}^2}(\alpha_{k+1}\beta_{k+1}+2\alpha_k\beta_k),\\
\dot \beta_k&=-\frac{\alpha_k}{2\alpha_{k-1}^2}(\alpha_k\beta_k+2\alpha_{k-1}\beta_{k-1}).
\end{align}
\end{subequations}
with $k=1,2,\ldots$ and $\alpha_0=1,\beta_0=0.$
\end{theorem}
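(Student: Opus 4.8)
The plan is to derive the system \eqref{eq:new_IS} directly from the lemma immediately preceding it, reading off the boundary data from the conventions in Definitions \ref{def:tau} and \ref{def:FG}. First I would recall that the lemma furnishes the two evolution relations $\dot\alpha_k=\gamma_k\beta_k$ and $\dot\beta_k=-\frac{1}{2}\delta_k\alpha_k$, together with the two purely algebraic identities $\gamma_k=\frac{1}{\beta_{k+1}^2}(\alpha_{k+1}\beta_{k+1}+2\alpha_k\beta_k)$ and $\delta_k=\frac{1}{\alpha_{k-1}^2}(\alpha_k\beta_k+2\alpha_{k-1}\beta_{k-1})$. Substituting the latter two expressions into the former two eliminates $\gamma_k$ and $\delta_k$ and yields exactly the closed lattice \eqref{eq:new_IS} in the variables $\alpha_k,\beta_k$ alone, so that no further computation is required; in this sense the statement is a direct repackaging of the lemma, whose content in turn rests on Theorem \ref{coro_new_bi}.

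Next I would verify the boundary values. By Definition \ref{def:FG} one has $F_0^{(0,0)}=1$, and by Definition \ref{def:tau} one has $\tau_0^{(0)}=1$ and $\tau_k^{(0)}=0$ for $k<0$; therefore $\alpha_0=\tau_0^{(0)}/\sqrt{F_0^{(0,0)}}=1$ and $\beta_0=\tau_{-1}^{(0)}/\sqrt{F_0^{(0,0)}}=0$, as claimed. One should also note that with these conventions the relations of the lemma at $k=1$ are consistent with $\beta_0=0$, $\alpha_0=1$, so that the semi-infinite lattice closes up correctly at the left end.

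Since all the analytic content has already been packaged into Theorem \ref{coro_new_bi} and the lemma derived from it, there is essentially no obstacle. The only point that deserves a moment's care is the bookkeeping of index ranges, i.e.\ ensuring that the substitution of $\gamma_k,\delta_k$ is legitimate wherever the denominators $\alpha_{k-1}^2$ and $\beta_{k+1}^2$ do not vanish; this holds throughout the regime of interest by the positivity statements for $\tau_k^{(l)}$ and $F_k^{(i,j)}$, namely for all $k\geq 1$ when the measure $\mu$ has infinitely many points of increase, and for $1\le k\le K$ in the $K$-point discrete case relevant to the DP and Novikov peakon applications.
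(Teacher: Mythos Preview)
Your proposal is correct and matches the paper's approach exactly: the paper also obtains the theorem by eliminating $\gamma_k$ and $\delta_k$ from the preceding lemma, and your additional verification of the boundary values $\alpha_0=1$, $\beta_0=0$ and of the nonvanishing of the denominators is a welcome elaboration of details the paper leaves implicit.
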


The argument above implies  that we derive a lattice equation \eqref{eq:new_IS} that admits solutions in terms of \eqref{def_ABCD}. In fact, this lattice equation is also integrable since we can relate it with both of the B-Toda and C-Toda lattices.

Recall that the B-Toda lattice (see e.g. \cite{chang2018partial,chang2018application,chang2022hermite}) reads
\begin{align}\label{eq:btoda}
&\dot {\tilde v}_{k}=\tilde v_{k}(\tilde d_{k}-\tilde d_{k-1}),  \qquad \dot {\tilde d}_k=\tilde v_{k+1}(\tilde d_{k+1}+\tilde d_k)-\tilde v_{k}(\tilde d_{k}+\tilde d_{k-1})
\end{align}
with $k=0,1,2,\ldots$  and $\tilde v_0=0.$
It can be obtained from the compatibility of the four-term recurrence relation 
$$
z(P_k^B-\tilde v_kP_{k-1}^B)=P_{k+1}^B+(\tilde d_{k}-\tilde v_k)P_k^B+\tilde v_k(\tilde d_k-\tilde v_{k+1})P_{k-1}^B+(\tilde v_k)^2\tilde v_{k-1}P_{k-2}^B,
$$
and the evolution relation
$$\dot P_k^B-\tilde v_k\dot P_{k-1}^B=-\tilde v_k(\tilde d_k+\tilde d_{k-1})P_{k-1}^B,$$
for PSOPs  $\{P_k^B(z;t)\}$.
%\begin{subequations}
%\begin{align}\label{evo:psop}
%&z(P_k^B-\tilde v_kP_{k-1}^B)=P_{k+1}^B+(\tilde d_{k}-\tilde v_k)P_k^B+\tilde v_k(\tilde d_k-\tilde v_{k+1})P_{k-1}^B+(\tilde v_k)^2\tilde v_{k-1}P_{k-2}^B,\\
%&\dot P_k^B-\tilde v_k\dot P_{k-1}^B=-\tilde v_k(\tilde d_k+\tilde d_{k-1})P_{k-1}^B,
%\end{align}
%\end{subequations}
Consequently, the B-Toda lattice can be equivalently written as
$$\dot {\tilde L}=[\tilde B,\tilde L],$$
where 
\begin{eqnarray*}
&\tilde L=\tilde L_1^{-1}\tilde L_2,\qquad \tilde B=\tilde L_1^{-1}\tilde B_2,\\
&\tilde L_1=\left(\begin{array}{cccccc}
1&&\\
-\tilde v_1&1&\\
%&\mathcal{A}_2&1\\
%&&\ddots&1\\
%&\ddots&\ddots\\
&-\tilde v_{2}&1\\
&&\ddots&\ddots\\
\end{array}
\right),\quad
\tilde B_2=\left(\begin{array}{cccccc}
0&&&\\
\tilde{\mathcal{A}}_1&0&&\\
%&\ddots&\ddots\\
&\tilde{\mathcal{A}}_{2}&0\\
&&\ddots&\ddots
\end{array}
\right),
\end{eqnarray*}
\begin{eqnarray*}
\tilde L_2=\left(\begin{array}{cccccc}
\tilde{\mathcal{B}}_0&1&&&\\
\tilde{\mathcal{C}}_1&\tilde{\mathcal{B}}_1&1&&\\
\tilde{\mathcal{D}}_2&\tilde{\mathcal{C}}_2&\tilde{\mathcal{B}}_2&1\\
%&\ddots&\ddots&\ddots&\ddots\\
&\tilde{\mathcal{D}}_{3}&\tilde{\mathcal{C}}_{3}&\tilde{\mathcal{B}}_{3}&1\\
%&&&\tilde{\mathcal{D}}_{n-1}&\tilde{\mathcal{C}}_{n-1}&\tilde{\mathcal{B}}_{n-1}
&&\ddots&\ddots&\ddots&\ddots\\
\end{array}
\right)
\end{eqnarray*}
with $$\tilde{\mathcal{A}}_k=-\tilde v_k(\tilde d_k+\tilde d_{k-1}),\quad \tilde{\mathcal{B}}_k=-\tilde v_k+\tilde d_{k},\quad \tilde{\mathcal{C}}_k=\tilde v_k(\tilde d_k-\tilde v_{k+1}),\quad \tilde{\mathcal{D}}_k=(\tilde v_k)^2\tilde v_{k-1}.$$
Moreover, it is noted that the B-Toda lattice admits solutions 
\begin{subequations}\label{b-sol}
\begin{align}
\tilde v_k&=\frac{\tau_{k+1}^{(0)}\tau_{k-1}^{(0)}}{({\tau_{k}^{(0)}})^2},\\
\tilde d_k&=\left(\ln\frac{\tau_{k+1}^{(0)}}{\tau_{k}^{(0)}}\right)_t=
\frac{\tau_{k+1}^{(0)}\tau_{k-1}^{(0)}}{({\tau_{k}^{(0)}})^2}+\frac{\tau_{k+2}^{(0)}\tau_{k}^{(0)}}{({\tau_{k+1}^{(0)}})^2}+\frac{(\tau_{k+1}^{(0)})^2F_{k}^{(0,0)}}{2(\tau_{k}^{(0)})^2F_{k+1}^{(0,0)}}+\frac{2(\tau_{k}^{(0)})^2F_{k+2}^{(0,0)}}{(\tau_{k+1}^{(0)})^2F_{k+1}^{(0,0)}}.
\end{align}
\end{subequations}

On the other hand, recall that the C-Toda lattice (see e.g. \cite{chang2018degasperis})
\begin{align}\label{eq:ctoda}
&\dot v_{k}=v_{k}(d_{k}-d_{k-1}),  \qquad \dot d_k=2(\sqrt{ v_{k+1}d_{k+1}d_{k} }-\sqrt{ v_kd_kd_{k-1} })
\end{align}
with $k=0,1,2,\ldots$  and $v_0=0$, 
can be obtained from the compatibility of the four-term recurrence relation 
\begin{align*}%\label{ctoda_lax1}
xL_1P^C(z;t)=L_2P^C(z;t)
\end{align*}
and the evolution relation
\begin{align*}%\label{ctoda_lax2}
L_1\dot P^C(z;t) =B_2P^C(z;t),
\end{align*}
for the symmetric CBOPs  $\{P_k^C(z;t)\}$, that is, the C-Toda lattice admits the Lax representation:
$$\dot L=[B,L],$$
where 
$$L=L_1^{-1}L_2,\qquad B=L_1^{-1}B_2$$
\begin{eqnarray*}
&L_1=\left(\begin{array}{cccccc}
1&&\\
\mathcal{A}_1&1&\\
%&\mathcal{A}_2&1\\
%&&\ddots&1\\
%&\ddots&\ddots\\
&\mathcal{A}_{2}&1\\
%&&\mathcal{A}_{n-1}&1\\
&&\ddots&\ddots
\end{array}
\right),\quad L_2=\left(\begin{array}{cccccc}
\mathcal{B}_0&1&&&\\
\mathcal{C}_1&\mathcal{B}_1&1&&\\
\mathcal{D}_2&\mathcal{C}_2&\mathcal{B}_2&1\\
%&\ddots&\ddots&\ddots&\ddots\\
&\mathcal{D}_{3}&\mathcal{C}_{3}&\mathcal{B}_{3}&1\\
%&&\mathcal{D}_{n-2}&\mathcal{C}_{n-2}&\mathcal{B}_{n-2}&1\\
%&&&\mathcal{D}_{n-1}&\mathcal{C}_{n-1}&\mathcal{B}_{n-1}\\
&&\ddots&\ddots&\ddots&\ddots
\end{array}
\right),
\end{eqnarray*}
\begin{eqnarray*}
B_2=\left(\begin{array}{cccccc}
0&&&\\
2(\mathcal{B}_0-\mathcal{A}_0)\mathcal{A}_1&0&&\\
%&\ddots&\ddots\\
&2(\mathcal{B}_{1}-\mathcal{A}_{1})\mathcal{A}_{2}&0\\
%&&2(\mathcal{B}_{n-2}-\mathcal{A}_{n-2})\mathcal{A}_{n-1}&0\\
&&\ddots&\ddots
\end{array}
\right),
\end{eqnarray*}
with $$\mathcal{A}_k=-\sqrt{\frac{d_kv_k}{d_{k-1}}},\quad \mathcal{B}_k=\frac{1}{2}d_k-\sqrt{\frac{d_kv_k}{d_{k-1}}},\quad \mathcal{C}_k=-v_k+\frac{1}{2}\sqrt{d_kv_kd_{k-1}},\quad \mathcal{D}_k=v_{k-1}\sqrt{\frac{d_kv_k}{d_{k-1}}}.$$
In addition, it is noted that the C-Toda lattice enjoys the solution
\begin{equation}\label{c-sol}
 v_k=\frac{F_{k+1}^{(0,0)}F_{k-1}^{(0,0)}}{({F_{k}^{(0,0)}})^2},\qquad  d_k=\left(\ln\frac{F_{k+1}^{(0,0)}}{F_{k}^{(0,0)}}\right)_t=\frac{(G_{k+1}^{(0,0)})^2}{F_{k+1}^{(0,0)}F_{k}^{(0,0)}}.
\end{equation}

Motivated by the expressions of solutions \eqref{def_ABCD} , \eqref{b-sol} and \eqref{c-sol}, we now relate the  lattice equation \eqref{eq:new_IS} with both of the B-Toda and C-Toda lattices in the following ways. 
\begin{theorem}\label{th:BT}
The lattice equation \eqref{eq:new_IS} can be transformed into the B-Toda lattice \eqref{eq:btoda} according to changes of variables
\begin{equation*}
\tilde v_k=\frac{\alpha_{k+1}\beta_k}{\alpha_{k}\beta_{k+1}},\qquad \tilde d_k=\frac{\alpha_{k+1}\beta_k}{\alpha_{k}\beta_{k+1}}+\frac{\alpha_{k+2}\beta_{k+1}}{\alpha_{k+1}\beta_{k+2}}+\frac{(\alpha_{k+1})^2}{2(\alpha_{k})^2}+\frac{2(\beta_{k+1})^2}{(\beta_{k+2})^2},
\end{equation*}
and  into the C-Toda lattice \eqref{eq:ctoda} according to
\begin{equation*}
 v_k=\frac{\alpha_{k}^2\beta_{k}^2}{\alpha_{k-1}^2\beta_{k+1}^2},\qquad  d_k= \left(\frac{\alpha_{k+1}}{\alpha_{k}}+2\frac{\beta_k}{\beta_{k+1}}\right)^2.
\end{equation*}
\end{theorem}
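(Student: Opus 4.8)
## Proof proposal

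The plan is to verify both changes of variables by direct substitution into the two target lattices, using the lattice equation \eqref{eq:new_IS} for $\alpha_k,\beta_k$ together with the algebraic constraints from the preceding Lemma (the expressions of $\gamma_k,\delta_k$ in terms of $\alpha,\beta$). Since the expressions of solutions \eqref{def_ABCD}, \eqref{b-sol} and \eqref{c-sol} already tell us the right candidates, the work is purely to confirm that the proposed formulae are consistent. Throughout, I would keep the intermediate quantities $\gamma_k=\frac{1}{\beta_{k+1}^2}(\alpha_{k+1}\beta_{k+1}+2\alpha_k\beta_k)$ and $\delta_k=\frac{1}{\alpha_{k-1}^2}(\alpha_k\beta_k+2\alpha_{k-1}\beta_{k-1})$ as abbreviations, since $\tilde d_k$ and $d_k$ are naturally written in terms of them: note $\tilde d_k=\gamma_k+\gamma_{k+1}\frac{?}{?}$—more precisely one checks $\tilde v_k=\frac{\alpha_{k+1}\beta_k}{\alpha_k\beta_{k+1}}$ and that $d_k=\delta_{k+1}^2\alpha_k^2/(\text{something})$; rewriting the target formulae in terms of $\gamma_k,\delta_k$ is the first organizing step.

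For the B-Toda part, I would first compute $\dot{\tilde v}_k$ using $\tilde v_k=\alpha_{k+1}\beta_k/(\alpha_k\beta_{k+1})$ and the four evolution identities $\dot\alpha_j=\gamma_j\beta_j$, $\dot\beta_j=-\tfrac12\delta_j\alpha_j$ (applied at $j=k,k+1$). Logarithmic differentiation gives $\dot{\tilde v}_k/\tilde v_k=\frac{\dot\alpha_{k+1}}{\alpha_{k+1}}+\frac{\dot\beta_k}{\beta_k}-\frac{\dot\alpha_k}{\alpha_k}-\frac{\dot\beta_{k+1}}{\beta_{k+1}}=\gamma_{k+1}\frac{\beta_{k+1}}{\alpha_{k+1}}-\tfrac12\delta_k\frac{\alpha_k}{\beta_k}-\gamma_k\frac{\beta_k}{\alpha_k}+\tfrac12\delta_{k+1}\frac{\alpha_{k+1}}{\beta_{k+1}}$, and I would show this equals $\tilde d_k-\tilde d_{k-1}$ by substituting the explicit formula for $\tilde d_k$ and simplifying, again using the definitions of $\gamma_k,\delta_k$ to convert $\gamma$'s and $\delta$'s back into $\alpha,\beta$. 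The second B-Toda equation $\dot{\tilde d}_k=\tilde v_{k+1}(\tilde d_{k+1}+\tilde d_k)-\tilde v_k(\tilde d_k+\tilde d_{k-1})$ is the analogous but longer computation; here it is cleanest to differentiate $\tilde d_k$ term by term, each term being a ratio of $\alpha$'s and $\beta$'s, and then collect. Alternatively, since \eqref{b-sol} exhibits the B-Toda solution in terms of $\tau_k^{(0)}$ and $F_k^{(0,0)}$, and \eqref{def_ABCD} expresses $\alpha_k,\beta_k$ in exactly those quantities, one can observe that under $\alpha_k=\tau_k^{(0)}/\sqrt{F_k^{(0,0)}}$, $\beta_k=\tau_{k-1}^{(0)}/\sqrt{F_k^{(0,0)}}$ the proposed $\tilde v_k,\tilde d_k$ reduce precisely to \eqref{b-sol}, so the B-Toda equations hold automatically; this bypasses the differentiation entirely and is the route I would actually take.

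The C-Toda part is handled the same way: under the substitution \eqref{def_ABCD}, the proposed $v_k=\alpha_k^2\beta_k^2/(\alpha_{k-1}^2\beta_{k+1}^2)$ becomes $F_{k+1}^{(0,0)}F_{k-1}^{(0,0)}/(F_k^{(0,0)})^2$ — using $F_k^{(l+1,l)}=(\tau_k^{(l)})^2/2^k$ and $G_k^{(l,l)}$'s Pfaffian expression from \eqref{rel:FGtau} to match — and $d_k=(\alpha_{k+1}/\alpha_k+2\beta_k/\beta_{k+1})^2=\delta_{k+1}^2\,(\cdots)$ should collapse to $(G_{k+1}^{(0,0)})^2/(F_{k+1}^{(0,0)}F_k^{(0,0)})$ after invoking the third relation of Theorem \ref{coro_new_bi}, which is exactly the identity $F_k^{(l,l)}\tau_{k+1}^{(l)}+2F_{k+1}^{(l,l)}\tau_{k-1}^{(l)}=G_{k+1}^{(l,l)}\tau_k^{(l)}$ rewritten in the $\alpha,\beta,\gamma,\delta$ variables. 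Since \eqref{c-sol} is already known to solve the C-Toda lattice \eqref{eq:ctoda}, matching the formulae finishes this case too. The main obstacle is bookkeeping: one must check that the square roots in $\mathcal{A}_k,\mathcal{B}_k,\ldots$ of the C-Toda Lax matrices (equivalently, in $\dot d_k=2(\sqrt{v_{k+1}d_{k+1}d_k}-\sqrt{v_kd_kd_{k-1}})$) take the intended branch — i.e. that $\sqrt{v_kd_kd_{k-1}}$ equals the manifestly positive rational function of $\alpha,\beta$ one expects, which follows from the positivity statements in Proposition \ref{prop:FG} and the $\tau_k^{(l)}$ positivity proposition, so that no sign ambiguity arises. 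Apart from that, everything reduces to the already-established bilinear identities and the derivation formula \eqref{deriv_F}.
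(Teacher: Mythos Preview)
Your first approach---direct substitution of the change of variables into the target lattices, using only the evolution relations $\dot\alpha_k=\gamma_k\beta_k$, $\dot\beta_k=-\tfrac12\delta_k\alpha_k$ together with the algebraic expressions for $\gamma_k,\delta_k$---is correct and is precisely what the paper does (its own proof reads ``straightforward but tedious calculations. We omit the details.'').

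However, the \emph{alternative} route you say you would ``actually take'' has a genuine logical gap. Observing that the specific quantities \eqref{def_ABCD} reduce, under the proposed formulae, to the known solutions \eqref{b-sol} (resp.\ \eqref{c-sol}) only shows that \emph{one particular family of solutions} of \eqref{eq:new_IS} is mapped to \emph{one particular family of solutions} of the B-Toda (resp.\ C-Toda) lattice. The theorem, by contrast, asserts a transformation at the level of the ODE systems: for \emph{any} $(\alpha_k,\beta_k)$ satisfying \eqref{eq:new_IS}, the resulting $(\tilde v_k,\tilde d_k)$ (resp.\ $(v_k,d_k)$) must satisfy \eqref{eq:btoda} (resp.\ \eqref{eq:ctoda}). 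To draw that conclusion from your solution-matching argument you would further need that every solution of \eqref{eq:new_IS} arises from some measure via \eqref{def_ABCD}, a statement the paper neither claims nor proves. So the shortcut does not bypass the differentiation; you really do have to carry out the direct computation you outlined first (and your sketch of that computation, via logarithmic differentiation of $\tilde v_k$ and term-by-term differentiation of $\tilde d_k$, is the right way to organize it).
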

\begin{proof}
The assertion follows by straightforward but tedious calculations. We omit the details.
\end{proof}
\begin{remark}
Theorem \ref{th:BT} provides a B\"acklund transformation between the lattice equation \eqref{eq:new_IS} and the B-Toda lattice \eqref{eq:btoda}, as well as  that between the lattice equation \eqref{eq:new_IS} and  the C-Toda lattice \eqref{eq:ctoda}. It seems that neither of the transformations is invertible, in other words, it seems not possible to write down the explicit expressions in terms of $\alpha_k=f(v_k,d_k),\beta_k=g(v_k,d_k)$  or $\alpha_k=\tilde f(\tilde v_k,\tilde d_k),\beta_k=\tilde g(\tilde v_k,\tilde d_k)$. Therefore,  the B-Toda and C-Toda lattices are not directly connected with each other in this manner, which coincides with our long standpoint.
\end{remark}
\begin{remark}
We mainly present the corresponding results for semi-infinite lattices. The finite lattices can be obtained through truncation by $v_N=0$ or $\tilde v_N=0$, or $\frac{1}{\beta_{N+1}}=0$, which correspond to discrete measures with $N$ masses.
\end{remark}

\subsection{On bilinear forms} 
The bilinear form is of importance in the theory of integrable systems (see e.g. \cite{hirota2004direct,jimbo1983solitons}). The argument in the previous subsection implies that, after changes of variables \eqref{def_ABCD}, the nonlinear lattice \eqref{eq:new_IS} indeed admits the bilinear form
\begin{subequations}\label{eq:new_bi_0}
\begin{align}
&\dot F_{k}\tau_{k-1}-2F_{k}\dot\tau_{k-1}=G_{k}\tau_{k},\label{eq:new_bi1}\\
&2F_{k-1}\dot\tau_{k-1}-\dot F_{k-1}\tau_{k-1}=2G_{k}\tau_{k-2},\label{eq:new_bi2}\\
&F_{k-1}\tau_{k}+2F_{k}\tau_{k-2}=G_{k}\tau_{k-1},\label{eq:new_bi3}
\end{align}
\end{subequations}
where the dependence of the upper index is suppressed for simplicity. In this subsection, we argue that the bilinear form \eqref{eq:new_bi_0} is more fundamental than those for B-Toda and C-Toda lattices. In fact, the conclusion immediately follows from the following result.
\begin{theorem}\label{th:bi_coro}
If the variables $F_{k},G_{k},\tau_{k}$ satisfy \eqref{eq:new_bi_0} with the usual convention, then they also satisfy the following relations:
\begin{subequations}
\begin{align}
&\dot F_{k} F_{k-1}-F_{k} \dot F_{k-1}=G_{k}^2,\label{eq:new_bi4} \\
&\dot F_{k-1} \tau_{k}+2\dot F_{k}\tau_{k-2}=2\dot \tau_{k-1}G_k,\label{eq:new_bi5} \\
&2\dot F_{k}\dot \tau_{k}-\ddot F_k \tau_{k}=\dot \tau_{k-1}G_{k+1},\label{eq:new_bi6}\\
&\ddot F_{k} F_{k}-(\dot F_{k})^2=2G_{k+1}G_{k},\label{eq:new_bi7}\\
&\ddot \tau_k \tau_k-(\dot\tau_k)^2=\tau_{k-1}\dot\tau_{k+1}-\dot\tau_{k-1}\tau_{k+1}.\label{eq:new_bi8}
\end{align}
\end{subequations}
\end{theorem}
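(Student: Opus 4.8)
The plan is to obtain all five identities as purely formal consequences of the three bilinear relations \eqref{eq:new_bi1}--\eqref{eq:new_bi3}, by taking suitable $F$- or $\tau$-weighted linear combinations, differentiating in $t$, and collapsing the result with the derivative-free relation \eqref{eq:new_bi3}. After clearing denominators one always divides through by whichever of $\tau_{k-1},\tau_{k},F_{k}$ appears as a common factor; this is legitimate since, in the realization relevant to us, these quantities are strictly positive (the positivity statements of Section \ref{sec:det_pf}), and in any case the cancelled identity is a genuine equality of the (polynomial) data.

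The two first-order consequences come out directly. For \eqref{eq:new_bi4} I would form $F_{k-1}\cdot\eqref{eq:new_bi1}+F_{k}\cdot\eqref{eq:new_bi2}$: the $\dot\tau_{k-1}$-terms cancel, the left side becomes $(\dot F_{k}F_{k-1}-F_{k}\dot F_{k-1})\tau_{k-1}$, and the right side becomes $G_{k}\bigl(F_{k-1}\tau_{k}+2F_{k}\tau_{k-2}\bigr)=G_{k}^{2}\tau_{k-1}$ by \eqref{eq:new_bi3}; cancelling $\tau_{k-1}$ gives \eqref{eq:new_bi4}. For \eqref{eq:new_bi5} I would form $2\tau_{k-2}\cdot\eqref{eq:new_bi1}-\tau_{k}\cdot\eqref{eq:new_bi2}$: the right sides cancel outright, the left side regroups as $\tau_{k-1}\bigl(2\dot F_{k}\tau_{k-2}+\dot F_{k-1}\tau_{k}\bigr)-2\dot\tau_{k-1}\bigl(2F_{k}\tau_{k-2}+F_{k-1}\tau_{k}\bigr)$, and applying \eqref{eq:new_bi3} to the last bracket and cancelling $\tau_{k-1}$ yields \eqref{eq:new_bi5}.

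For the second-order block \eqref{eq:new_bi6}--\eqref{eq:new_bi8} the strategy is to \emph{bootstrap}: differentiate the relations \eqref{eq:new_bi3}, \eqref{eq:new_bi4}, \eqref{eq:new_bi5} (and the index-shifts $k\mapsto k+1$ of the base relations), and then use the undifferentiated relations to eliminate the auxiliary quantities $\dot G_{k}$, $\ddot F_{k}$, $\ddot\tau_{k}$. Concretely, differentiating \eqref{eq:new_bi3} and substituting \eqref{eq:new_bi5} produces $F_{k-1}\dot\tau_{k}+2F_{k}\dot\tau_{k-2}=\dot G_{k}\tau_{k-1}-G_{k}\dot\tau_{k-1}$, which expresses $\dot G_{k}$ through first-order data; differentiating the $k\mapsto k+1$ shift of \eqref{eq:new_bi2} produces a relation among $\ddot F_{k}$, $\ddot\tau_{k}$, $\dot G_{k+1}$; feeding the first into the second and simplifying with \eqref{eq:new_bi1}--\eqref{eq:new_bi3} isolates $2\dot F_{k}\dot\tau_{k}-\ddot F_{k}\tau_{k}$ and gives \eqref{eq:new_bi6}. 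Then \eqref{eq:new_bi7} follows by eliminating the $\tau$'s between \eqref{eq:new_bi6}, the shift of \eqref{eq:new_bi1}, and \eqref{eq:new_bi2} (equivalently, by differentiating the shift of \eqref{eq:new_bi4} and cross-substituting \eqref{eq:new_bi4} and the $\dot G$-relation), and \eqref{eq:new_bi8} follows by eliminating both $F_{k}$ and $G_{k}$ entirely between \eqref{eq:new_bi1}, \eqref{eq:new_bi2}, \eqref{eq:new_bi3} and their first $t$-derivatives, which leaves a relation purely among $\tau_{k-1},\tau_{k},\tau_{k+1}$ and their first two derivatives that is exactly \eqref{eq:new_bi8}.

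The main obstacle is precisely this second-order block: unlike \eqref{eq:new_bi4}--\eqref{eq:new_bi5} there is no one-line combination, and one must carry out the elimination of $\dot G_{k}$ (via the differentiated \eqref{eq:new_bi3}) and of $\ddot\tau_{k}$ (via the differentiated \eqref{eq:new_bi2}) while tracking signs and index shifts carefully. I expect \eqref{eq:new_bi8} to be the most delicate, since it demands clearing $F_{k}$ and $G_{k}$ completely; the $G_{k}$-elimination via \eqref{eq:new_bi3} is harmless, but eliminating $F_{k}$ between the two differentiated relations and checking that the residual terms collapse to the compact right-hand side $\tau_{k-1}\dot\tau_{k+1}-\dot\tau_{k-1}\tau_{k+1}$ is where the computation is heaviest.
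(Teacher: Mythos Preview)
Your plan is essentially the same as the paper's: derive the five identities as formal consequences of \eqref{eq:new_bi1}--\eqref{eq:new_bi3} by taking weighted combinations, differentiating, and collapsing via \eqref{eq:new_bi3}. Your derivations of \eqref{eq:new_bi4} and \eqref{eq:new_bi5} are correct; in fact your combination $2\tau_{k-2}\cdot\eqref{eq:new_bi1}-\tau_k\cdot\eqref{eq:new_bi2}$ for \eqref{eq:new_bi5} is a bit cleaner than the paper's route (which multiplies \eqref{eq:new_bi1} by $G_k$ and then invokes \eqref{eq:new_bi4}).

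There is, however, one point in the second-order block that your sketch does not anticipate and that the paper flags explicitly. In the paper's derivation of \eqref{eq:new_bi6}, after differentiating \eqref{eq:new_bi1} and \eqref{eq:new_bi2}, eliminating the unwanted second derivative, and simplifying with \eqref{eq:new_bi3} and \eqref{eq:new_bi5}, what one actually obtains is
\[
2\dot F_{k}\dot\tau_{k}-\ddot F_{k}\tau_{k}-\dot\tau_{k-1}G_{k+1}=cF_{k}\tau_{k}
\]
for some constant $c$ (independent of $k$), not the identity directly; the paper then invokes the ``usual convention'' at $k=0$ (where $F_0=\tau_0=1$, $\tau_{-1}=0$, so both sides vanish) to conclude $c=0$. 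Your bootstrap scheme---eliminate $\dot G_{k+1}$ via the differentiated \eqref{eq:new_bi3}, then feed into the differentiated shift of \eqref{eq:new_bi2}---is the right idea, but you should expect this residual $cF_k\tau_k$ term to survive the algebra and plan to kill it with the boundary convention rather than assume it cancels. Once \eqref{eq:new_bi6} is in hand, your routes to \eqref{eq:new_bi7} and \eqref{eq:new_bi8} match the paper's (multiply \eqref{eq:new_bi6} by $F_k$ and use \eqref{eq:new_bi1}, \eqref{eq:new_bi2} for the former; for the latter start the same way and feed in all preceding identities).
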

\begin{proof}
We proceed the proof by the following order.
\begin{itemize}
\item \eqref{eq:new_bi1}, \eqref{eq:new_bi2}, \eqref{eq:new_bi3} $\Rightarrow$ \eqref{eq:new_bi4}.
By calculating $\eqref{eq:new_bi1}\times F_{k-1}+\eqref{eq:new_bi2}\times F_{k}$ and using \eqref{eq:new_bi3}, one can obtain \eqref{eq:new_bi4}.
\item \eqref{eq:new_bi1}, \eqref{eq:new_bi3},  \eqref{eq:new_bi4}$\Rightarrow$\eqref{eq:new_bi5}. \eqref{eq:new_bi5} can be derived by calculating $\eqref{eq:new_bi1}\times G_{k}$ and employing \eqref{eq:new_bi3} and  \eqref{eq:new_bi4}.
\item  \eqref{eq:new_bi1}, \eqref{eq:new_bi2}, \eqref{eq:new_bi3}, \eqref{eq:new_bi5} $\Rightarrow$ \eqref{eq:new_bi6}. Here we only sketch the derivation. First, take the derivatives of \eqref{eq:new_bi1} and \eqref{eq:new_bi2} with respect to $t$, and cancel the term $\ddot F_{k}$. Then utilize \eqref{eq:new_bi3} and \eqref{eq:new_bi5} to simplify the obtained equality.  After lengthy simplification, one can finally obtain
$$2\dot F_{k}\dot \tau_{k}-\ddot F_k \tau_{k}-\dot \tau_{k-1}G_{k+1}=cF_{k} \tau_{k},$$
where $c$ is some constant. By recalling the usual convention for $k=0$, \eqref{eq:new_bi6} follows.
\item  \eqref{eq:new_bi1}, \eqref{eq:new_bi2},  \eqref{eq:new_bi6} $\Rightarrow$ \eqref{eq:new_bi7}. By calculating $\eqref{eq:new_bi6}\times F_{k}$, and utilizing \eqref{eq:new_bi1} and \eqref{eq:new_bi2}, one could get \eqref{eq:new_bi7}.
\item  \eqref{eq:new_bi1}, \eqref{eq:new_bi2},  \eqref{eq:new_bi3}, \eqref{eq:new_bi4}, \eqref{eq:new_bi5}, \eqref{eq:new_bi6},\eqref{eq:new_bi7} $\Rightarrow$ \eqref{eq:new_bi8}. Starting from calculating $\eqref{eq:new_bi6}\times F_k$ and then making use of \eqref{eq:new_bi1}, \eqref{eq:new_bi2},  \eqref{eq:new_bi3}, \eqref{eq:new_bi4}, \eqref{eq:new_bi5}, \eqref{eq:new_bi7}, one will finally derive \eqref{eq:new_bi8}.
\end{itemize}
\end{proof}

Recall that the bilinear form of the B-Toda lattice \cite{chang2018partial,chang2018application} is given by 
\begin{equation*}
\ddot \tau_k \tau_k-(\dot\tau_k)^2=\tau_{k-1}\dot\tau_{k+1}-\dot\tau_{k-1}\tau_{k+1},
\end{equation*}
while the bilinear form of the C-Toda lattice \cite{chang2018degasperis} reads
\begin{align*}
&\dot F_{k} F_{k-1}-F_{k} \dot F_{k-1}=G_{k}^2,\\
&\ddot F_{k} F_{k}-(\dot F_{k})^2=2G_{k+1}G_{k}.
\end{align*}
It is noted that these bilinear equations all appear in Theorem \ref{th:bi_coro}.
Therefore, we immediately conclude from Theorem \ref{th:bi_coro} that the bilinear form \eqref{eq:new_bi_0} is more fundamental than those for B-Toda and C-Toda lattices.

\section*{Acknowledgements}
The author is deeply grateful to Prof. Xing-Biao Hu and Jacek Szmigielski for their  guidance  and encouragement over many years. This project was partially inspired by informal conversations with them. The author would like to dedicate this paper to Prof.  Jacek Szmigielski on the occasion of his 70th birthday.
This work was supported in part by the National Natural Science Foundation of China (\#12222119, 12288201, 12571270) and the Youth Innovation Promotion Association CAS. 

\small

\def\cydot{\leavevmode\raise.4ex\hbox{.}}
  \def\cydot{\leavevmode\raise.4ex\hbox{.}} \def\cprime{$'$}

%\bibliographystyle{abbrv}
%\bibliographystyle{plain}
%\bibliography{/Users/cxk/Downloads/cxk/chang1608/chang_document/research_1608/papers/me/paper_preparing/1111_bibtex/NV2,/Users/cxk/Downloads/cxk/chang1608/chang_document/research_1608/papers/me/paper_preparing/1111_bibtex/SZ,/Users/cxk/Downloads/cxk/chang1608/chang_document/research_1608/papers/me/paper_preparing/1111_bibtex/peakon702,/Users/cxk/Downloads/cxk/chang1608/chang_document/research_1608/papers/me/paper_preparing/1111_bibtex/IS702,/Users/cxk/Downloads/cxk/chang1608/chang_document/research_1608/papers/me/paper_preparing/1111_bibtex/OPandIS702,/Users/cxk/Downloads/cxk/chang1608/chang_document/research_1608/papers/me/paper_preparing/1111_bibtex/hankel702,/Users/cxk/Downloads/cxk/chang1608/chang_document/research_1608/papers/me/paper_preparing/1111_bibtex/extrapolation702,/Users/cxk/Downloads/cxk/chang1608/chang_document/research_1608/papers/me/paper_preparing/1111_bibtex/IS_new}

\end{document}